\definecolor{darkred}{rgb}{0.5,0,0}
\titleformat*{\paragraph}{\sc}
\newcolumntype{C}[1]{>{\centering\arraybackslash}p{#1}}
\theoremstyle{plain}
\newtheorem{asn}{Assumption}
\crefname{asn}{Assumption}{Assumptions}
\newtheorem{lem}{Lemma}
\crefname{lem}{Lemma}{Lemmas}
\newtheorem{prop}{Proposition}
\crefname{prop}{Proposition}{Propositions}
\DeclareMathOperator*{\tr}{tr}
\DeclareMathOperator*{\ve}{vec}
\DeclareMathOperator*{\var}{Var}
\DeclareMathOperator*{\cov}{Cov}
\begin{document}

\title{Online Appendix for: \texorpdfstring{\\}{ } Instrumental Variable Identification of \texorpdfstring{\\}{ } Dynamic Variance Decompositions}
\author{Mikkel Plagborg-M{\o}ller \and Christian K. Wolf}
\date{\today}
\maketitle

\noindent
This online appendix contains supplemental material for the article ``Instrumental Variable Identification of Dynamic Variance Decompositions''. We provide (i) bounds on other notions of variance decompositions, (ii) extensions of the identification analysis to multiple instruments correlated with a single or multiple shocks, (iii) characterizations of the bias of SVAR-IV (or ``proxy SVAR'') procedures under noninvertibility, (iv) an illustration of our method using a quantitative structural macro model, (v) supplementary results for the monetary shock application, (vi) a second set of empirical results on the importance of oil news shocks, and (vii) asymptotic theory on the nonparametric validity of our sieve VAR inference strategy. The end of this appendix contains proofs and auxiliary lemmas.

\bigskip

\noindent \textbf{Any references to equations, figures, tables, assumptions, propositions, lemmas, or sections that are not preceded by ``B.'' refer to the main article.}

\bigskip

{
\small
\tableofcontents
}

\clearpage

\section{Identification and estimation of other variance decomposition concepts}
\label{sec:identif_vd}

Our main analysis focuses on forecast variance ratios as a measure of shock importance, defined in \cref{sec:params}. This appendix defines two additional concepts -- forecast variance decompositions (FVD) and unconditional frequency-specific variance decompositions (VD) -- and discusses the identification and estimation of both.

\paragraph{Definitions.}
The \emph{forecast variance decomposition} (FVD) for variable $i$ at horizon $\ell$ is defined as
\begin{equation} \label{eqn:fvd}
\mathit{FVD}_{i,\ell} \equiv 1 - \frac{\var(y_{i,t+\ell} \mid \lbrace \varepsilon_\tau \rbrace_{-\infty<\tau\leq t}, \lbrace \varepsilon_{1,\tau} \rbrace_{t<\tau<\infty})}{\var(y_{i,t+\ell} \mid \lbrace \varepsilon_\tau \rbrace_{-\infty<\tau\leq t})} = \frac{\sum_{m=0}^{\ell-1} \Theta_{i,1,m}^2}{\sum_{j=1}^{n_\varepsilon} \sum_{m=0}^{\ell-1} \Theta_{i,j,m}^2}.
\end{equation}
The FVD measures the reduction in forecast variance that arises from learning the path of future realizations of the shock of interest, supposing that we already had the history of past structural shocks $\varepsilon_t$ available when forming our forecast. Because the econometrician generally does not observe the structural shocks directly, the FVD is best thought of as reflecting forecasts of economic agents who observe the underlying shocks. The FVD always lies between 0 and 1, purely reflects fundamental forecasting uncertainty, and equals 1 if the first shock is the only shock driving variable $i$ in equation \eqref{eqn:svma}. The software package Dynare reports FVDs after having estimated a DSGE model.

While the FVR and FVD concepts generally differ, they coincide in the case where all shocks are invertible, since in that case the information set $\lbrace y_\tau \rbrace_{-\infty<\tau\leq t}$ equals the information set $\lbrace \varepsilon_\tau \rbrace_{-\infty<\tau\leq t}$. This explains why the SVAR literature has not made the distinction between the two concepts.\footnote{\citet{Forni2018} point out the bias caused by noninvertibility when estimating the FVD using SVARs.}

Our second additional concept is the frequency-specific \emph{unconditional} variance decomposition (VD) of \citet[Sec. 3.4]{Forni2018}. The VD for variable $i$ over the frequency band $[\omega_1,\omega_2]$ is given by
\begin{equation} \label{eqn:vd}
\mathit{VD}_i(\omega_1,\omega_2) \equiv\frac{\int_{\omega_1}^{\omega_2} |\Theta_{i,1}(e^{-i \omega})|^2 \,d\omega}{\sum_{j=1}^{n_\varepsilon} \int_{\omega_1}^{\omega_2} |\Theta_{i,j}(e^{-i \omega})|^2 \,d\omega},\quad 0\leq \omega_1<\omega_2 \leq \pi,
\end{equation}
where $\Theta_{i,j}(L)$ is the $(i,j)$ element of the lag polynomial $\Theta(L)$. $\mathit{VD}_i(\omega_1,\omega_2)$ is the percentage reduction in the variance of $y_{i,t}$ -- after passing the data through a bandpass filter that retains only cyclical frequencies $[\omega_1,\omega_2]$ -- caused by entirely ``shutting off'' the shock of interest $\varepsilon_{1,t}$. The software package Dynare automatically reports $\mathit{VD}_i(0,\pi)$ after solving a DSGE model.

\paragraph{Identification and estimation: VD.}
Identification of the VD is completely analogous to our analysis of the FVR. By definition,
\[\mathit{VD}_i(\omega_1,\omega_2) = \frac{1}{\alpha^2} \times \frac{\int_{\omega_1}^{\omega_2} |s_{y_i\tilde{z}}(\omega)|^2 \,d\omega}{ \int_{\omega_1}^{\omega_2} s_{y_i}(\omega) \,d\omega},\]
where $s_{y_i\tilde{z}}(\omega)=\alpha \Theta_{i,1}(e^{-i\omega})$ is the $i$-th element of $s_{y\tilde{z}}(\omega)$, cf. equation \eqref{eqn:vd}. Since the last fraction on the right-hand side is point-identified, our identified set for $\alpha^2$ immediately maps into an identified for the VD.

We estimate the bounds as
\begin{equation} \label{eqn:implement_vd}
\left[\frac{1}{\hat{\bar{\alpha}}^2} \times \frac{\int_{\omega_1}^{\omega_2} |\hat{s}_{y_i\tilde{z}}(\omega)|^2 \,d\omega}{ \int_{\omega_1}^{\omega_2} \hat{s}_{y_i}(\omega) \,d\omega}\;,\; \frac{1}{\hat{\underline{\alpha}}^2} \times \frac{\int_{\omega_1}^{\omega_2} |\hat{s}_{y_i\tilde{z}}(\omega)|^2 \,d\omega}{ \int_{\omega_1}^{\omega_2} \hat{s}_{y_i}(\omega) \,d\omega}\right].
\end{equation}
The integrals are computed numerically. The spectral densities required to compute \eqref{eqn:implement_vd} are functions of the estimated reduced-form VAR parameters (see \cref{sec:inference_formulas}). Specifically,
\begin{eqnarray*}
\hat{s}_y(\omega) &=& \frac{1}{2\pi} \hat{B}_y(e^{-i\omega}) \hat{B}_y(e^{-i\omega})^*, \\
\hat{s}_{y\tilde{z}}(\omega) &=& \frac{1}{2\pi} \sum_{\ell = 0}^{\infty} \hat{\Sigma}_{y,\tilde{z},\ell} e^{-i\omega \ell},
\end{eqnarray*}
with
\begin{equation*}
\hat{B}_y(e^{-i\omega}) \equiv \sum_{\ell=0}^\infty \hat{B}_{y,\ell}e^{-i\omega\ell}, \quad \hat{\Sigma}_{y,\tilde{z},\ell} \equiv \widehat{\cov}(y_t,\tilde{z}_{t-\ell}) = \hat{B}_{y,\ell} \hat{B}_{\tilde{z}}'.
\end{equation*}
In practice, we truncate the infinite sums at a large lag.

\paragraph{Identification and estimation: FVD.}
Bounding the FVD requires more work. Intuitively, the reason that identification of the FVD is more challenging than for the FVR is that, even if we knew $\alpha$, the IV $z_t$ provides no information about the other structural shocks $\varepsilon_{j,t}$, $j \neq 1$. This matters because the definition \eqref{eqn:fvd} of the FVD, unlike that of the FVR, conditions on knowing all past shocks, rather than all past macro observables. \cref{thm:identif_fvd} formally characterizes the resulting identified set.

\begin{prop} \label{thm:identif_fvd}
Let there be given a joint spectral density for $w_t = (y_t',\tilde{z}_t)'$ satisfying the assumptions in \cref{thm:identif_alpha}. Given knowledge of $\alpha \in (\alpha_{LB}, \alpha_{UB}]$, the largest possible value of the forecast variance decomposition $\mathit{FVD}_{i,\ell}$ is $1$ (the trivial bound), while the smallest possible value is given by
\begin{equation} \label{eqn:FVD_lb}
\frac{\sum_{m=0}^{\ell-1} \cov(y_{i,t},\tilde{z}_{t-m})^2}{\sum_{m=0}^{\ell-1} \cov(y_{i,t},\tilde{z}_{t-m})^2 + \alpha^2\var(\tilde{y}_{i,t+\ell}^{(\alpha)} \mid \lbrace \tilde{y}_{\tau}^{(\alpha)} \rbrace_{-\infty<\tau\leq t})}.
\end{equation}
Here $\tilde{y}_t^{(\alpha)}=(\tilde{y}_{1,t}^{(\alpha)},\dots,\tilde{y}_{n_y,t}^{(\alpha)})'$ denotes a stationary Gaussian time series with spectral density $s_{\tilde{y}^{(\alpha)}}(\omega) = s_y(\omega) - \frac{2\pi}{\alpha^2}s_{y\tilde{z}}(\omega)s_{y\tilde{z}}(\omega)^*$, $\omega \in [0,2\pi]$. Expression \eqref{eqn:FVD_lb} is monotonically decreasing in $\alpha$, so the overall lower bound on $\mathit{FVD}_{i,\ell}$ is attained by $\alpha = \alpha_{UB}$; in this boundary case we can represent $\tilde{y}_t^{(\alpha_{UB})}= y_t - E(y_t \mid \lbrace \tilde{z}_\tau \rbrace_{-\infty<\tau\leq t})$.
\end{prop}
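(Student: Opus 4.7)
My approach is to reduce identification of $\mathit{FVD}_{i,\ell}$ to a forecast-variance bound on an auxiliary Gaussian process, then establish sharpness by two explicit SVMA constructions and handle monotonicity through an ``independent noise'' representation. Under the assumptions of \cref{thm:identif_alpha}, orthogonality of the structural shocks delivers $\cov(y_{i,t},\tilde z_{t-m})=\alpha\Theta_{i,1,m}$, so the FVD numerator is identified once $\alpha$ is known. Writing
\[ \mathit{FVD}_{i,\ell}=\frac{\sum_{m=0}^{\ell-1}\cov(y_{i,t},\tilde z_{t-m})^2}{\sum_{m=0}^{\ell-1}\cov(y_{i,t},\tilde z_{t-m})^2+\alpha^2 R},\qquad R\equiv\sum_{j\geq 2}\sum_{m=0}^{\ell-1}\Theta_{i,j,m}^2, \]
the only non-identified object is $R\geq 0$. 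The trivial bound $\mathit{FVD}_{i,\ell}\leq 1$ corresponds to $R\geq 0$, while the lower bound requires an upper bound on $R$.

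A short spectral check shows that $s_{\tilde y^{(\alpha)}}$ is exactly the spectrum of the non-shock-$1$ component $y_t-\sum_{m\geq 0}\Theta_{\cdot,1,m}\varepsilon_{1,t-m}$: substituting $s_y=(2\pi)^{-1}\Theta(e^{-i\omega})\Theta(e^{-i\omega})^*$ and $s_{y\tilde z}=(\alpha/2\pi)\Theta(e^{-i\omega})e_1$ yields $s_y-(2\pi/\alpha^2)s_{y\tilde z}s_{y\tilde z}^*=(2\pi)^{-1}\Theta(I-e_1e_1')\Theta^*$. Therefore, in any admissible SVMA, $R$ equals the $\ell$-step linear-projection forecast-error variance of $\tilde y^{(\alpha)}_{i,t+\ell}$ given $\{\varepsilon_{j,\tau}:j\geq 2,\tau\leq t\}$. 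Since this information set is weakly richer than $\{\tilde y^{(\alpha)}_\tau\}_{\tau\leq t}$, and Gaussianity of $\tilde y^{(\alpha)}$ identifies the linear projection with the conditional expectation,
\[ R\leq \var\!\bigl(\tilde y^{(\alpha)}_{i,t+\ell}\bigm|\{\tilde y^{(\alpha)}_\tau\}_{\tau\leq t}\bigr), \]
which substituted back delivers \eqref{eqn:FVD_lb}.

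Sharpness proceeds by constructing two admissible SVMAs. For the lower bound I would let the non-shock-$1$ structural shocks $\{\varepsilon_{j,t}\}_{j\geq 2}$ serve as the (multivariate) Wold innovations of $\tilde y^{(\alpha)}$; this aligns the two conditioning sets above and makes the display tight. For the trivial upper bound $\mathit{FVD}_{i,\ell}=1$, I would take any spectral factor $B(z)$ of $s_{\tilde y^{(\alpha)}}$ and replace its $i$-th row by $z^{\ell}$ times the original row; multiplying a single row by a Blaschke/inner factor preserves $s_{\tilde y^{(\alpha)}}$ and hence all observables, while forcing $\Theta_{i,j,m}=0$ for $j\geq 2,\;m<\ell$, i.e., $R=0$. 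Both candidates must then be completed to full SVMAs for $w_t$ using the constructive admissibility characterization underlying \cref{thm:identif_alpha}.

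Finally, for monotonicity in $\alpha$ and the boundary characterization, the key identity $\alpha_2^2 s_{\tilde y^{(\alpha_2)}}=\alpha_1^2 s_{\tilde y^{(\alpha_1)}}+(\alpha_2^2-\alpha_1^2)s_y$ (for $\alpha_1\leq\alpha_2$) realizes $\alpha_2\tilde y^{(\alpha_2)}$ in distribution as $\alpha_1\tilde y^{(\alpha_1)}$ plus an independent Gaussian process with spectrum $(\alpha_2^2-\alpha_1^2)s_y$; adding an independent stationary process can only weakly increase $\ell$-step forecast-error variance, so $\alpha^2\var(\tilde y^{(\alpha)}_{i,t+\ell}\mid\cdot)$ is nondecreasing in $\alpha$ and \eqref{eqn:FVD_lb} is nonincreasing, attaining its infimum at $\alpha_{UB}$. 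At that boundary, the characterization of $\alpha_{UB}$ from \cref{thm:identif_alpha} forces $\tilde z_t=\alpha_{UB}\varepsilon_{1,t}$ (white-noise instrument with $s_{\tilde z}\equiv\alpha_{UB}^2/(2\pi)$); combined with $y_t\perp\{\tilde z_\tau\}_{\tau>t}$, which holds automatically because the SVMA is causal, the two-sided projection residual defining $\tilde y^{(\alpha_{UB})}$ agrees in distribution with the one-sided residual $y_t-E(y_t\mid\{\tilde z_\tau\}_{\tau\leq t})$. The principal obstacle is the sharpness step: the Blaschke-type surgery on a single row of $B(z)$, and the identification of non-shock-$1$ structural shocks with Wold innovations, must both be verified to preserve admissibility against the full joint spectrum of $w_t$ supplied by \cref{thm:identif_alpha}; the monotonicity step is short but hinges on the non-obvious ``additive noise'' reading of the spectral identity.
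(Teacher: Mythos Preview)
Your overall approach closely tracks the paper's: the decomposition isolating $R=\sum_{j\geq 2}\sum_{m<\ell}\Theta_{i,j,m}^2$, the identification of $s_{\tilde y^{(\alpha)}}$ with the spectrum of the non-shock-$1$ component, the inequality $R\leq\var(\tilde y^{(\alpha)}_{i,t+\ell}\mid\{\tilde y^{(\alpha)}_\tau\}_{\tau\leq t})$, its sharpness via the Wold decomposition of $\tilde y^{(\alpha)}$, and the monotonicity via the additive-independent-noise reading of the spectral identity are all essentially the paper's arguments. (The paper packages the monotonicity step as a separate lemma stating that a pointwise spectral ordering implies a forecast-variance ordering, but the proof of that lemma is exactly your ``add an independent Gaussian process'' construction.)

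There is, however, a genuine error in your construction for the upper bound $\mathit{FVD}_{i,\ell}=1$. Replacing only the $i$-th row of a spectral factor $B(z)$ of $s_{\tilde y^{(\alpha)}}$ by $z^\ell B_{i,\cdot}(z)$ does \emph{not} preserve $B(e^{-i\omega})B(e^{-i\omega})^*$: the off-diagonal entries $(BB^*)_{ik}=\sum_m B_{im}\overline{B_{km}}$ acquire a factor $e^{-i\omega\ell}$ for $k\neq i$. In matrix terms, this is left multiplication by the inner factor $V(z)=\diag(1,\dots,z^\ell,\dots,1)$, which yields $VBB^*V^*\neq BB^*$. What preserves the spectrum is \emph{right} multiplication by an inner factor, and the simplest choice is $z^\ell I$: set $\breve\Theta_{\cdot,j}(L)=L^\ell\Theta_{\cdot,j}(L)$ for every $j\geq 2$, i.e., delay the entire non-shock-$1$ block by $\ell$ periods. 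This zeroes out all rows of $\Theta_{\cdot,j,m}$ for $j\geq 2$ and $m<\ell$, which is more than you need but still gives $R=0$ and hence $\mathit{FVD}_{i,\ell}=1$, while leaving $s_{\tilde y^{(\alpha)}}$ (and therefore the full joint spectrum of $w_t$) unchanged. This is precisely the paper's construction. Your instinct that a Blaschke-type operation is the right tool is correct; it just has to act on the right, not on a single row from the left.
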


The upper bound on the FVD always equals the trivial bound of 1, for any $\ell \geq 1$. This upper bound is achieved by a model in which all shocks, except the first one, only affect $y_t$ after an $\ell$-period delay. The lower bound in contrast is nontrivial and informative. The argument is as follows: Even if $\alpha$ is known, the denominator $\var(y_{i,t+\ell} \mid \lbrace \varepsilon_\tau \rbrace_{-\infty<\tau\leq t})$ of the FVD is not identified due to the lack of information about shocks other than the first. Although we can upper-bound this conditional variance by the denominator of the FVR, this upper bound is not sharp. Instead, to maximize the denominator, as much forecasting noise as possible should be of the pure forecasting variety, and not related to noninvertibility. For all shocks except for $\varepsilon_{1,t}$, this is achievable through a Wold decomposition construction \citep[Thm. $2^{\prime \prime}$, p. 158]{Hannan1970}. Given $\alpha$, we know the contribution of the first shock to $y_t$; the residual after removing this contribution has the distribution of $\tilde{y}_t^{(\alpha)}$, as defined in the proposition. If $\alpha$ is not known, the smallest possible value of the lower bound \eqref{eqn:FVD_lb} is attained at the largest possible value of $\alpha$, namely $\alpha_{UB}$, for which $\varepsilon_{1,t}$ contributes the least to forecasts of $y_t$.

We estimate the bounds in \cref{thm:identif_fvd} as
\begin{equation} \label{eqn:implement_fvd}
\left[\frac{\sum_{m=0}^{\ell-1} \widehat{\cov}(y_{i,t},\tilde{z}_{t-m})^2}{\sum_{m=0}^{\ell-1} \widehat{\cov}(y_{i,t},\tilde{z}_{t-m})^2 + \hat{\bar{\alpha}}^2\widehat{\var}(\tilde{y}_{i,t+\ell}^{(\bar{\alpha})} \mid \lbrace \tilde{y}_{\tau}^{(\bar{\alpha})} \rbrace_{-\infty<\tau\leq t})}\;,\; 1\right].
\end{equation}
To approximate the conditional variance in the denominator, we proceed as in \cref{sec:inference_formulas}. First, we replace the infinite conditioning set with the finite set $\lbrace \tilde{y}_{\tau}^{(\bar{\alpha})} \rbrace_{\tau-M \leq \tau \leq t}$. Second, we compute the conditional variance using the standard projection formula, where the autocovariances of the process $\lbrace \tilde{y}_{\tau}^{(\bar{\alpha})} \rbrace$ are estimated as
\[\widehat{\cov}(\tilde{y}_{t+\ell}^{(\bar{\alpha})},\tilde{y}_{t}^{(\bar{\alpha})}) = \widehat{\cov}(y_{t+\ell},y_t) - \frac{1}{\hat{\bar{\alpha}}^2} \textstyle \sum_{m=0}^\infty \widehat{\cov}(y_t,\tilde{z}_{t-m-\ell})\widehat{\cov}(y_t,\tilde{z}_{t-m})'.\]
In practice, we truncate the infinite sum at a large lag.

\clearpage

\section{Multiple instruments correlated with one shock}
\label{sec:iv_multi_details}

Here we show that the multiple-IV model in \cref{asn:svma,asn:iv,asn:shocks} is testable, but if it is consistent with the data, then identification analysis can be reduced to the single-IV case.

Define the IV residual vector $\tilde{z}_t$ as in equation \eqref{eqn:z_tilde_multi}. The multiple-IV model in \cref{asn:svma,asn:iv} implies the following cross-spectrum between $y_t$ and $\tilde{z}_t$:
\begin{equation} \label{eqn:cross_spectrum_iv_multi}
s_{y\tilde{z}}(\omega) = \frac{\alpha}{2\pi}\Theta(e^{-i\omega})e_1 \lambda',\quad \omega \in [0,2\pi].
\end{equation}
Thus, the cross-spectrum has rank-1 factor structure: It equals a nonconstant column vector times a constant row vector. This testable property turns out to be exactly what characterizes the multiple-IV model.
\begin{prop} \label{thm:iv_multi}
Let a spectrum $s_w(\omega)$ for $w_t = (y_t',\tilde{z}_t')'$ be given, satisfying the assumptions of \cref{thm:identif_alpha}. There exists a model of the form in \cref{asn:svma,asn:iv} which generates the spectrum $s_w(\omega)$ if and only if there exist $n_y$-dimensional real vectors $\zeta_\ell$, $\ell \geq 0$, and an $n_z$-dimensional constant real vector $\eta$ of unit length such that
\begin{equation} \label{eqn:factor}
s_{y\tilde{z}}(\omega) = \zeta(e^{-i\omega})\eta',\quad \omega \in [0,2\pi],
\end{equation}
where $\zeta(L)=\sum_{\ell=0}^\infty \zeta_\ell L^\ell$.
\end{prop}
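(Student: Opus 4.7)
The ``only if'' direction is immediate from \eqref{eqn:cross_spectrum_iv_multi}: taking $\eta = \lambda/\|\lambda\|$ and $\zeta(e^{-i\omega}) = \frac{\alpha\|\lambda\|}{2\pi}\Theta(e^{-i\omega})e_1$ exhibits the factor structure, with $\zeta_\ell = \frac{\alpha\|\lambda\|}{2\pi}\Theta_\ell e_1$ real because $\alpha$, $\Theta_\ell$, and $\lambda$ are real (the degenerate case $\lambda = 0$ is handled by $\zeta \equiv 0$ and any unit vector $\eta$).

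For the ``if'' direction, my plan is to reduce to the scalar-instrument identification in \cref{thm:identif_alpha}. Define the synthetic scalar IV $\tilde{z}^*_t \equiv \eta'\tilde{z}_t$; since $\|\eta\|=1$, its cross-spectrum with $y_t$ equals $s_{y\tilde{z}}(\omega)\eta = \zeta(e^{-i\omega})$, so the joint spectrum of $(y_t',\tilde{z}^*_t)'$ inherits the hypotheses of \cref{thm:identif_alpha}. Applying that theorem delivers a scalar impact coefficient $\alpha^*$ in the admissible interval, an SVMA system $y_t = \Theta(L)\varepsilon_t$ satisfying \cref{asn:svma}, and a scalar noise $\nu^*_t$ independent of $\lbrace \varepsilon_\tau \rbrace$ such that $\tilde{z}^*_t = \alpha^*\varepsilon_{1,t} + \nu^*_t$ and the joint spectrum of $(y_t',\tilde{z}^*_t)'$ is reproduced.

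To lift this scalar construction back to the vector IV, I would set $\alpha := \alpha^*$ and $\lambda := \eta$, so that the candidate model reads $\tilde{z}_t = \alpha\eta\varepsilon_{1,t} + \nu_t$ with $\nu_t$ a Gaussian vector process independent of $\lbrace \varepsilon_\tau \rbrace$. Matching the prescribed marginal spectrum of $\tilde{z}_t$ pins down $s_\nu(\omega) = s_{\tilde{z}}(\omega) - \frac{\alpha^2}{2\pi}\eta\eta'$, and the stipulation that $\eta'\nu_t$ coincide in law with $\nu^*_t$ fixes the relationship to the scalar construction. Provided $s_\nu(\omega)$ is positive semidefinite at every $\omega$, $\nu_t$ can be realized via standard spectral factorization. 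Substituting back into \eqref{eqn:cross_spectrum_iv_multi} then reproduces $s_{y\tilde{z}}(\omega) = \zeta(e^{-i\omega})\eta'$, and the constructed tuple $(\alpha,\lambda,\Theta,\nu)$ regenerates the full joint spectrum $s_w(\omega)$.

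The main obstacle I anticipate is verifying $s_\nu(\omega) \succeq 0$ uniformly in $\omega$. I would attack this by decomposing $\tilde{z}_t$ into its projection along $\eta$ and its orthogonal residual $P^\perp \tilde{z}_t$, where $P^\perp \equiv I_{n_z} - \eta\eta'$. The factor structure \eqref{eqn:factor} forces $s_{y,P^\perp\tilde{z}}(\omega) = 0$, so positive semidefiniteness of $s_w(\omega)$ collapses to a Schur-complement inequality that bounds $\eta's_{\tilde{z}}(\omega)\eta$ from below by $\zeta(e^{-i\omega})^* s_y(\omega)^{-1}\zeta(e^{-i\omega})$ together with the contribution from $\eta's_{\tilde{z}}(\omega)P^\perp$; the identified-set characterization of $\alpha^*$ delivered by \cref{thm:identif_alpha} then guarantees that $(\alpha^*)^2/(2\pi)$ lies below this bound, so that $s_\nu \succeq 0$. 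Once $s_\nu$ is known to be a valid spectral density, the bookkeeping needed to check that $(\varepsilon,\nu)$ is jointly stationary with the required independence structure is routine.
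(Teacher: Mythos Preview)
Your detour through the scalar instrument $\tilde{z}^*_t = \eta'\tilde{z}_t$ creates a gap rather than closing one. Applying \cref{thm:identif_alpha} to $(y_t',\tilde{z}^*_t)'$ yields an $\alpha^*$ in the scalar identified set, whose upper endpoint is $\alpha_{UB}^2 = \var(\tilde{z}^*_t) = \eta'\Sigma_{\tilde{z}}\eta$ (with $\Sigma_{\tilde{z}} = 2\pi s_{\tilde{z}}$). But for $s_\nu = s_{\tilde{z}} - \frac{(\alpha^*)^2}{2\pi}\eta\eta' \geq 0$ you need, by the rank-one \cref{thm:semidef}, that $(\alpha^*)^2 \leq (\eta'\Sigma_{\tilde{z}}^{-1}\eta)^{-1}$; and in general $(\eta'\Sigma_{\tilde{z}}^{-1}\eta)^{-1} < \eta'\Sigma_{\tilde{z}}\eta$, strictly unless $\eta$ happens to be an eigenvector of $\Sigma_{\tilde{z}}$. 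Hence the scalar theorem does \emph{not} ensure that your $\alpha^*$ satisfies the vector constraint, and the sentence ``the identified-set characterization of $\alpha^*$ delivered by \cref{thm:identif_alpha} then guarantees that $(\alpha^*)^2/(2\pi)$ lies below this bound'' is incorrect as stated.

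The $P^\perp$ Schur-complement argument you sketch is the right repair, but once carried out it makes the scalar reduction superfluous. Taking the Schur complement of $s_y(\omega)$ in the full $s_w(\omega)$ gives $s_{\tilde{z}} - \big(\zeta(e^{-i\omega})^* s_y(\omega)^{-1}\zeta(e^{-i\omega})\big)\eta\eta' > 0$, and \cref{thm:semidef} applied to this rank-one update yields $2\pi\,\zeta(e^{-i\omega})^* s_y(\omega)^{-1}\zeta(e^{-i\omega}) < (\eta'\Sigma_{\tilde{z}}^{-1}\eta)^{-1}$ for every $\omega$. Thus the interval $\big(\sup_\omega 2\pi\,\zeta^* s_y^{-1}\zeta,\; (\eta'\Sigma_{\tilde{z}}^{-1}\eta)^{-1}\big)$ is nonempty, and any $\bar{\alpha}^2$ chosen there simultaneously makes $\bar{\Sigma}_v = \Sigma_{\tilde{z}} - \bar{\alpha}^2\eta\eta'$ positive definite and $s_y(\omega) - \frac{2\pi}{\bar{\alpha}^2}\zeta(e^{-i\omega})\zeta(e^{-i\omega})^*$ positive definite for all $\omega$. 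This is exactly the paper's argument; the scalar reduction adds nothing and, as written, introduces the mistaken step above.
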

Assuming henceforth that the factor structure obtains, we now show that identification in the multiple-IV model reduces to the single-IV case. It is convenient first to reparametrize the model slightly, by setting $\Sigma_v = \Sigma_{\tilde{z}} - \alpha^2\lambda\lambda'$ and treating $\Sigma_{\tilde{z}}$ as a basic model parameter instead of $\Sigma_v$. We then impose the requirement that $\Sigma_{\tilde{z}} - \alpha^2\lambda\lambda'$ be positive semidefinite. Clearly, $\Sigma_{\tilde{z}}=\var(\tilde{z}_t)$ is point-identified. Next, note from \eqref{eqn:cross_spectrum_iv_multi} that $\lambda$ is point-identified and equal to the $\eta$ vector in equation \eqref{eqn:factor}. This is because any rank-1 factorization of a matrix is identified up to sign and scale, and we have normalized $\eta$ to have length 1. Let $\Xi$ be any $(n_z-1) \times n_z$ matrix such that $\Xi\Sigma_{\tilde{z}}^{-1/2}\lambda = 0$. Define the $n_z \times n_z$ matrix
\[Q \equiv \left( \begin{array}{c}
\frac{1}{\lambda' \Sigma_{\tilde{z}}^{-1}\lambda}\lambda' \Sigma_{\tilde{z}}^{-1} \\
\Xi \Sigma_{\tilde{z}}^{-1/2}
\end{array} \right).\]
Since $Q$ is point-identified (given a choice of $\Xi$), it is without loss of generality to perform identification analysis based on the linearly transformed IV residuals
\[Q\tilde{z}_t = \left( \begin{array}{c}
\alpha \\
0 \\
\vdots \\
0
\end{array} \right)\varepsilon_{1,t} + \tilde{v}_t,\quad \tilde{v}_t \sim N\left(0, \left( \begin{array}{cc}
\frac{1}{\lambda' \Sigma_{\tilde{z}}^{-1}\lambda} - \alpha^2 & 0 \\
0  & \Xi\Xi'
\end{array} \right) \right).\]
Notice, however, that $\alpha$ only enters into the equation for the first element of $Q\tilde{z}_t$, and the $(n_z-1)$ last elements of $Q\tilde{z}_t$ are independent of the first element (and independent of $y_t$ at all leads and lags). Hence, it is without loss of generality to limit attention to the first element of $Q\tilde{z}_t$ when performing identification analysis for $\Theta_{i,j,\ell}$ and $\alpha$. The first element of $Q\tilde{z}_t$ equals $\breve{z}_t$ as defined in equation \eqref{eqn:iv_lincom} in the main text.\footnote{The above display implies that we must have $\alpha^2 \leq (\lambda' \var(\tilde{z}_t)^{-1}\lambda)^{-1}$, which is precisely what the upper bound for $\alpha^2$ yields when applied to $\breve{z}_t$.}

Additional restrictions on the IVs can ensure point identification. In particular, if $n_z \geq 2$ and the researcher is willing to restrict $\Sigma_v$ to be diagonal, then $\alpha$ is point-identified from any off-diagonal element of $\var(\tilde{z}_t) = \Sigma_v + \alpha^2 \lambda \lambda'$, since $\lambda$ is point-identified.

\clearpage

\section{Instruments correlated with multiple shocks}
\label{sec:iv_multimulti}

In this section, we ask how much can be said about forecast variance ratios if the researcher is only willing to assume that the observed set of external instruments $z_t$ is correlated with at most $n_{\varepsilon_x}$ shocks, collected in the vector $\varepsilon_{x,t}$. Hence, in this section we do not impose the exclusion restriction that only the first shock $\varepsilon_{1,t}$ be correlated with the IV(s).

\paragraph{Extended model and FVR.}
Without loss of generality, suppose the $n_z$ IVs are correlated with the first $n_{\varepsilon_x}$ of the $n_\varepsilon$ shocks. Denote this sub-vector of shocks by $\varepsilon_{x,t}$. For now, $n_{\varepsilon_x}$ need not be known to the econometrician. We define the extended SVMA-IV model as
\begin{align}
y_t &= \Theta(L)\varepsilon_t,\quad \Theta(L) \equiv \sum_{\ell=0}^\infty \Theta_{\ell} L^\ell, \label{eqn:multimulti_svma} \\
z_t &= \sum_{\ell=1}^\infty (\Psi_\ell z_{t-\ell} + \Lambda_\ell y_{t-\ell} ) + \underbrace{\Gamma \varepsilon_{x,t} + \Sigma_v^{1/2} v_t}_{\tilde{z}_t}, \label{eqn:multimulti_iv}
\end{align}
where $\Gamma$ is $n_z \times n_{\varepsilon_x}$. We continue to impose i.i.d. normality of the shocks, cf. \cref{asn:shocks}. Our object of interest is the forecast variance ratio with respect to the $n_z$ particular linear combinations of shocks that enter into the IV equations, $\Gamma \varepsilon_{x,t}$:
\begin{align}
\mathit{FVR}_{i,\ell} &\equiv 1 - \frac{\var(y_{i,t+\ell} \mid \lbrace y_\tau \rbrace_{-\infty<\tau\leq t}, \lbrace \Gamma \varepsilon_{x,\tau} \rbrace_{t<\tau<\infty})}{\var(y_{i,t+\ell} \mid \lbrace y_\tau \rbrace_{-\infty<\tau\leq t})} \nonumber \\
&= \frac{\sum_{m=0}^{\ell - 1} \cov(y_{it}, \tilde{z}_{t-m}) (\Gamma \Gamma')^{-1} \cov(y_{it}, \tilde{z}_{t-m})'}{\var(y_{i,t+\ell} \mid \lbrace y_\tau \rbrace_{-\infty<\tau\leq t})}. \label{eqn:multimulti_fvr}
\end{align}
In the following we provide upper and lower bounds on this object. Given $\Gamma\Gamma'$, the FVR is point-identified, so we need to derive the identified set for $\Gamma\Gamma'$. At the end of this section we discuss how the FVR with respect to $\Gamma\varepsilon_{x,t}$ relates to other objects of interest.

Similar to \cref{sec:iv_multi_details}, the testable restriction of the model \eqref{eqn:multimulti_svma}--\eqref{eqn:multimulti_iv} is that the joint spectrum of $y_t$ and $\tilde{z}_t$ has a rank-$n_{\varepsilon_x}$ factor structure. If this assumption is not rejected, we can reduce the instrument vector to dimension $\min(n_z, n_{\varepsilon_x})$ without affecting the identification of $\mathit{FVR}_{i,\ell}$.\footnote{The argument is very similar to the one-shock case in \cref{sec:iv_multi_details} and is available upon request.} In particular, we may assume that $\Gamma$ has full row rank, which we do from now on, thus justifying the second equality in \eqref{eqn:multimulti_fvr}.

\paragraph{Identified set for $\Gamma$.}
Define $\Sigma_{\tilde{z}} \equiv \var(\tilde{z}_t)$. Proceeding similarly to the proof of \cref{thm:identif_alpha}, we can show that a given $\Gamma$ is consistent with the joint spectral density of the data if and only if $\Gamma \Gamma'$ has full row rank,
\begin{equation}
\Sigma_{\tilde{z}} - \Gamma \Gamma' \geq 0, \label{eqn:multimulti_IS1}
\end{equation}
and
\begin{equation}
\Gamma \Gamma' - 2 \pi s_{\tilde{z}^\dagger}(\omega) \geq 0, \quad \forall \;\omega \in [0,\pi], \label{eqn:multimulti_IS2}
\end{equation}
where $s_{\tilde{z}^\dagger}(\omega) = s_{y\tilde{z}}(\omega)^* s_y(\omega)^{-1} s_{y\tilde{z}}(\omega)$ and we use the notation $A \geq B$ if $A-B$ is Hermitian positive semi-definite (and similarly for $\leq$). Sharp bounds on $\mathit{FVR}_{i,\ell}$ thus follow from minimizing/maximizing \eqref{eqn:multimulti_fvr} over the space of $n_z \times n_z$ symmetric positive definite matrices $\Gamma\Gamma'$ subject to constraints \eqref{eqn:multimulti_IS1}--\eqref{eqn:multimulti_IS2}.

\paragraph{Lower bound on FVR.}
We now establish a sharp lower bound on the numerator in the definition \eqref{eqn:multimulti_fvr} of the FVR (the denominator is point-identified). Observe that
\begin{align*}
&\sum_{m=0}^{\ell - 1} \cov(y_{it}, \tilde{z}_{t-m}) (\Gamma \Gamma')^{-1} \cov(y_{it}, \tilde{z}_{t-m})' \\
&= \sum_{m=0}^{\ell - 1} \cov(y_{it}, \tilde{z}_{t-m}) \Sigma_{\tilde{z}}^{-1} \cov(y_{it}, \tilde{z}_{t-m})' + \sum_{m=0}^{\ell - 1} \cov(y_{it}, \tilde{z}_{t-m}) \lbrace (\Gamma\Gamma')^{-1}-\Sigma_{\tilde{z}}^{-1}\rbrace \cov(y_{it}, \tilde{z}_{t-m})' \\
&\geq \sum_{m=0}^{\ell - 1} \cov(y_{it}, \tilde{z}_{t-m}) \Sigma_{\tilde{z}}^{-1} \cov(y_{it}, \tilde{z}_{t-m})',
\end{align*}
where the inequality uses the constraint \eqref{eqn:multimulti_IS1}. The above lower bound is sharp: It is attained in a model where $\Sigma_v=0_{n_z \times n_z}$ and $\Gamma\Gamma'=\Sigma_{\tilde{z}}$, i.e., when all IVs are perfect.\footnote{Note that $\Gamma\Gamma'=\Sigma_{\tilde{z}}=2\pi s_{\tilde{z}}(\omega)$ satisfies constraint \eqref{eqn:multimulti_IS2} by the Schur complement formula and the positive semidefiniteness of the spectrum of $(y_t',\tilde{z}_t')'$.}

\paragraph{Upper bound on FVR.}
While we have not been able to derive a closed-form expression for the sharp upper bound on the FVR, it is straight-forward to numerically compute it. Let $\mathcal{S}_n$ denote the space of $n\times n$ real symmetric positive definite matrices, and let $\tr(A)$ denote the trace of a matrix $A$. The sharp upper bound on the numerator in the definition \eqref{eqn:multimulti_fvr} of the FVR is given by the value of the program
\begin{align}
&\max_{X \in \mathcal{S}_{n_z}} \tr(X C) + \tr(AC) \label{eqn:multimulti_semidefobj} \\
&\; \phantom{\text{s.t.}}\quad X \leq B(\omega), \quad \omega \in [0,\pi]. \nonumber
\end{align}
Here $X$ is a stand-in for $(\Gamma\Gamma')^{-1} - \Sigma_{\tilde{z}}^{-1}$, $C \equiv \sum_{m=0}^{\ell - 1} \cov(y_{it}, \tilde{z}_{t-m})'\cov(y_{it}, \tilde{z}_{t-m})$, $A \equiv \Sigma_{\tilde{z}}^{-1}$, and $B(\omega) \equiv \frac{1}{2\pi}s_{\tilde{z}^\dagger}(\omega)^{-1} - \Sigma_{\tilde{z}}^{-1}$. We can solve the above program to arbitrary accuracy by casting it as a (convex) semi-definite program with a finite number of constraints. Partition the interval $[0,\pi]$ into $N$ equal-length pieces, and consider the relaxed constraint set
\begin{equation}
X \leq \tilde{B}_m, \quad m \in \lbrace 1, 2, \dots, N\rbrace, \label{eqn:multimulti_semidef_c1}
\end{equation}
where $\tilde{B}_m \equiv \frac{N}{\pi} \times \int_{(m-1) \frac{\pi}{N}}^{m \frac{\pi}{N}} B(\omega)\, d\omega$. As $N \rightarrow \infty$, this constraint set approximates that of the original problem arbitrarily well, but for any finite $N$ the value of the discretized program provides an upper bound on the numerator in \eqref{eqn:multimulti_fvr}. Efficient numerical algorithms to compute the solution to semidefinite programs of the form \eqref{eqn:multimulti_semidefobj}--\eqref{eqn:multimulti_semidef_c1} are available in Matlab and other environments.\footnote{See for example \url{http://cvxr.com/cvx/doc/sdp.html}. To transform our constraints into ones involving real matrices, note that a Hermitian matrix with real part $A$ and imaginary part $B$ is positive semi-definite if and only if the real symmetric matrix $\left(\begin{smallmatrix} A & B' \\ B & A \end{smallmatrix}\right)$ is positive semi-definite.}

Alternatively, we can derive non-sharp upper bounds on the FVR numerator \eqref{eqn:multimulti_fvr} in closed form. For example, one conservative upper bound is obtained by maximizing $\tr(XC)+\tr(AC)$ subject to $X + \tilde{\Sigma}_{\tilde{z}}^{-1} \leq \left ( \int_{-\pi}^\pi s_{\tilde{z}^\dagger}(\omega)\,d\omega \right)^{-1} = \var(\tilde{z}_t^\dagger)^{-1}$. This yields the upper bound
\[\sum_{m=0}^{\ell - 1} \cov(y_{it}, \tilde{z}_{t-m}) \var(\tilde{z}_t^\dagger)^{-1} \cov(y_{it}, \tilde{z}_{t-m})',\]
which binds if the shocks $\varepsilon_{x,t}$ are all recoverable, but is otherwise not sharp. A less conservative -- but still generally suboptimal -- upper bound is given by
\[\tr(AC) + \sum_{m=1}^{n_z} \inf_{\omega \in [0,\pi]} \check{B}_{mm}(\omega),\]
where $\check{B}_{mm}(\omega)$ is the $(m,m)$ element of $\check{B}(\omega) \equiv C^{1/2\prime} B(\omega)C^{1/2}$, and $C=C^{1/2}C^{1/2\prime}$. This latter upper bound is sharp when $n_z=1$, in which case the lower and upper bounds in this section reduce to the FVR bound expressions derived in \cref{sec:identif}.

\paragraph{Interpretation.}
We highlight two special cases where the FVR with respect to $\Gamma\varepsilon_{x,t}$ (which we partially identified above) is of interest.

First, as in \citet{Mertens2013}, one may assume that the $n_z$ instruments are correlated with the same number $n_{\varepsilon_x} = n_z$ of structural shocks. In that case $\Gamma$ is square and nonsingular, so the FVR with respect to $\Gamma\varepsilon_{x,t}$ is the same as the FVR with respect to the shocks $\varepsilon_{x,t}$ themselves. Moreover, if we further assume that all included shocks $\varepsilon_{x,t}$ are recoverable, then $\tilde{z}_t^\dagger \equiv E(\tilde{z}_t \mid \lbrace y_\tau \rbrace_{-\infty<\tau<\infty}) = \Gamma \varepsilon_{x,t}$, so the historical decomposition of $y_t$ with respect to $\varepsilon_{x,t}$ is point-identified as $E(y_t \mid \lbrace \varepsilon_{x,t} \rbrace_{-\infty<\tau \leq t}) = E(y_t \mid \lbrace \tilde{z}_\tau^\dagger \rbrace_{-\infty<\tau \leq t})$.

Second, consider the case with a single IV but possibly several included shocks, $n_{\varepsilon_x}>1=n_z$. The above analysis shows that, even though the IV exclusion restrictions in the baseline model \eqref{eqn:iv} fail, the data are informative about the FVR with respect to the particular linear combination $\Gamma\varepsilon_{x,t}$ of shocks that enters the IV equation. The FVR with respect to this particular linear combination of shocks is evidently a lower bound for the FVR with respect to the full vector $\varepsilon_{x,t}$ of shocks that are correlated with the IV.

\clearpage

\section{Invertibility and SVAR-IV}
\label{sec:svar_iv}

In this section we characterize the bias of SVAR-IV methods when shocks may be noninvertible. Throughout we assume the validity of the SVMA-IV model in \cref{asn:svma,asn:iv,asn:shocks}. Our analysis builds on results by \cite{Lippi1994} and \citet{Forni2018}, who do not consider identification using external instruments.

The SVAR-IV (or ``proxy SVAR'') strategy identifies structural shocks by using the external IV to rotate the forecast errors from a reduced-form VAR \citep{Stock2008,Stock2012,Mertens2013,Gertler2015,Ramey2016}. For analytical clarity, we work with a VAR($\infty$) model with forecast errors $u_t \equiv y_t - E(y_t \mid \lbrace y_\tau \rbrace_{-\infty<\tau < t})$. Suppose the single residualized IV $\tilde{z}_t = \alpha\varepsilon_{1,t} + \sigma_v v_t$ is used by the econometrician. Under the SVAR-IV assumption of $n_{\varepsilon}=n_y$ and invertibility of all shocks, we would have $u_t = \Theta_0 \varepsilon_t$, with $\Theta_0$ square and nonsingular. Then the shock of interest would be identified as $\varepsilon_{1,t} = \gamma'u_t$, where $\gamma \equiv (\Sigma_{u\tilde{z}}'\Sigma_u^{-1}\Sigma_{u\tilde{z}})^{-1/2}\Sigma_u^{-1}\Sigma_{u\tilde{z}}$, $\Sigma_{u\tilde{z}} \equiv \cov(u_t,\tilde{z}_t)$, and $\Sigma_u \equiv \var(u_t)$.

We now ask what happens to the outputs of the SVAR-IV procedure if the invertibility assumption does not hold and $n_\varepsilon \geq n_y$.

\begin{prop}
\label{prop:proxy_SVAR}
Assume the SVMA-IV model in \cref{asn:svma,asn:iv,asn:shocks}. The shock that is (mis)identified by SVAR-IV is given by
\begin{equation} \label{eqn:svar_iv_shock}
\tilde{\varepsilon}_{1,t} \equiv \gamma'u_t = \sum_{j=1}^{n_\varepsilon} \sum_{\ell = 0}^\infty a_{j,\ell} \varepsilon_{j,t-\ell},
\end{equation}
where the scalar coefficients $\lbrace a_{j,\ell}\rbrace$ satisfy $\sum_{j=1}^{n_\varepsilon} \sum_{\ell = 0}^\infty a_{j,\ell}^2 = 1$ and $a_{1,0} = \sqrt{R_0^2}$. The associated SVAR-IV impulse responses are given by\footnote{In any SVAR($\infty$) model, the impulse responses implied by the model must equal the local projections of the outcomes on the identified shock(s). This follows from the Wold representation.}
\[\tilde{\Theta}_{\bullet, 1, \ell} \equiv \cov(y_t,\tilde{\varepsilon}_{1,t-\ell}) = \sum_{j=1}^{n_\varepsilon} \sum_{m=0}^\infty  a_{j,m} \Theta_{\bullet, 1, \ell+m},\quad \ell=0,1,2,\dots,\]
and the impact impulse responses satisfy
\[\tilde{\Theta}_{\bullet, 1, 0} = \frac{1}{\sqrt{R_0^2}} \Theta_{\bullet, 1, 0}.\]
\end{prop}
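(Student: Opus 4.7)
First, I would write the VAR$(\infty)$ forecast error $u_t = y_t - E(y_t\mid \{y_\tau\}_{-\infty<\tau<t})$ as a one-sided linear combination of the structural shocks. Because each $y_\tau=\Theta(L)\varepsilon_\tau$ lies in the closed linear span of $\{\varepsilon_{\tau'}\}_{\tau'\leq\tau}$, so does $u_t$, giving $u_t=\sum_{\ell\geq 0} D_\ell \varepsilon_{t-\ell}$ for some $n_y\times n_\varepsilon$ matrices $D_\ell$. The key observation is that $D_0=\Theta_0$: the conditional expectation $E(y_t\mid\{y_\tau\}_{\tau<t})$ is measurable with respect to $\{\varepsilon_\tau\}_{\tau<t}$ only, so the loading on $\varepsilon_t$ inside $u_t$ must equal the loading on $\varepsilon_t$ inside $y_t$ itself. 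Substituting this representation into $\gamma'u_t$ immediately yields \eqref{eqn:svar_iv_shock} with $a_{j,\ell}\equiv \gamma' D_{\bullet,j,\ell}$, and the normalization $\sum_{j,\ell} a_{j,\ell}^2=1$ is just $\var(\tilde\varepsilon_{1,t})=\gamma'\Sigma_u\gamma=1$, which holds by construction: the scalar $(\Sigma_{u\tilde z}'\Sigma_u^{-1}\Sigma_{u\tilde z})^{-1/2}$ in the definition of $\gamma$ is precisely the variance standardization factor.

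Next, I would compute $a_{1,0}$. Using $D_0=\Theta_0$ and orthogonality of $v_t$ to $u_t$, we have $\Sigma_{u\tilde z}=\cov(u_t,\alpha\varepsilon_{1,t}+\sigma_v v_t)=\alpha\Theta_{\bullet,1,0}$; plugging this into the formula for $\gamma$ yields $a_{1,0}=\gamma'\Theta_{\bullet,1,0}=\operatorname{sign}(\alpha)(\Theta_{\bullet,1,0}'\Sigma_u^{-1}\Theta_{\bullet,1,0})^{1/2}$. Under the customary sign normalization $\alpha>0$, and recognizing $R_0^2=\Theta_{\bullet,1,0}'\Sigma_u^{-1}\Theta_{\bullet,1,0}$ as the population $R^2$ from the projection of $\varepsilon_{1,t}$ on $u_t$, this collapses to $a_{1,0}=\sqrt{R_0^2}$.

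Finally, for the impulse responses $\tilde\Theta_{\bullet,1,\ell}=\cov(y_t,\tilde\varepsilon_{1,t-\ell})$, the general-horizon expression follows by substituting $y_t=\sum_m \Theta_m\varepsilon_{t-m}$ and $\tilde\varepsilon_{1,t-\ell}=\sum_{j,k} a_{j,k}\varepsilon_{j,t-\ell-k}$, using i.i.d.\ normality, and matching time indices via $m=\ell+k$. For the impact case a shortcut is cleaner: since $y_t-u_t$ lies in the past-of-$y$ information set, it is orthogonal to $u_t$, so $\cov(y_t,u_t)=\var(u_t)=\Sigma_u$, whence $\tilde\Theta_{\bullet,1,0}=\Sigma_u\gamma=\alpha(\Sigma_{u\tilde z}'\Sigma_u^{-1}\Sigma_{u\tilde z})^{-1/2}\Theta_{\bullet,1,0}=(R_0^2)^{-1/2}\Theta_{\bullet,1,0}$ under $\alpha>0$.

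The main obstacle is the first step: justifying, despite the possibility of noninvertibility, that $u_t$ admits a one-sided MA representation in the structural shocks with impact matrix exactly $\Theta_0$. This requires appealing to the closed-linear-span inclusion between the past-of-$y$ and past-of-$\varepsilon$ Hilbert spaces rather than to any fundamentalness property. Once this representation is in hand, no Blaschke-type spectral machinery is required, because the remainder of the argument only manipulates second moments via the closed-form expression for $\gamma$.
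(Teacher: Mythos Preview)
Your proposal is correct and follows essentially the same route as the paper: write $u_t=\sum_{\ell\geq 0} D_\ell\varepsilon_{t-\ell}$, use $\Sigma_{u\tilde z}=\alpha D_{\bullet,1,0}$, and read off $a_{1,0}$ and $\tilde\Theta_{\bullet,1,0}$ from the explicit formula for $\gamma$. The only notable differences are presentational: you establish $D_0=\Theta_0$ up front via the measurability argument, whereas the paper carries the abstract $M_{\bullet,1,0}$ through and only identifies $M_{\bullet,1,0}=\cov(u_t,\varepsilon_{1,t})=\cov(y_t,\varepsilon_{1,t})=\Theta_{\bullet,1,0}$ at the very end; and you explicitly flag the sign normalization $\alpha>0$, which the paper uses silently. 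One small point worth expanding in your write-up: when you ``recognize'' $R_0^2=\Theta_{\bullet,1,0}'\Sigma_u^{-1}\Theta_{\bullet,1,0}$, the paper spells out the chain $\var(E(\varepsilon_{1,t}\mid\{y_\tau\}_{\tau\leq t}))=\var(E(\varepsilon_{1,t}\mid\{u_\tau\}_{\tau\leq t}))=\var(E(\varepsilon_{1,t}\mid u_t))$, the last equality holding because $u_\tau$ for $\tau<t$ lies in the span of $\{\varepsilon_s\}_{s<t}$ and is therefore orthogonal to $\varepsilon_{1,t}$; adding that one line would make your argument self-contained.
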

Under noninvertibility, SVAR-IV mis-identifies the shock as a distributed lag of all the shocks in the underlying model, with the coefficient on the true shock of interest $\varepsilon_{1,t}$ equal to $\sqrt{R_0^2}$ (the square root of the degree of invertibility, cf. \cref{sec:model}). This causes impulse responses to be conflated across horizons and shocks. At the impact horizon, SVAR-IV overstates the magnitudes of the true impulse responses $\Theta_{\bullet,1,0}$ (to a one standard deviation shock) by a factor of $1/\sqrt{R_0^2}$. Thus, the SVAR-IV-implied one-step-ahead forecast variance decompositions for the first shock overstate the true one-step-ahead FVRs (as defined in \cref{sec:model}) by a factor of $1/R_0^2$. The bias of SVAR-IV-implied \emph{multi-step} forecast variance decompositions depends in more complicated ways on the sequence of true impulse responses.

In summary, while SVAR-IV analysis solves the familiar ``rotation problem'' in SVAR analysis, it does not solve the invertibility problem. The issue is not that the IV selects a suboptimal linear combination $\gamma$ of the forecast residuals $u_t$ under noninvertibility, since it can be verified that $\gamma'u_t \propto E(\varepsilon_{1,t} \mid u_t)$ regardless of invertibility.\footnote{In particular, no other linear combination $\gamma$ can yield a representation \eqref{eqn:svar_iv_shock} where the weight $a_{1,0}$ exceeds $\sqrt{R_0^2}$ (subject to $\var(\tilde{\varepsilon}_{1,t})=1$). Thus, the IV handles the identification problem as well as possible subject to the constraints imposed by the (erroneous) invertibility assumption. As discussed in \cref{sec:model}, \emph{dynamic rotations} circumvent this issue by obtaining the shock $\tilde{\varepsilon}_{1,t}$ as a function of current \emph{and future} reduced-form residuals $\lbrace u_\tau \rbrace_{\tau \geq t}$. An argument similar to that in the proof of \cref{prop:proxy_SVAR} shows that, with such dynamic rotations, the weight on the true shock of interest is bounded above by $\sqrt{R_\infty^2}$. Dynamic rotations can thus solve the identification problem if and only if the shock of interest is recoverable.} Rather, SVAR methods fail because they assume that the time-$t$ forecast residuals suffice to recover $\varepsilon_{1,t}$ \citep{Lippi1994}. Only under invertibility (i.e., $R_0^2=1)$ do we have $a_{j,\ell}=0$ for all $(j,\ell) \neq (1,0)$, so that the SVAR-IV shock $\tilde{\varepsilon}_{1,t}$ equals the true shock $\varepsilon_{1,t}$. The higher the degree of invertibility $R_0^2$, the smaller is the extent of the SVAR-IV bias, as discussed by \citet{Sims2006}, \citet{Forni2018}, and \citet{Wolf2018}. An explicit illustration of SVAR-IV mis-identification is provided in \cref{sec:sw_illustration}.

\clearpage

\section{Illustration in a structural macro model}
\label{sec:sw_illustration}

In \cref{sec:illustration} we use several simple analytical examples to illustrate how our upper bound works. In this section we complement those simple examples with a quantitative exercise.

The nature of our exercise is as follows. We consider an econometrician observing (i) a small set of macroeconomic aggregates generated from the model of \citet{Smets2007} and (ii) noisy measures of some of the model's true underlying structural shocks (i.e., valid external instruments). For clarity, we abstract from any sampling uncertainty and assume that the econometrician observes an infinite amount of data, so the joint spectral density of observed macro aggregates and external IVs is perfectly known to her. Given this spectral density, she uses our bounds to draw conclusions about variance decompositions and the degree of invertibility, without exploiting the underlying structure of the model. Overall, the point of this exercise is to show that our conclusions on likely tightness of the upper bound are not an artifact of the particular stylized environments considered in \cref{sec:illustration}, but similarly obtain in quantitatively relevant, dynamic structural macro models, for exactly the same economic reasons.

\subsection{Preliminaries}

We employ the \citet{Smets2007} model. Throughout, we parametrize the model according to the posterior mode estimates of \citet{Smets2007}.\footnote{Our implementation of the Smets-Wouters model is based on Dynare replication code kindly provided by Johannes Pfeifer. The code is available at \url{https://sites.google.com/site/pfeiferecon/dynare}.} Following the canonical trivariate VAR in the empirical literature on monetary policy shock transmission, our baseline specification assumes the econometrician observes aggregate output, inflation, and the short-term policy interest rate; we consider additional observables below. These macro aggregates are all stationary in the model, so they should be viewed as deviations from trend. The model features seven unobserved shocks, so not all shocks can be invertible in the baseline specification.

The econometrician observes a single external instrument $z_t$ for the shock of interest $\varepsilon_{1,t}$:
\begin{equation*}
z_t = \alpha\varepsilon_{1,t} + \sigma_v v_t.
\end{equation*}
We normalize $\alpha = 1$ throughout and compute identified sets for two different degrees of informativeness of the external instrument, $\frac{1}{1+\sigma_v^2} \in \lbrace 0.25, 0.5\rbrace$. We do not attach any specific economic interpretation to the IV in the context of the model.

We separately consider three different shocks of interest: a monetary shock, a technology shock, and a forward guidance shock. The conventional monetary shock as well as the technology shock are already included in the original model of \citeauthor{Smets2007}. For the forward guidance shock, we depart from their model by assuming that monetary policy shocks are known two quarters in advance; that is, we change the model's Taylor rule to
\begin{equation*}
r_t = \rho_r r_{t-1} + (1-\rho_r) \times \left ( \phi_\pi \pi_t + \phi_y \hat{y}_t + \phi_{dy} (\hat{y}_t - \hat{y}_{t-1}) \right ) + \varepsilon_{t-2}^m,
\end{equation*}
where $r_t$ denotes the nominal interest rate, $\pi_t$ denotes the inflation, $\hat{y}_t$ is the output gap, and $\varepsilon_t$ is the monetary shock.\footnote{This is the notion of forward guidance discussed, for example, in \citet{DelNegro2012}.} Overall, these three shocks are chosen in line with our simple analytical illustrations in \cref{sec:illustration}, and identification will be subject to the same economic intuition as the small-scale examples discussed there.

We emphasize that our results in the remainder of this section should \emph{not} be taken to imply that conventional monetary shocks are robustly near-invertible, or that forward guidance and technology shocks are never invertible -- clearly, our statements are always conditional on a certain set of observables. Instead, the only purpose of this section is to document that the simple economic intuition of \cref{sec:illustration} still plays out in a quantitative macro model with rich dynamics. To further clarify this point, we finish this section by briefly discussing how our results change with alternative sets of observables.

\subsection{Information content of several observables}
\label{subsec:SW_MP}

We first consider identification of monetary policy shocks, i.e., shocks to the serially correlated disturbance in the model's Taylor rule.

The monetary shock is nearly invertible. In the model of \citeauthor{Smets2007}, monetary policy shocks are the only shock to contemporaneously move inflation and nominal rates in opposite directions \citep{Uhlig2005}. Given this unique conditional co-movement, the intuition offered in \cref{subsec:info_obs} suggests that the degree of invertibility should be high, and indeed it equals $R_0^2 = 0.8702$, in spite of the limited importance of monetary shocks for aggregate fluctuations in this model \citep{Wolf2018}. Looking forward in time does not sharpen identification much further ($R_\infty^2 = 0.8763$), and neither does looking across the spectrum frequency-by-frequency ($\alpha_{LB}^2 = 0.8947$).\footnote{Formally, the scaled spectral density $2\pi s_{\varepsilon_1^\dagger}(\cdot)$ of the best two-sided linear predictor of the monetary shock is nearly flat at around 0.9.}

\begin{figure}[tp]
\centering
{\textsc{Monetary shock: Identified set of FVRs}} \\
\vspace{0.5\baselineskip}
\includegraphics[width=\textwidth]{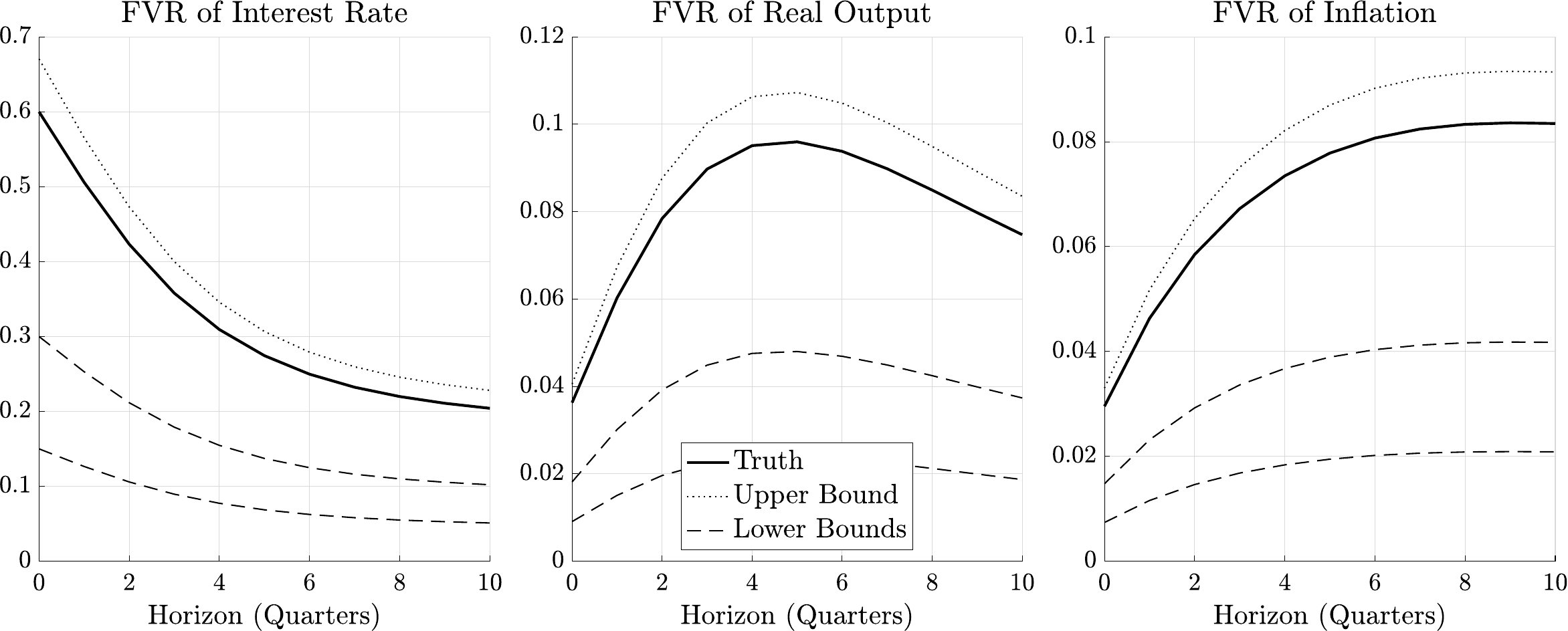}
\caption{Horizon-by-horizon identified sets for FVRs up to 10 quarters. The two lower bounds correspond to an IV with $\frac{1}{1+\sigma_v^2} = 0.25$ (lower dashed line) and an IV with $\frac{1}{1+\sigma_v^2} = 0.5$.}
\label{fig:SW_MPShock_FVR}
\end{figure}

\cref{fig:SW_MPShock_FVR} shows that the upper bounds on the forecast variance ratios are close to the true values. By construction, the upper and lower bounds are proportional to the true FVRs. The lower bound scales one-for-one with instrument informativeness, while the upper bound scales one-for-one with the maximal informativeness of the data for the shock across frequencies. Thus, near-invertibility immediately implies that the upper bounds are throughout close to the true FVR. In contrast, the informativeness of the lower bounds depends entirely on the strength of the IV.

\subsection{Dynamic information content}

Next, we consider the identification of technology shocks, i.e., innovations to the autoregressive process of total factor productivity. This type of shock illustrates how our sharp upper bound leverages information across frequencies.

In our baseline trivariate specification, the macro aggregates are informative about only the longest cycles of the technology shock. \cref{fig:SW_TechShock_SD} reports the spectral density of the best two-sided linear predictor of the technology shock. Strikingly, this spectral density is small at business-cycle frequencies, but close to 1 for long-run fluctuations, with a peak of $\alpha_{LB}^2 = 0.9084$. Intuitively, as in our simple example in \cref{subsec:info_obs_dyn}, technology shocks here account for most of the long-run fluctuations, and so macro aggregates are highly informative about the IV at low frequencies; as a result, our sharp upper bound on shock importance will again be tight. In contrast, averaging across frequencies, the shock is neither near-invertible ($R_0^2 = 0.1977$) nor near-recoverable ($R_\infty^2 = 0.2166$).\footnote{The issue is that, at short horizons, other shocks -- notably the price and wage mark-up shocks -- also push inflation and output in opposite directions. However, the technology shock becomes nearly invertible with a judicious choice of further observables; in particular, including either the level of TFP or hours worked leads to a nearly invertible representation.}

\begin{figure}[tp]
\centering
{\textsc{Technology shock: Spectral density of best 2-sided linear predictor}} \\ 
\vspace{0.5\baselineskip}
\includegraphics[width=0.6\textwidth]{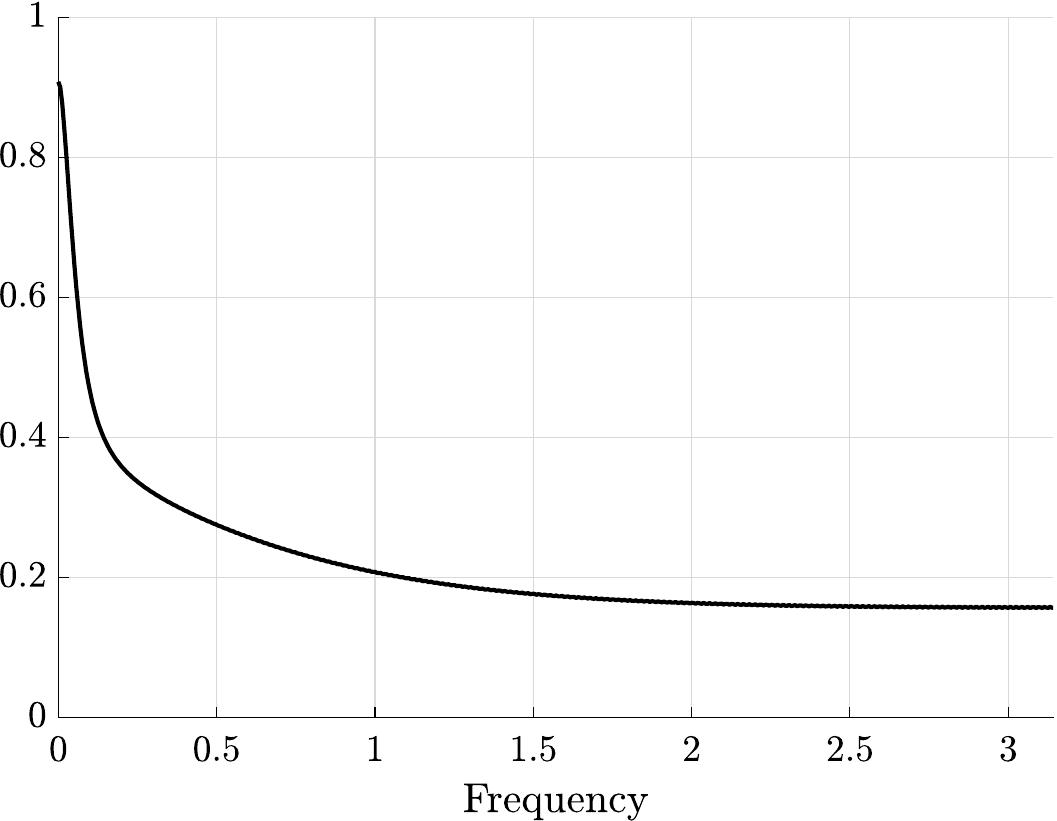}
\caption{Scaled spectral density $2\pi s_{\varepsilon_1^\dagger}(\cdot)$ of the best two-sided linear predictor of the technology shock. A frequency $\omega$ corresponds to a cycle of length $\frac{2\pi}{\omega}$ quarters.}
\label{fig:SW_TechShock_SD}
\end{figure}

Consequently, \cref{fig:SW_TechShock_FVR} shows that the sharp upper bounds on FVRs are tight. As always, the tightness of the lower bound is entirely governed by the strength of the IV.

\begin{figure}[tp]
\centering
{\textsc{Technology shock: Identified set of FVRs}} \\
\vspace{0.5\baselineskip}
\includegraphics[width=\textwidth]{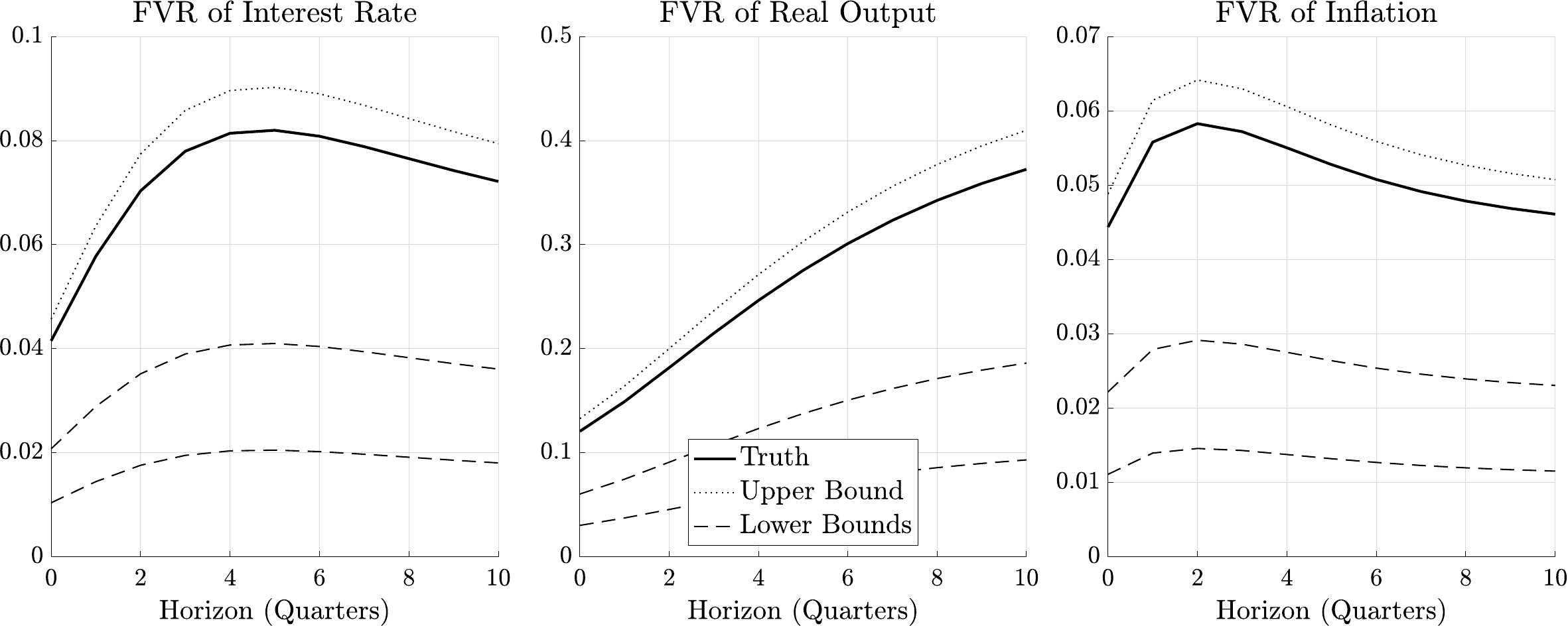}
\caption{Horizon-by-horizon identified sets for FVRs up to 10 quarters.  The two lower bounds correspond to an IV with $\frac{1}{1+\sigma_v^2} = 0.25$ (lower dashed line) and an IV with $\frac{1}{1+\sigma_v^2} = 0.5$.}
\label{fig:SW_TechShock_FVR}
\end{figure}

\subsection{Non-invertibility and news shocks}

For our third example, we modify the model to include forward guidance shocks, a type of news shock. As discussed above, a forward guidance shock is identical to a monetary shock, except that it is anticipated two quarters in advance by economic agents. This third example illustrates the robustness of our method to non-invertibility induced by news shocks.

The forward guidance shock is highly non-invertible, though it is nearly recoverable. \cref{fig:SW_FG_R2} shows the degree of invertibility $R_\ell^2$ up to time $t+\ell$ of the forward guidance shock. The figure considers horizons from $\ell = 0$ (the degree of invertibility) up to $\ell = 10$ (close to the degree of recoverability). Contemporaneous informativeness is limited, with $R_0^2 = 0.0768$; intuitively, when the forward guidance is announced, all macro aggregates move in the same direction, suggesting to the econometrician that the economy was probably buffeted by an ordinary demand shock.\footnote{This belief is further reinforced by the movement of nominal interest rates: Since output and inflation have both increased (for an expansionary forward guidance shock), nominal interest rates initially increase (as dictated by the Taylor rule), before declining two periods later.} At $\ell = 2$, however, the corresponding $R^2$ jumps to $R_2^2 = 0.8724$, reaching $R_\infty^2 = 0.8807$ as the overall degree of recoverability. The economic intuition is again simple: Two quarters from now, when the anticipated innovation finally directly hits the Taylor rule, the interest rate response suddenly switches sign, sending a strong signal that in fact a monetary policy shock -- and not some other kind of demand shock -- had occurred. By the same logic as in \cref{subsec:SW_MP}, the forward guidance shock is nearly recoverable.

\begin{figure}[tp]
\centering
{\textsc{Forward guidance shock: Degree of invertibility at time $t+\ell$}} \\ 
\vspace{0.5\baselineskip}
\includegraphics[width=0.7\textwidth]{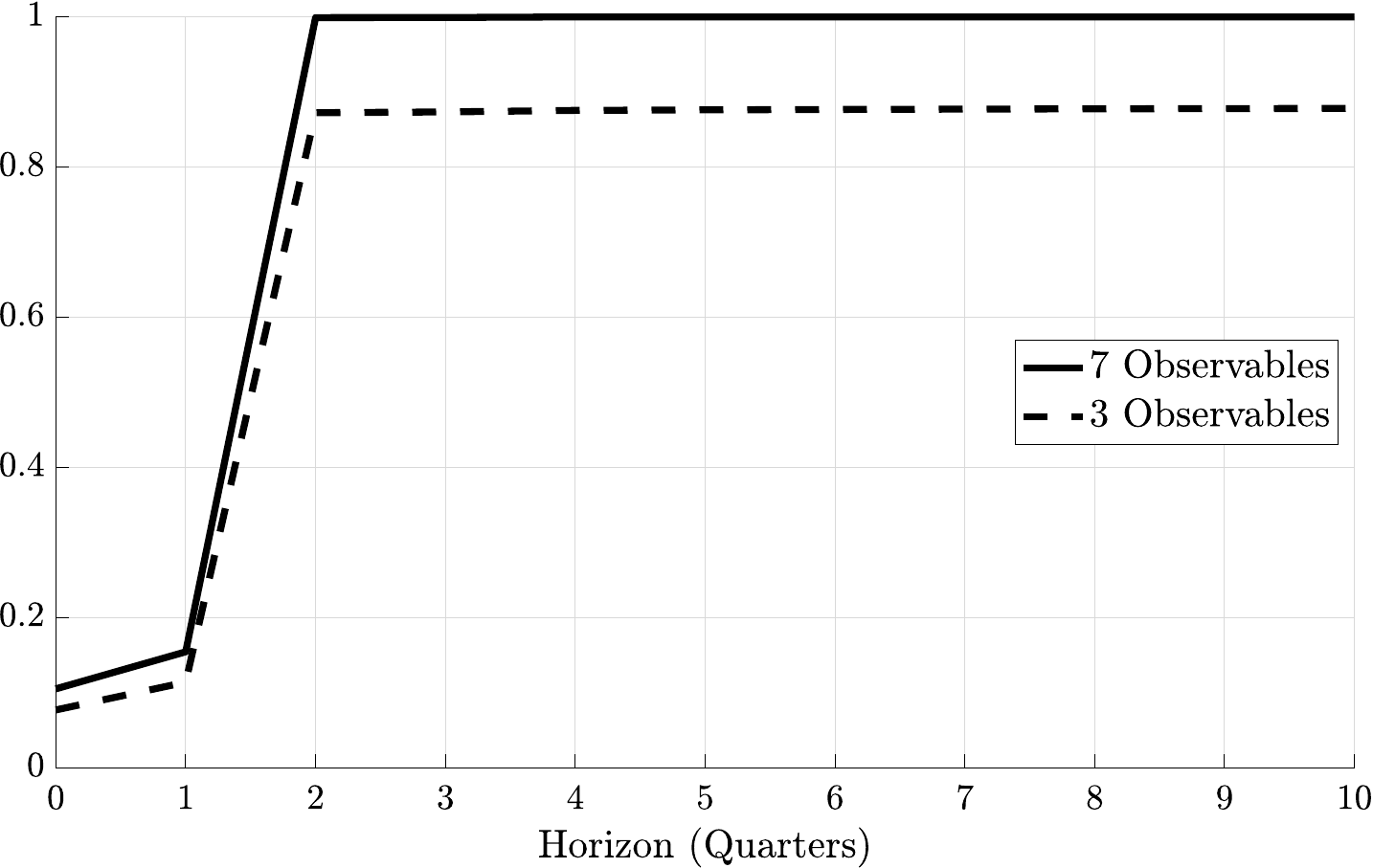}
\caption{Population $R_\ell^2$ for the forward guidance shock, with three observables (output, inflation, interest rate) and seven observables (the full set in \citealp{Smets2007}).}
\label{fig:SW_FG_R2}
\end{figure}

Because of non-invertibility, \cref{fig:SW_FG_SVARIV_FVR} shows that the conventional SVAR-IV approach dramatically overstates the forward guidance FVRs. In particular, as revealed by our analysis in \cref{sec:svar_iv}, the (impact) FVR is biased upward by a factor of $1/R_0^2 \approx 13$ (!).\footnote{Of course, for suitably chosen sets of observables (e.g., expectations about future interest rates), the non-invertibility problem would disappear \citep{Leeper2013}. Our method is robust in the sense that it does not require such a judicious choice of further observables -- it works even under extreme non-invertibility.}

\begin{figure}[tp]
\centering
{\textsc{Forward guidance shock: SVAR-IV FVRs}} \\ 
\vspace{0.5\baselineskip}
\includegraphics[width=\textwidth]{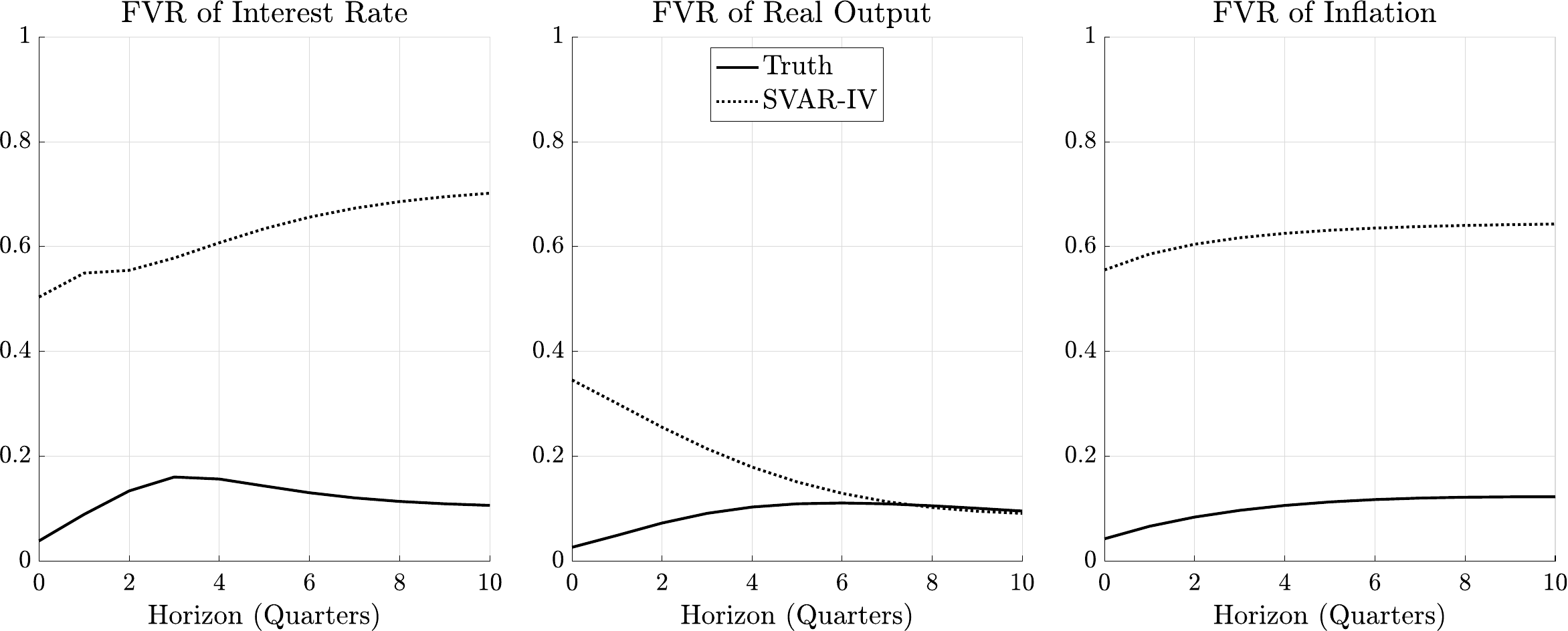}
\caption{FVRs for a forward guidance shock in the Smets-Wouters model, true values and SVAR-IV-estimated values (population limit). Baseline set of three observables.}
\label{fig:SW_FG_SVARIV_FVR}
\end{figure}

\begin{figure}[tp]
\centering
{\textsc{Forward guidance shock: Identified set of FVRs}} \\
\vspace{0.5\baselineskip}
\includegraphics[width=\textwidth]{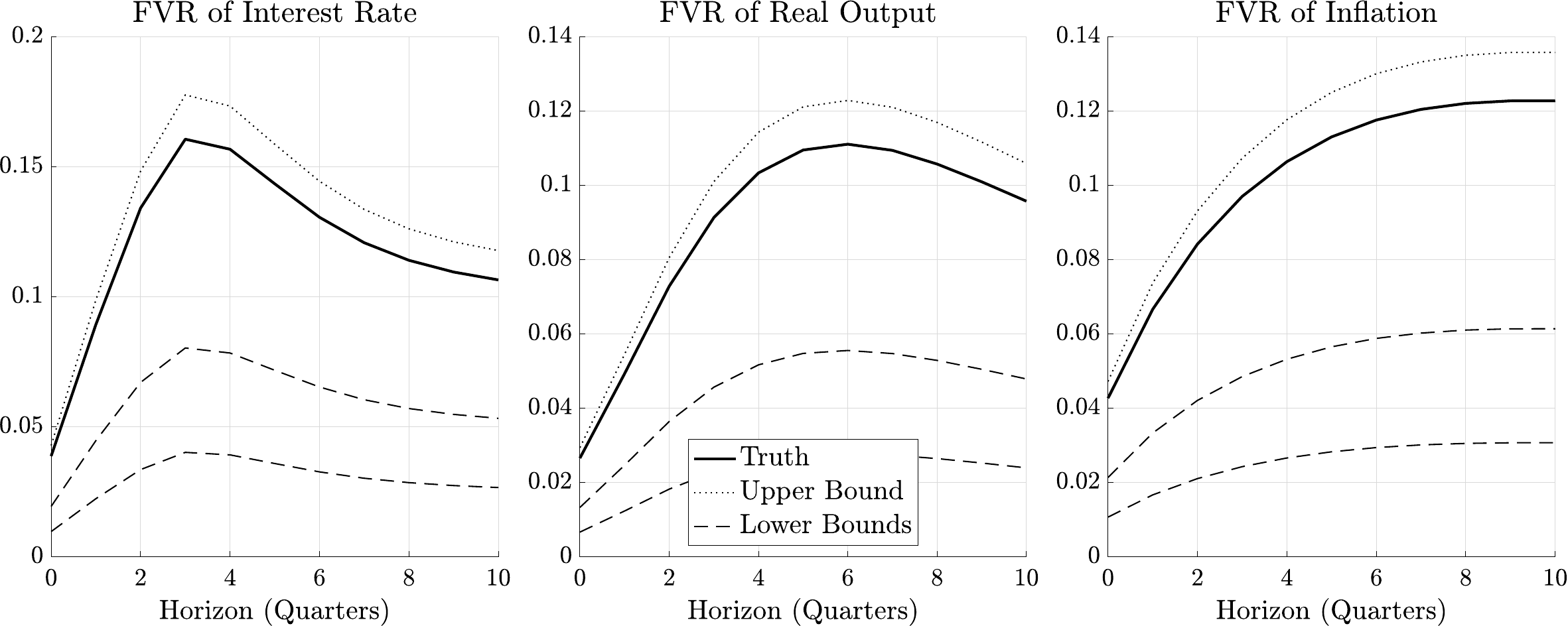}
\caption{Horizon-by-horizon identified sets for FVRs up to 10 quarters.  The two lower bounds correspond to an IV with $\frac{1}{1+\sigma_v^2} = 0.25$ (lower dashed line) and an IV with $\frac{1}{1+\sigma_v^2} = 0.5$.}
\label{fig:SW_FG_FVR}
\end{figure}

\cref{fig:SW_FG_FVR} shows that our method instead achieves a tight upper bound on the FVR, irrespective of the degree of invertibility. Since the forward guidance shock is nearly recoverable, the upper bounds of our identified sets for the different FVRs are again close to the truth, similar to the conventional (near-invertible) monetary shock studied in \cref{subsec:SW_MP}.

\subsection{Other observables}

The results in the preceding sections are designed to illustrate the economic logic of our method. They should not, however, be interpreted as offering generally valid conclusions on the invertibility (or lack thereof) of different structural shocks -- such statements are invariably sensitive to the choice of observables. To further emphasize this point, we in \cref{tab:sw_R2} compute the degrees of invertibility and recoverability for each shock, for different sets of macro observables.

The degrees of invertibility and recoverability are by definition increasing in the number of macroeconomic observables. For the baseline monetary shock, the degree of invertibility is high as soon as the researcher observes both nominal interest and inflation; with the full set of observables, the shock becomes perfectly invertible. For the technology shock, the degree of invertibility jumps to almost 1 as soon as hours worked become observable; intuitively, this is so because, at the posterior mode of the Smets-Wouters model, most \emph{high- and low-frequency} variation of hours worked is driven by technology shocks. Finally, because none of the observables included in the estimation exercise of \citeauthor{Smets2007} are forward-looking measures of nominal interest rates, the forward guidance shock remains highly non-invertible regardless of the choice of observables.

\begin{table}[tp]
\centering
\textsc{Structural illustration: Degree of invertibility/recoverability} \\[2ex]
\renewcommand{\arraystretch}{1.5}
\begin{tabular}{|l|C{1.4cm}C{1.4cm}|C{1.4cm}C{1.4cm}|C{1.4cm}C{1.4cm}|}
\hline 
 & \multicolumn{2}{c}{Monetary shock} & \multicolumn{2}{|c|}{Technology shock} & \multicolumn{2}{|c}{Forw. guid. shock} \\
\cline{2-7}
Macro observables & $R_0^2$ & $R_\infty^2$ & $R_0^2$ & $R_\infty^2$ & $R_0^2$ & $R_\infty^2$ \\
\hline
Baseline & 0.8702 & 0.8763 & 0.1977 & 0.2166 & 0.0768 & 0.8807 \\
+ investm. + consum. & 0.9415 & 0.9507 & 0.2128 & 0.2384 & 0.0980 & 0.9492 \\
+ hours & 0.9272 & 0.9286 & 0.9799 & 0.9816 & 0.0774 & 0.9331 \\
All observables & 1 & 1 & 1 & 1 & 0.1049 & 1 \\
\hline
\end{tabular}
\caption{Degree of invertibility $R_0^2$ and degree of recoverability $R_\infty^2$ in Smets-Wouters model, given three different sets of macro observables $y_t$. ``Baseline'' is the 3-variable specification with output, inflation, and short-term interest rate. The second and third rows add either (i) investment and consumption or (ii) hours to the baseline observables. The last row has the full set of observables considered in \citet{Smets2007}.}
\label{tab:sw_R2}
\end{table}

\clearpage

\section{Monetary policy shock application: further results}
\label{sec:empir_add}

Complementing the empirical results in \cref{sec:mp_application}, \cref{fig:gk_fvr_svar} shows the forecast variance ratios/decompositions estimated by a conventional SVAR-IV procedure, with bootstrap confidence intervals. The conclusions about the irrelevance of monetary shocks for output growth and inflation are even starker in this figure than in the main paper. Our bounds estimates in \cref{sec:mp_application} show that the irrelevance of monetary shocks is not merely an artifact of assuming invertibility, but is instead a robust implication of the empirical covariances of the macro aggregates $y_t$ with the monetary shock instrument $z_t$ of \citet{Gertler2015}.

\begin{figure}[tp]
\centering
\textsc{Empirical application: Forecast variance ratios, SVAR-IV} \\[2ex]
\includegraphics[width=0.48\linewidth]{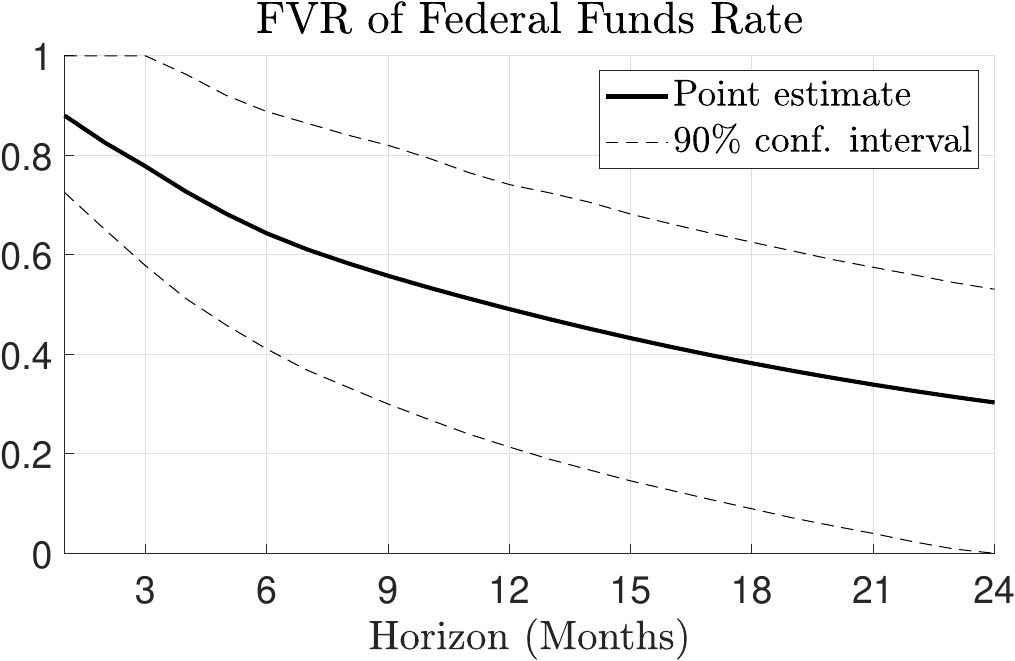} \includegraphics[width=0.48\linewidth]{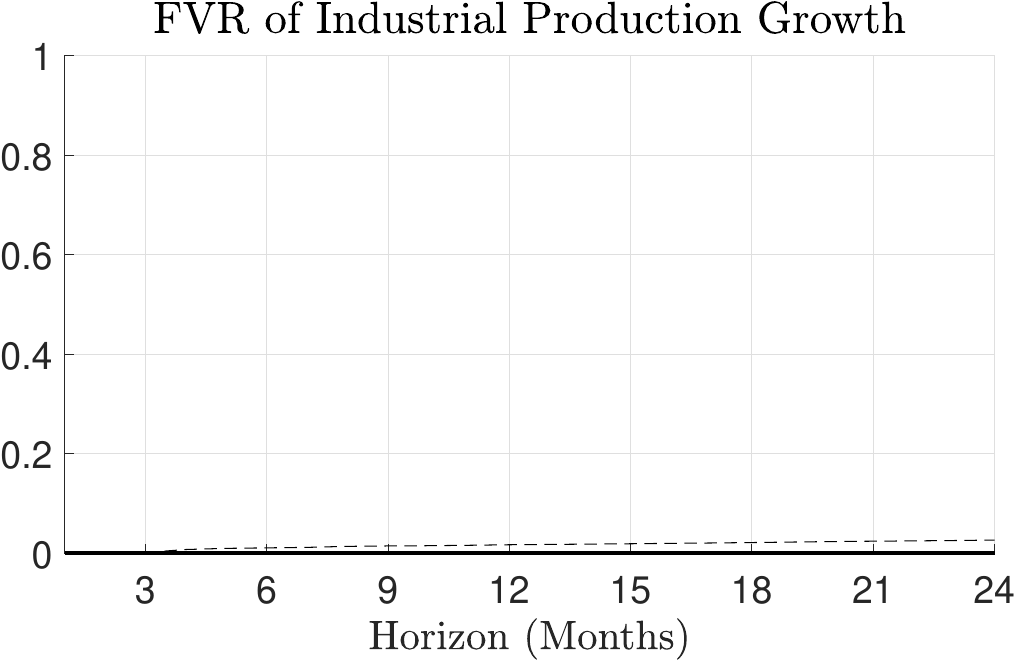} \\[2ex]
\includegraphics[width=0.48\linewidth]{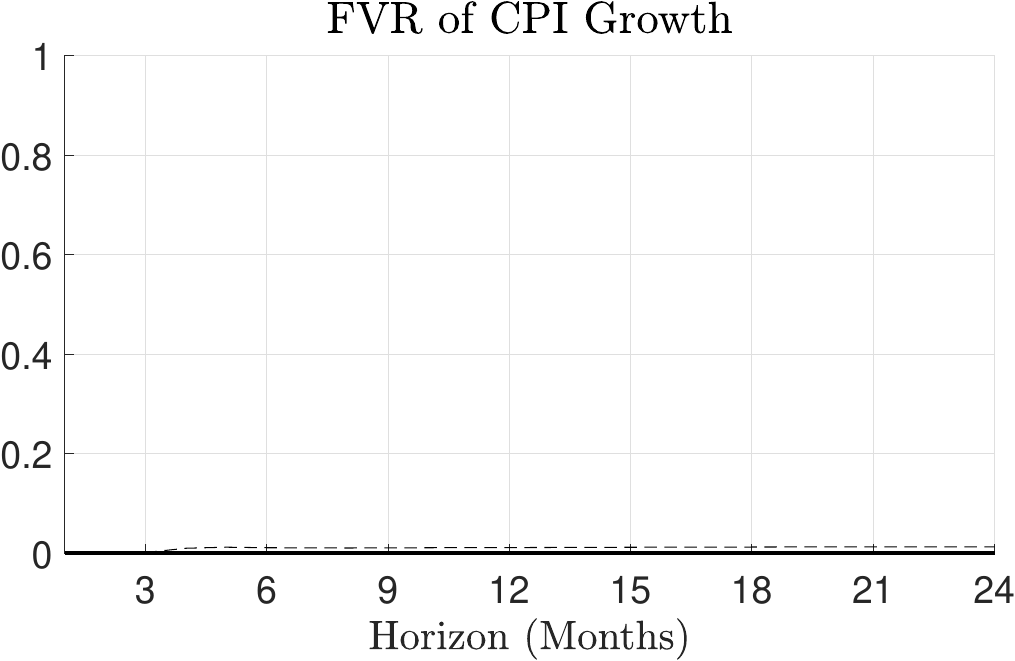} \includegraphics[width=0.48\linewidth]{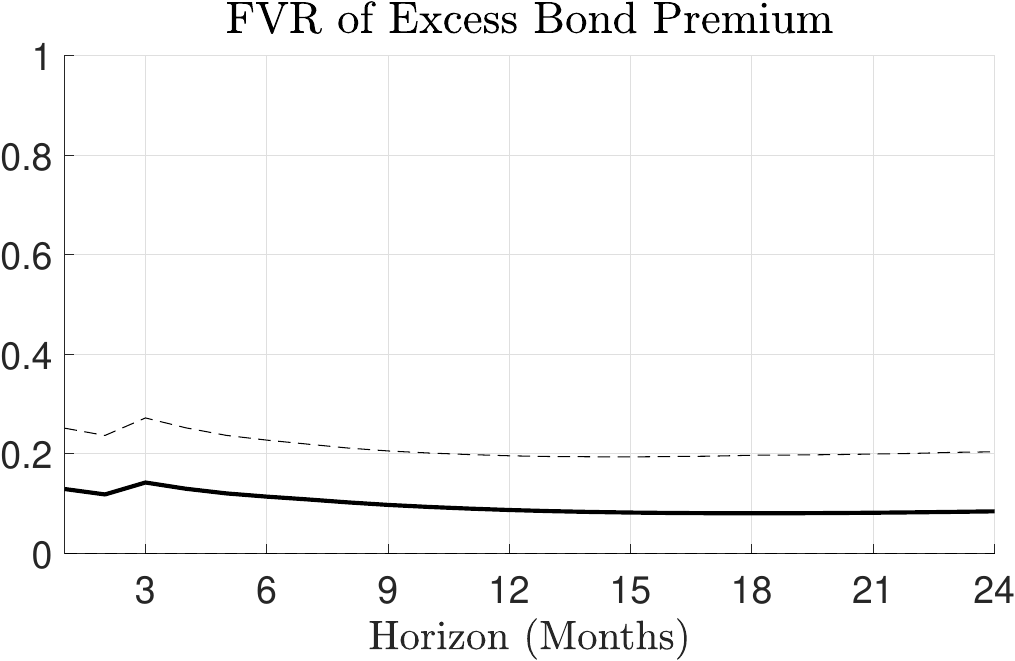}
\caption{Point estimates and 90\% confidence intervals for the identified sets of forecast variance ratios/decompositions estimated by SVAR-IV, across different variables and forecast horizons. For visual clarity, we force bias-corrected estimates/bounds to lie in $[0,1]$. The interest rate variable is the Federal Funds Rate.}
\label{fig:gk_fvr_svar}
\end{figure}

\clearpage

\section{Oil news shock application}
\label{sec:oil}

As a second application of our method, we study the importance of news about oil supply for aggregate business-cycle fluctuations. We use changes in oil supply expectations as an IV for oil news shocks and obtain two main results. First, we find that such oil supply news shocks are highly non-invertible, invalidating the standard SVAR-IV approach. Second, we show that our invertibility-robust method again yields sharp inference: Over the past three decades, oil news shocks have mattered little for U.S. inflation and played almost no role in U.S. output fluctuations.

\paragraph{Background.}
\citet{Kaenzig2021} constructs a measure of shocks to oil supply expectations from high-frequency changes in asset prices around press releases of the Organization of the Petroleum Exporting Countries (OPEC). Since news shocks are often non-invertible, as discussed in \cref{sec:illustration_news}, this setting is a second promising laboratory to showcase the appeal of our SVMA-IV approach.\footnote{We thank an anonymous referee for suggesting this application of our method.}

\paragraph{Model.}
We consider the same set of endogenous macro observables $y_t$ as in \citet[Section III.E]{Kaenzig2021}: log real oil prices, log oil production, log oil inventories, log world industrial production (IP), log broad U.S. nominal effective exchange rate index (NEER), log U.S. industrial production (IP), log U.S. consumer price index (CPI), federal funds rate (FFR), log VXO uncertainty index, and log U.S. terms of trade (TOT). Since our method requires stationary data, we transform the following variables to log growth rates (rather than log levels): oil production, oil inventories, world and U.S. industrial production, the nominal effective exchange rate, and U.S. CPI. Data are monthly from April 1983 to December 2017, as dictated by the availability of the IV. We include $p = 12$ lags and a constant in the reduced-form VAR (following \citeauthor{Kaenzig2021}), and use 1,000 bootstrap draws from a homoskedastic recursive residual VAR bootstrap.

We estimate the importance of oil shocks using both our invertibility-robust SVMA-IV framework as well as the conventional SVAR-IV approach. Note that, due to the differences in sample, data transformations, and bootstrap procedure, our SVAR-IV results are not directly comparable to those reported in \citet[][Table 2]{Kaenzig2021}.

\paragraph{Results.}
Our first main finding is that invertibility is strongly rejected: The 90\% confidence interval for the identified set of the degree of invertibility is $[0, 0.190]$, and the p-value for the Granger causality pre-test of invertibility in \cref{sec:how_to} is $0.0064$. We emphasize that invertibility is strongly rejected even though the VAR includes several forward-looking financial variables (e.g., oil prices, exchange rates, and the VIX). While these variables are known to respond quickly to oil supply news, such fast responses are apparently not sufficient to ensure invertibility; intuitively, financial variables are likely to respond quickly to many other nuisance shocks as well, which all things equal lowers the degree of invertibility of our shock of interest. We conclude that the analysis of oil supply news shocks calls for invertibility-robust approaches, such as our SVMA-IV procedure.

\cref{fig:kaenzig_svmaiv_fvr_1,fig:kaenzig_svmaiv_fvr_2} show our partial identification robust confidence intervals for the forecast variance ratio of the endogenous macro variables with respect to the oil news shocks. We report point estimates and confidence intervals for the identified sets at each horizon separately. As in \citet{Kaenzig2021}, we find that the data are consistent with the oil news shock explaining a sizable portion (but not the majority) of short-run oil price fluctuations. In spite of our weak identifying assumptions, we can furthermore conclude that -- on our post-1983 sample -- oil supply news have played a limited role in U.S. consumer price fluctuations, and have been essentially irrelevant for world and (in particular) U.S. real economic activity. Consistent with these results, oil supply news are estimated to have an at most moderate short-term effect on the U.S. terms of trade.

For comparison, \cref{fig:kaenzig_svariv_fvr_1,fig:kaenzig_svariv_fvr_2} show the analogous confidence intervals obtained from a conventional SVAR-IV procedure. The impact FVR estimates are substantially over-estimated relative to the invertibility-robust results, consistent with our finding of substantial non-invertibility and with \cref{prop:proxy_SVAR}. Our invertibility-robust results show that the following conclusions from the non-robust SVAR-IV analysis are spurious: (i) Oil supply news shocks are the overwhelming driver of oil price fluctuations in the short run; (ii) they are important drivers of medium-run fluctuations in U.S. inflation and real activity; and (iii) they are the dominant driver of the U.S. terms of trade at essentially all horizons.

\begin{figure}[p]
\centering
\textsc{Oil news shock: SVMA-IV FVRs, Global Variables} \\[4ex]
\includegraphics[width=0.48\linewidth]{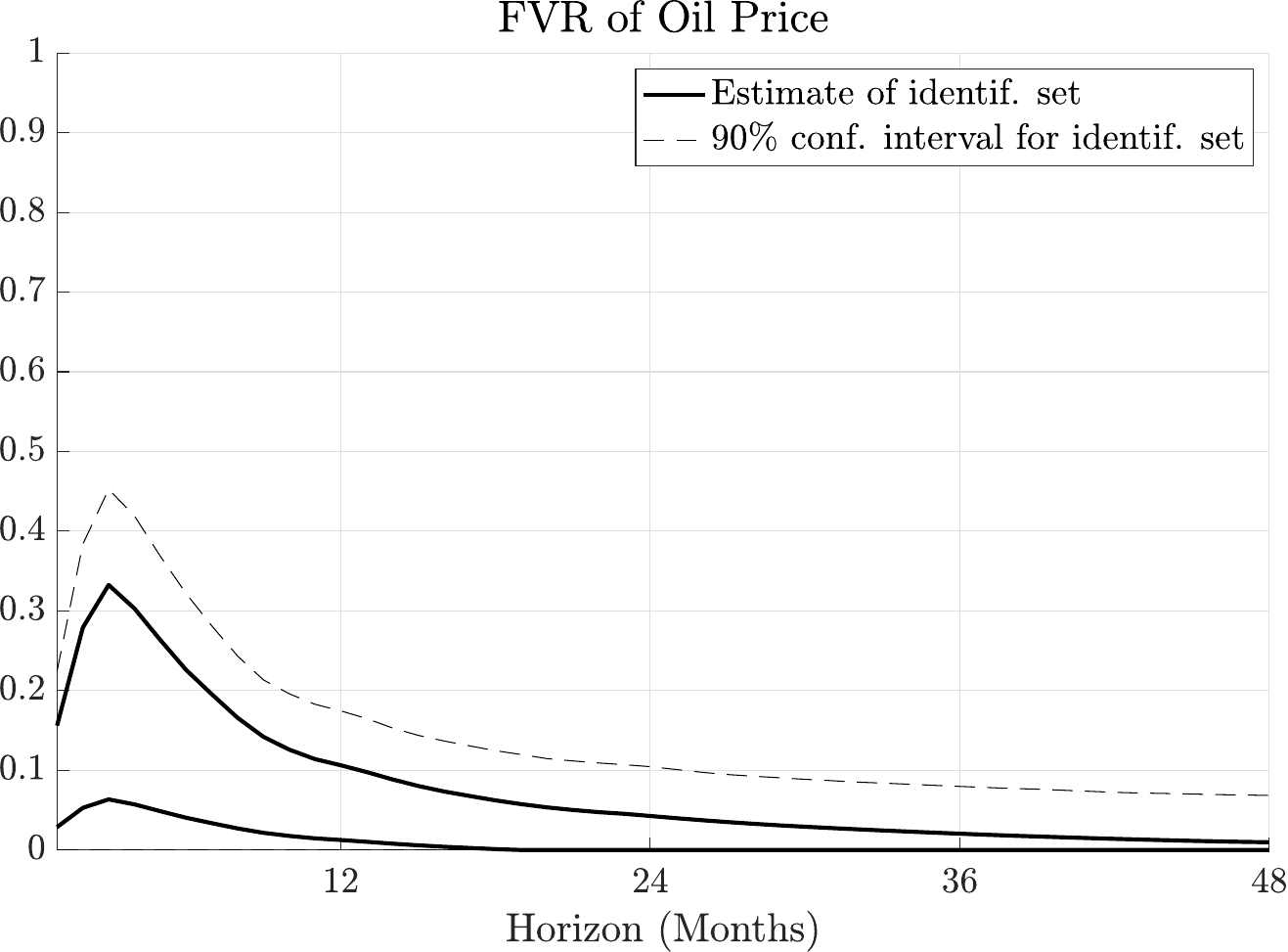} \includegraphics[width=0.48\linewidth]{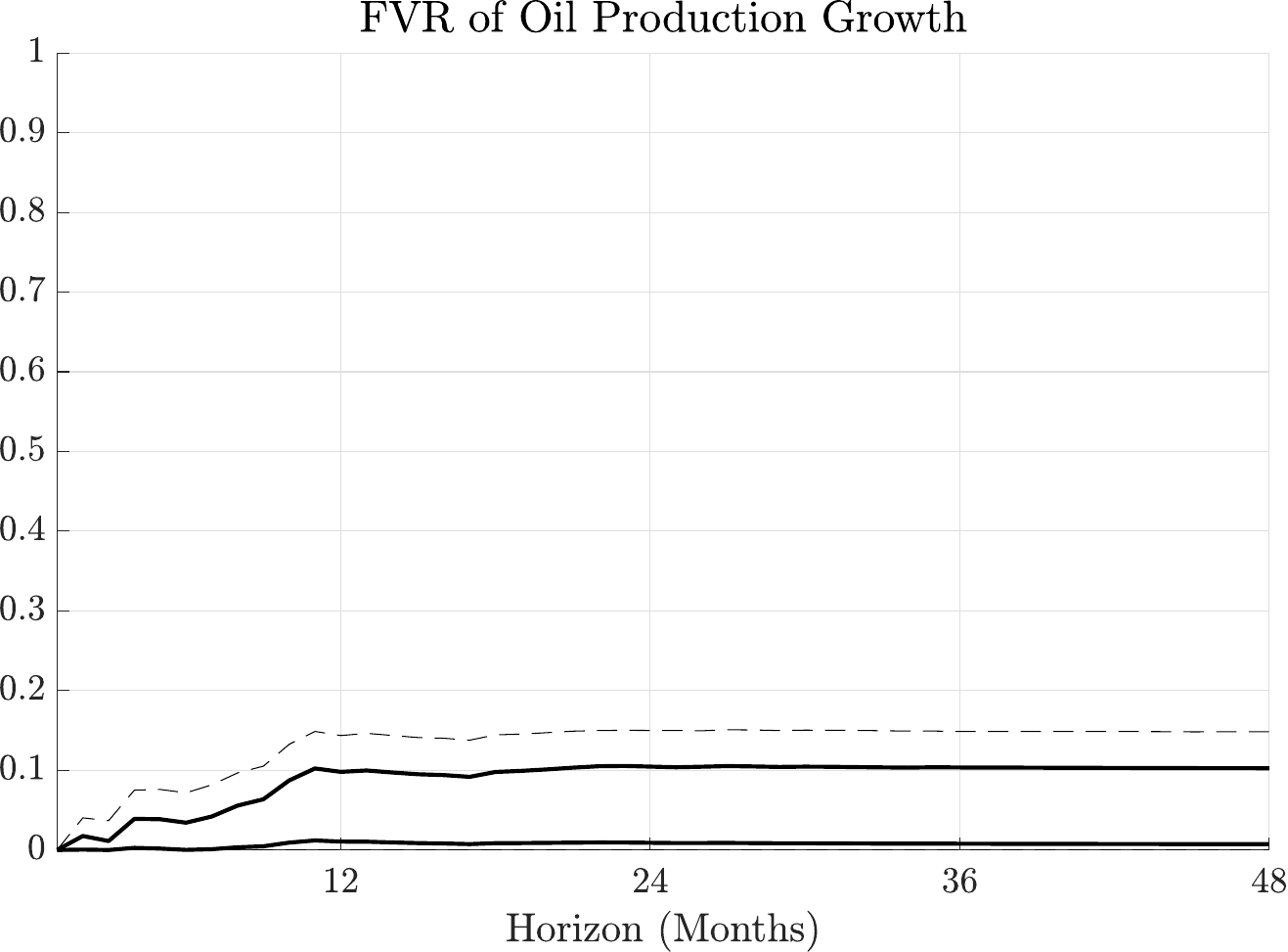} \\[2ex]
\includegraphics[width=0.48\linewidth]{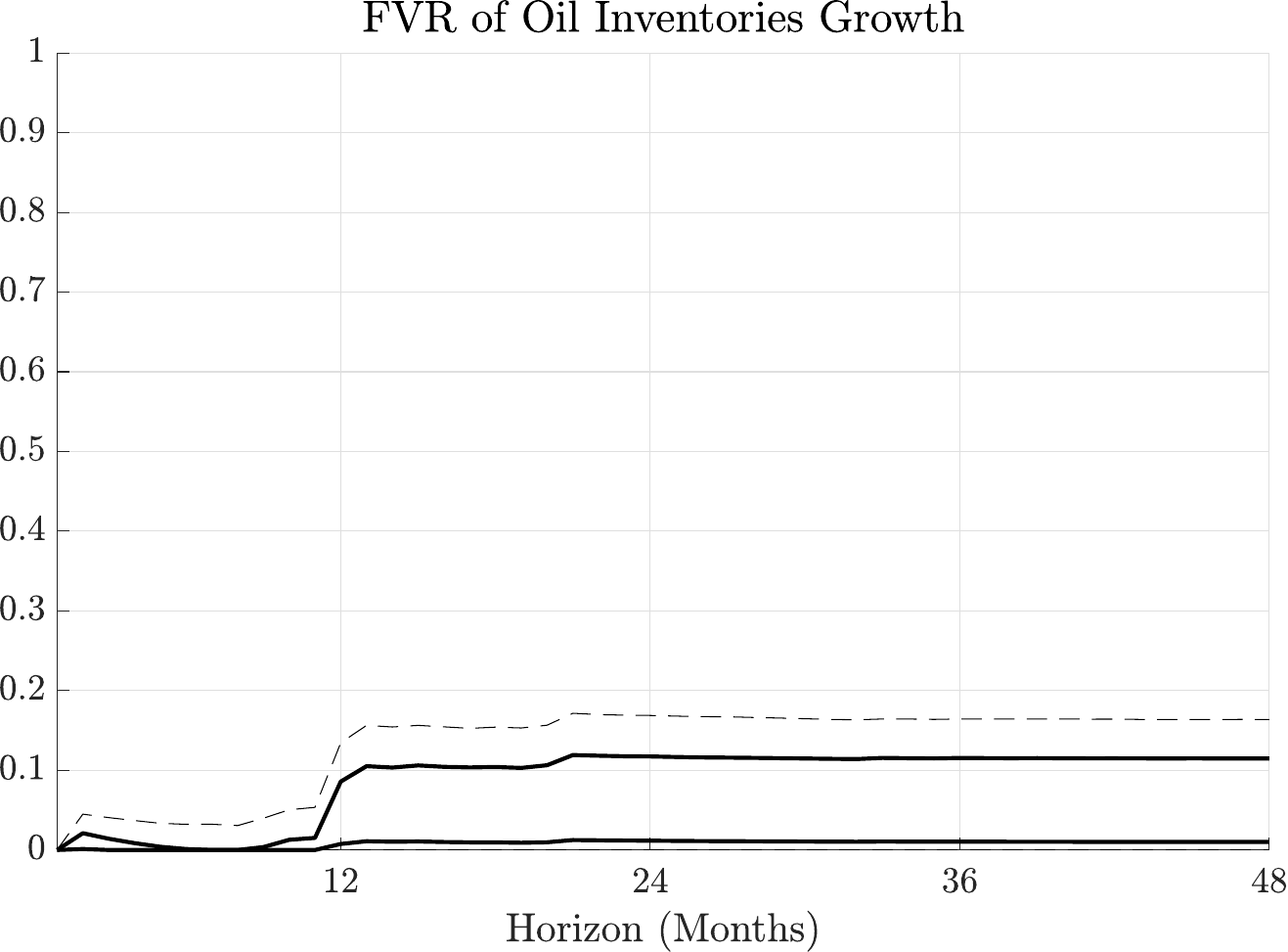} \includegraphics[width=0.48\linewidth]{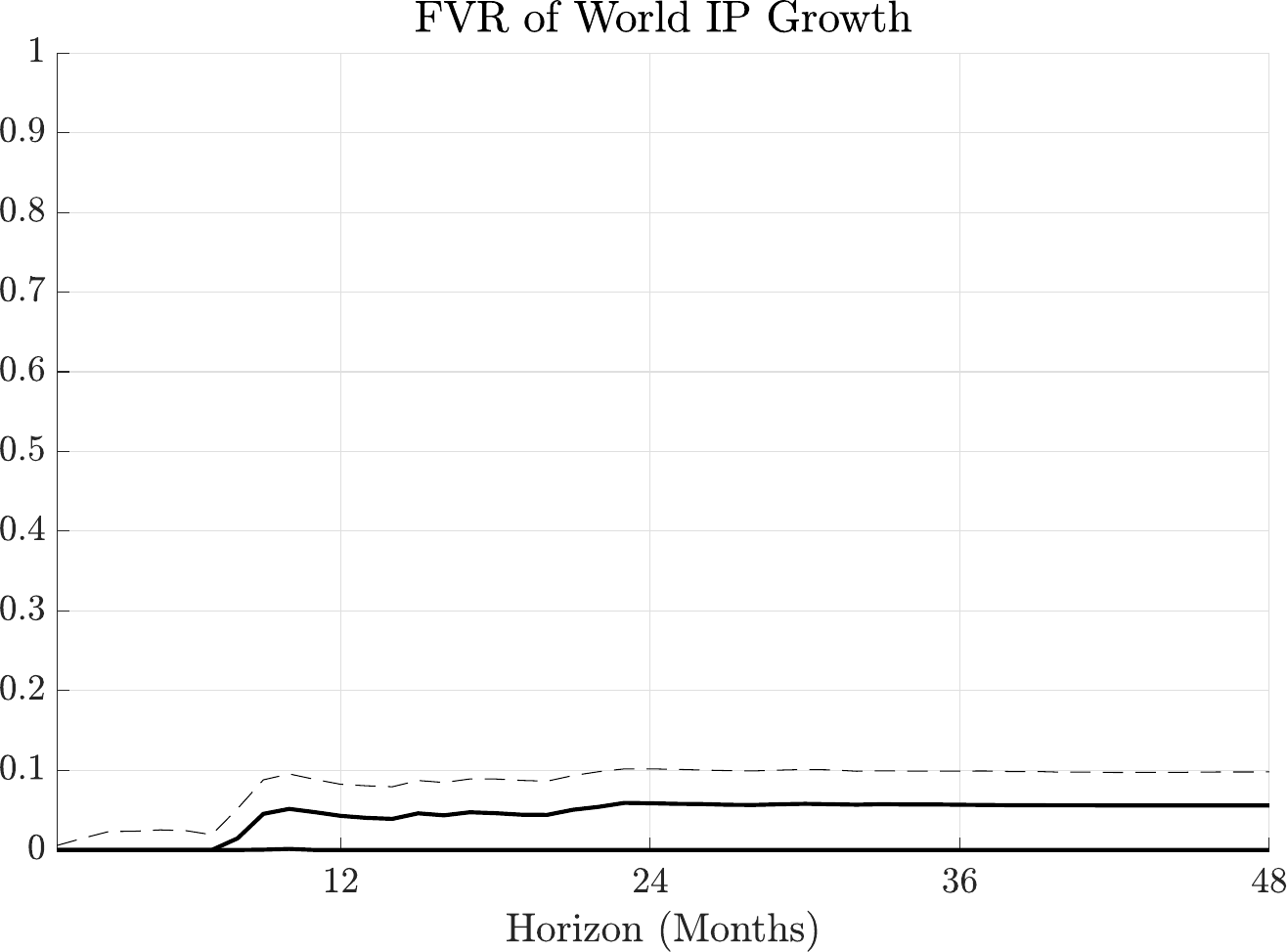} \\[2ex]
\caption{Point estimates and 90\% confidence intervals for the identified set of oil news shock FVRs, across different variables and forecast horizons, produced using our invertibility-robust SVMA-IV approach. For visual clarity, we force bias-corrected estimates/bounds to lie in $[0, 1]$.}
\label{fig:kaenzig_svmaiv_fvr_1}
\end{figure}

\begin{figure}[p]
\centering
\textsc{Oil news shock: SVMA-IV FVRs, U.S. Variables} \\[2ex]
\includegraphics[width=0.48\linewidth]{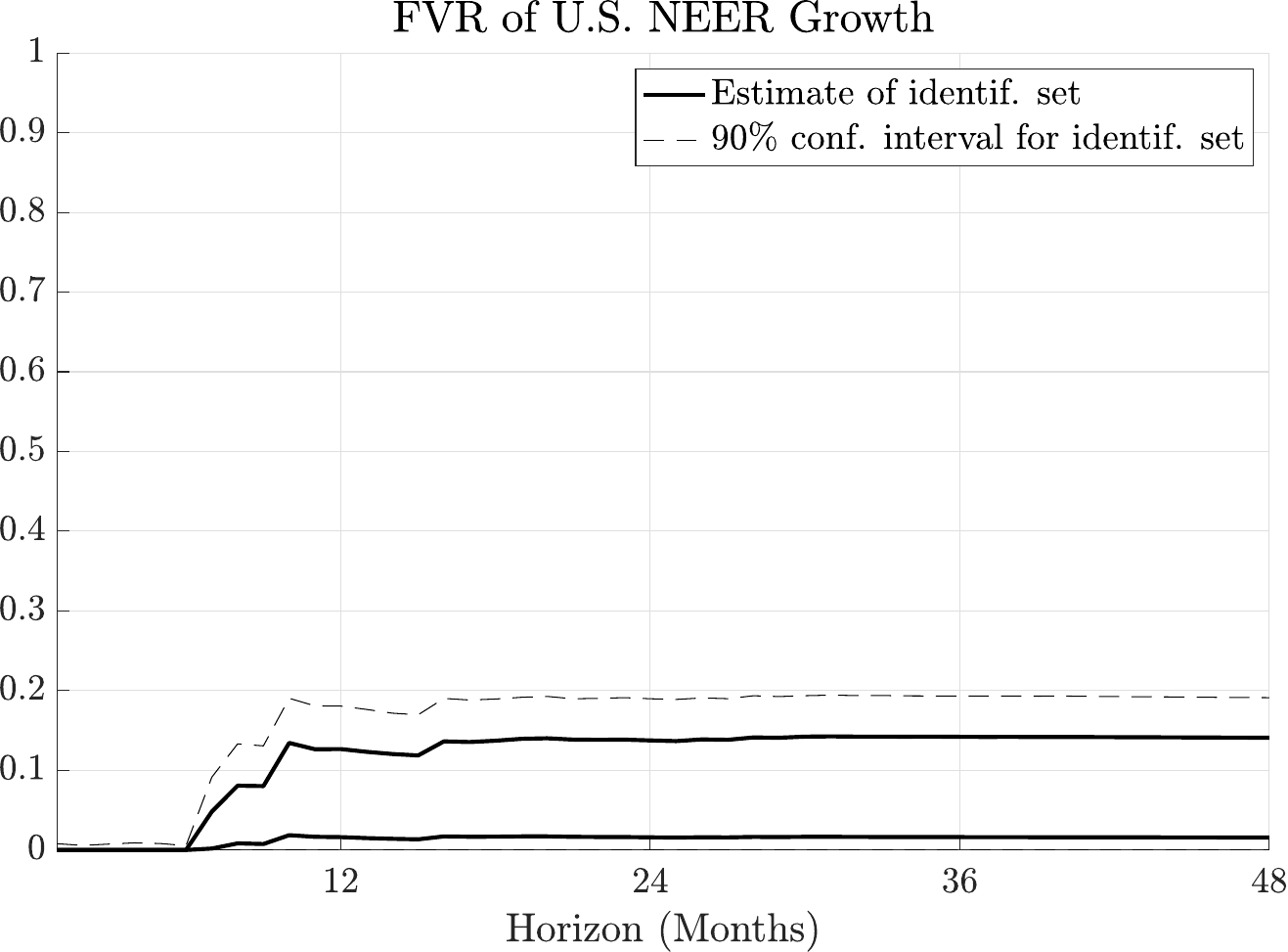} \includegraphics[width=0.48\linewidth]{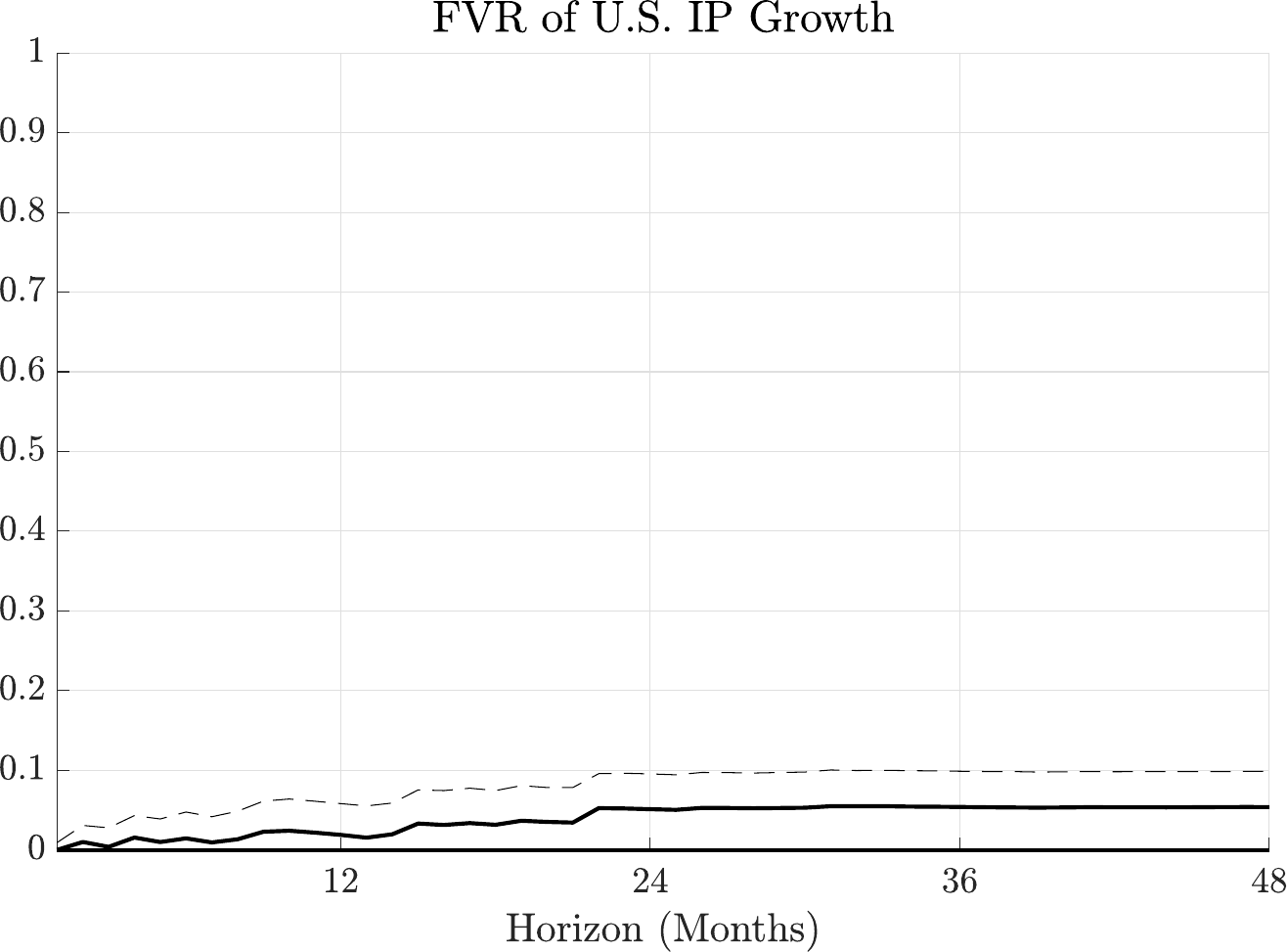} \\[2ex]
\includegraphics[width=0.48\linewidth]{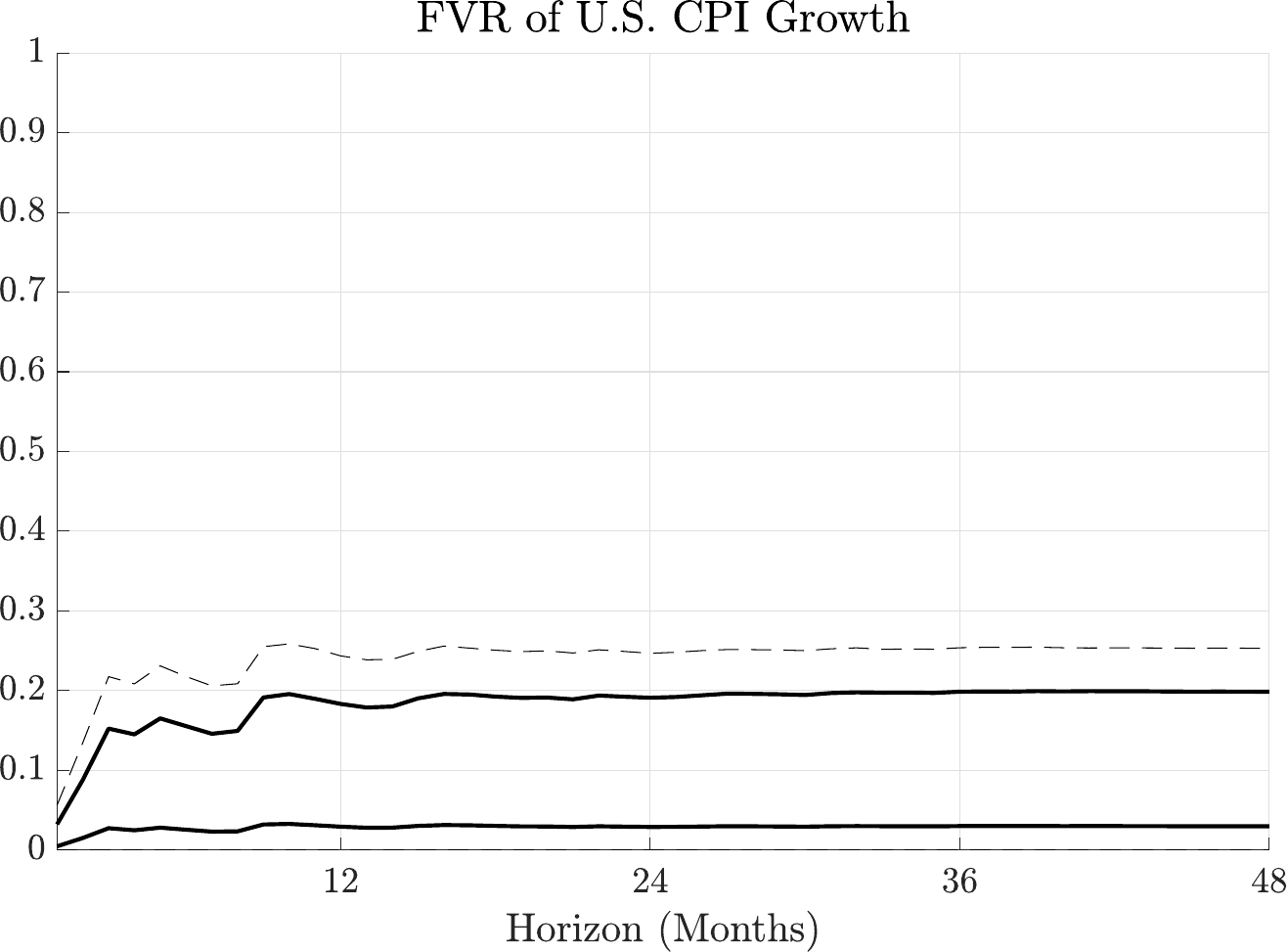} \includegraphics[width=0.48\linewidth]{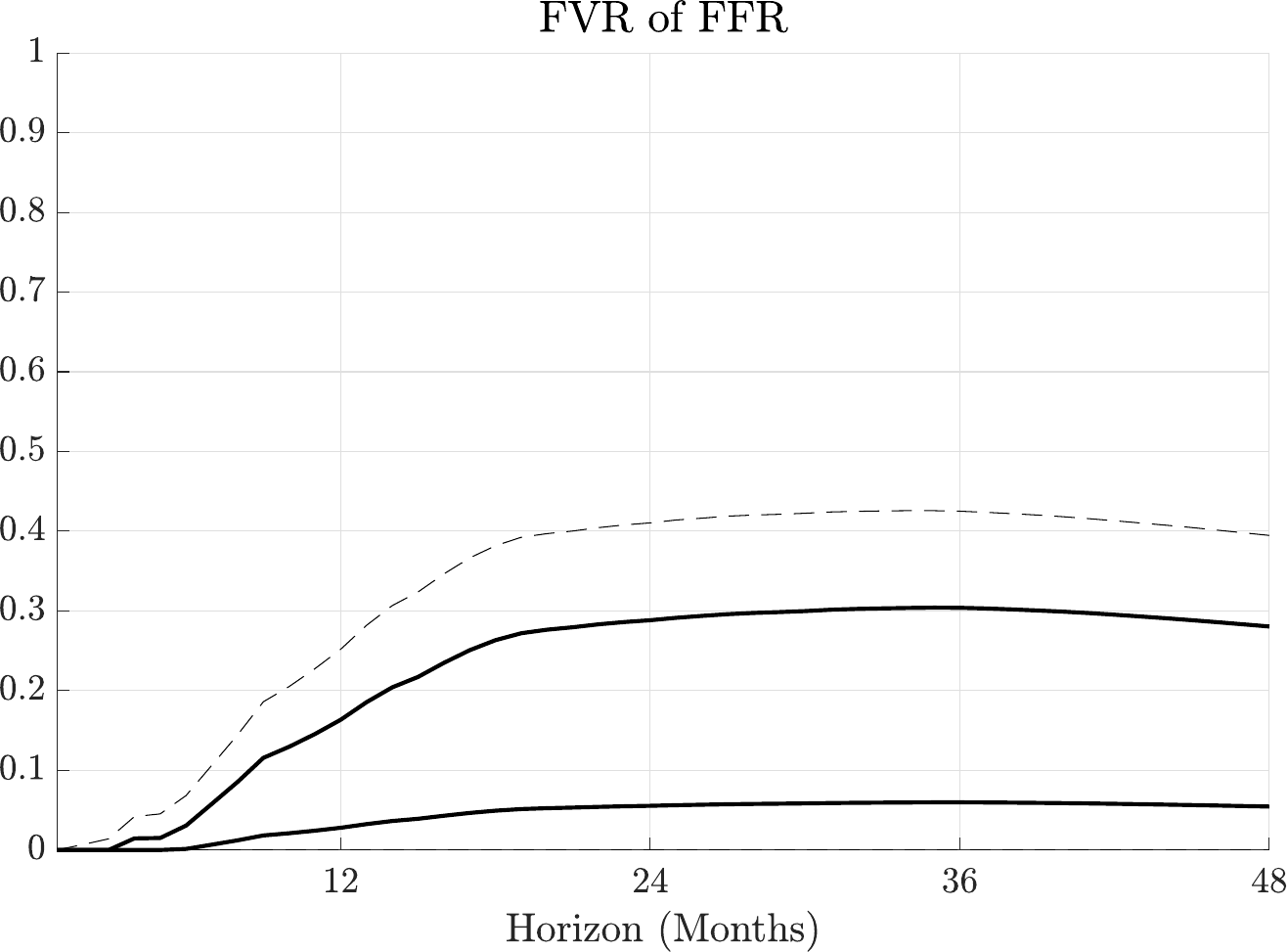} \\[2ex]
\includegraphics[width=0.48\linewidth]{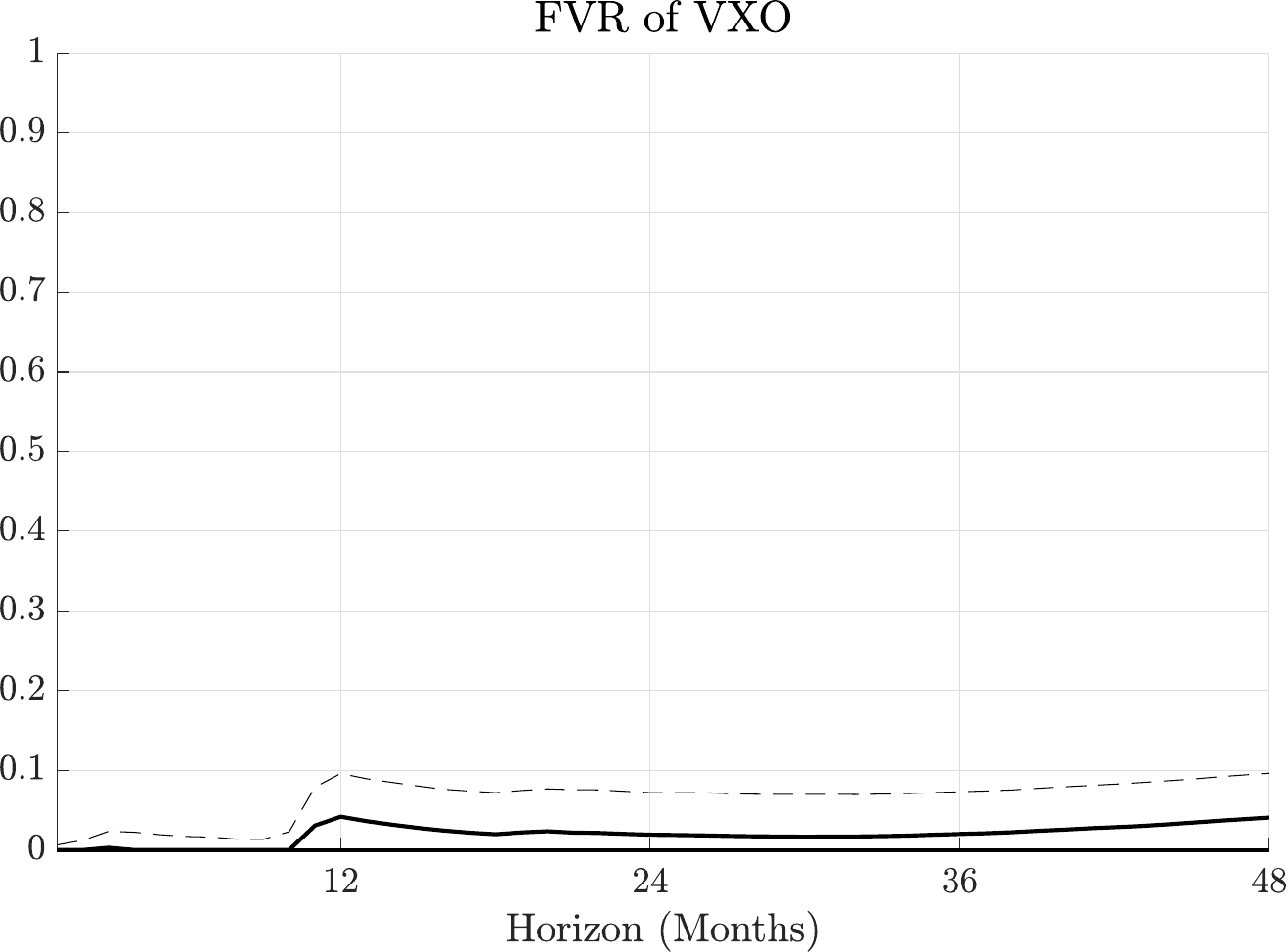} \includegraphics[width=0.48\linewidth]{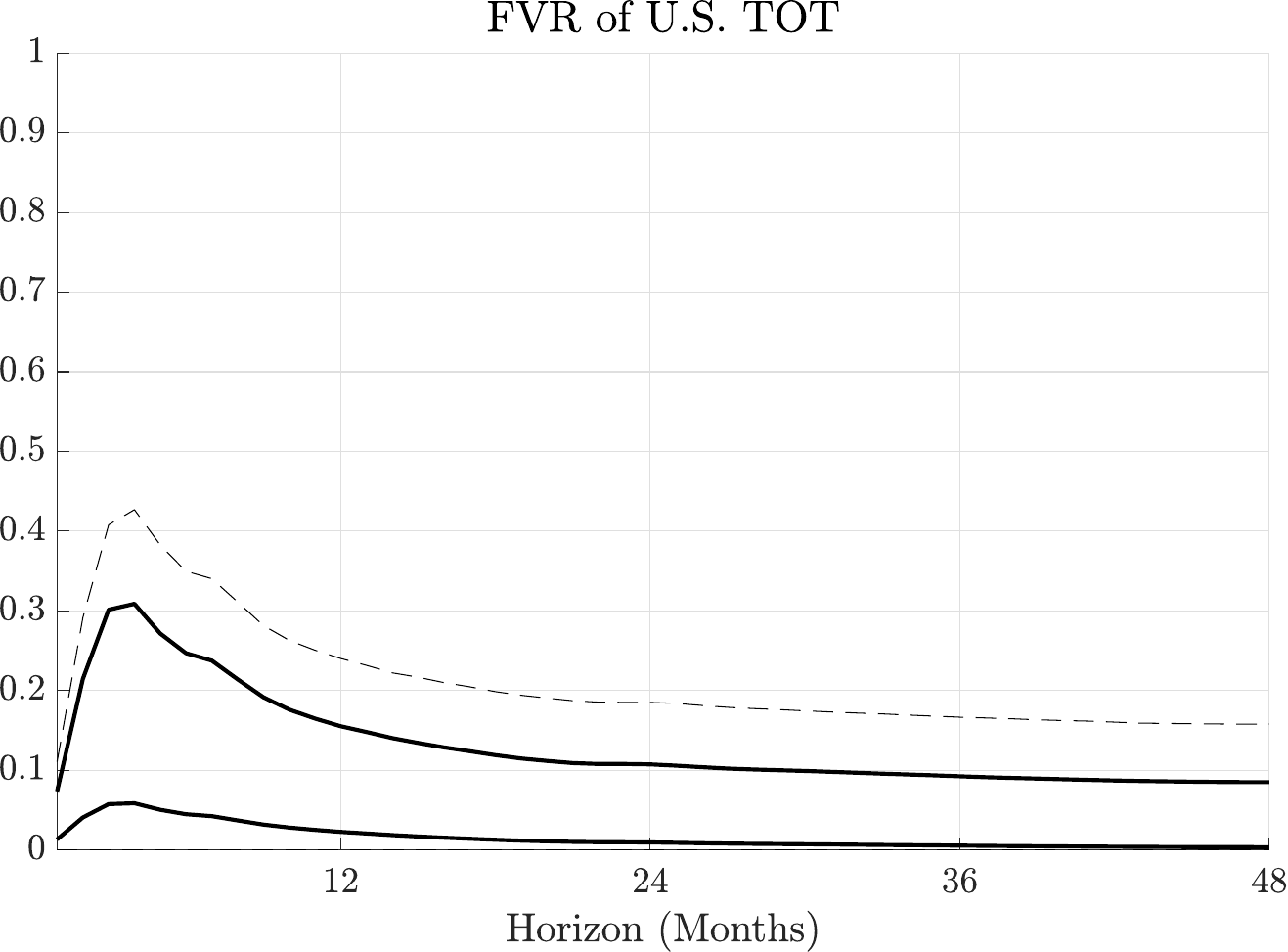} \\[2ex]
\caption{Point estimates and 90\% confidence intervals for the identified set of oil news shock FVRs, across different variables and forecast horizons, produced using our invertibility-robust SVMA-IV approach. For visual clarity, we force bias-corrected estimates/bounds to lie in $[0, 1]$.}
\label{fig:kaenzig_svmaiv_fvr_2}
\end{figure}

\begin{figure}[p]
\centering
\textsc{Oil news shock: SVAR-IV FVRs, Global Variables} \\[2ex]
\includegraphics[width=0.48\linewidth]{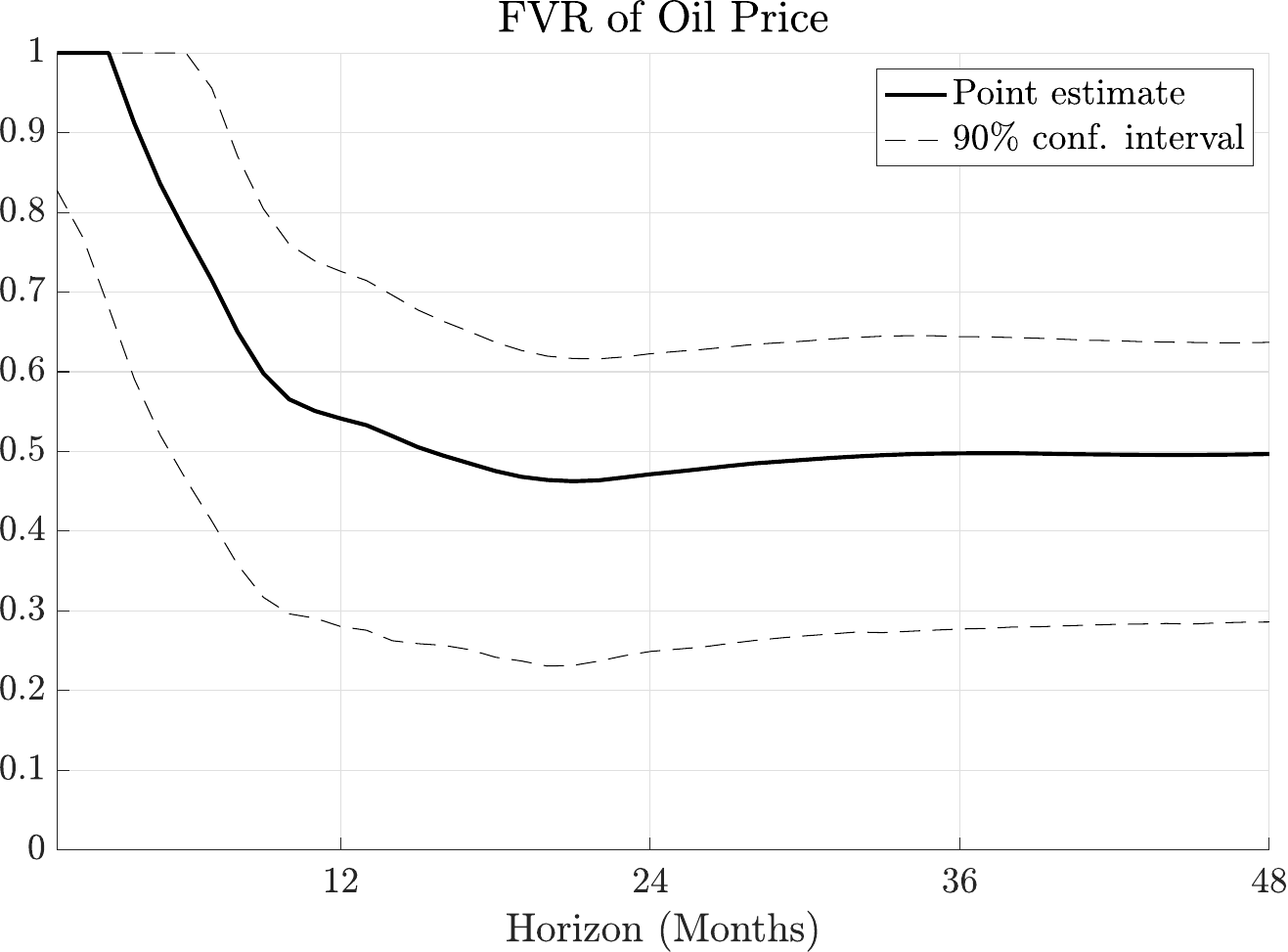} \includegraphics[width=0.48\linewidth]{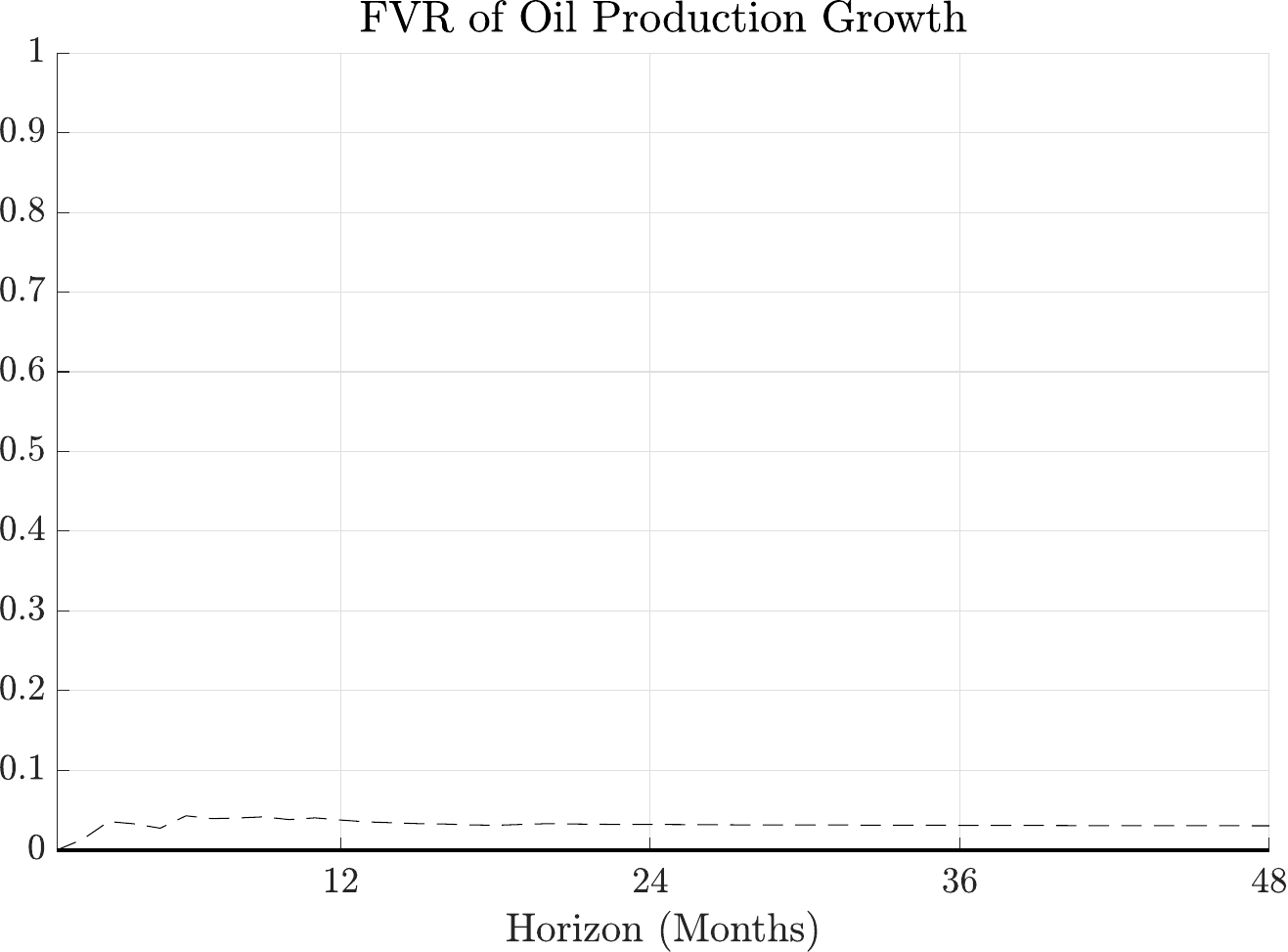} \\[2ex]
\includegraphics[width=0.48\linewidth]{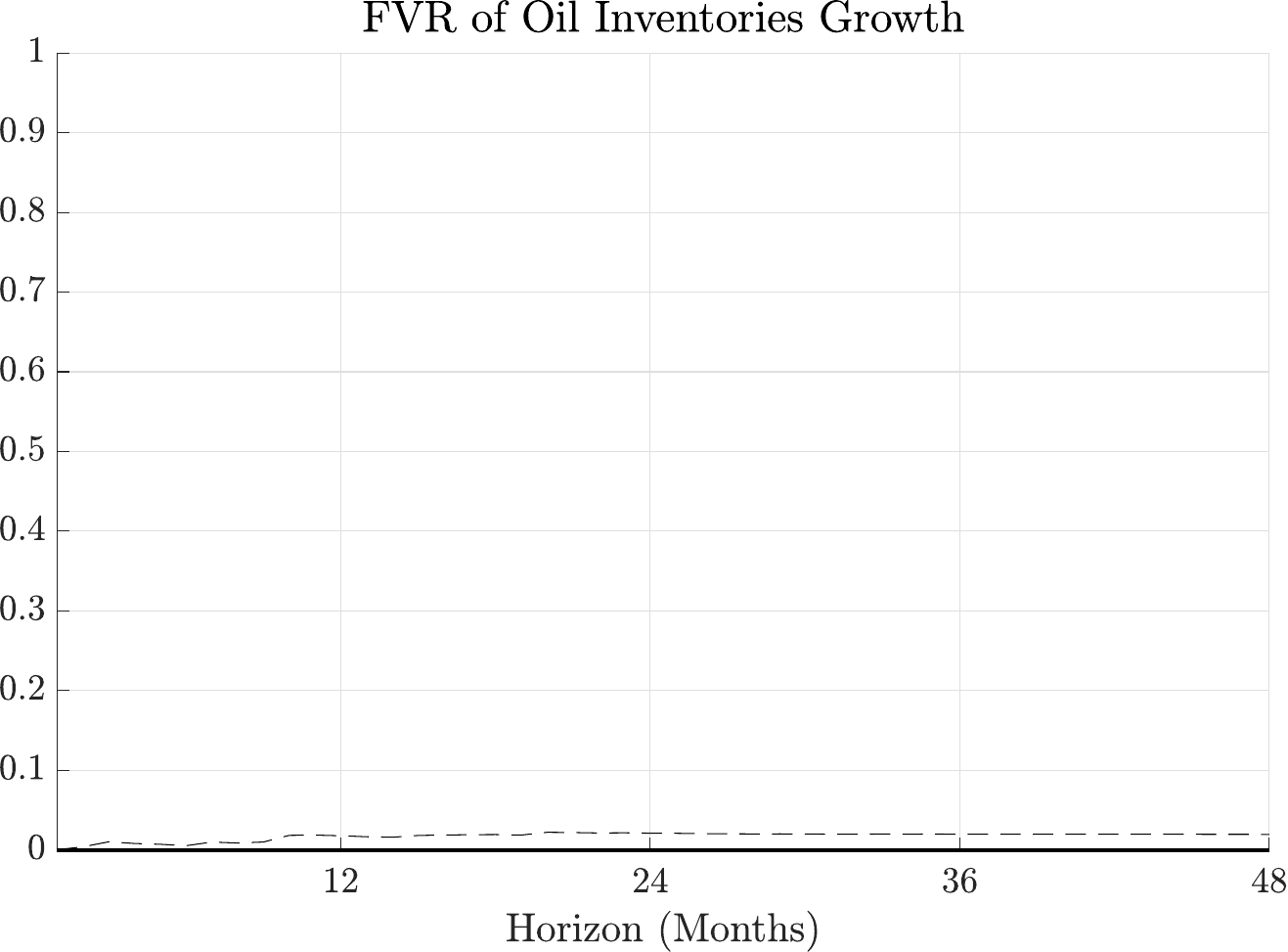} \includegraphics[width=0.48\linewidth]{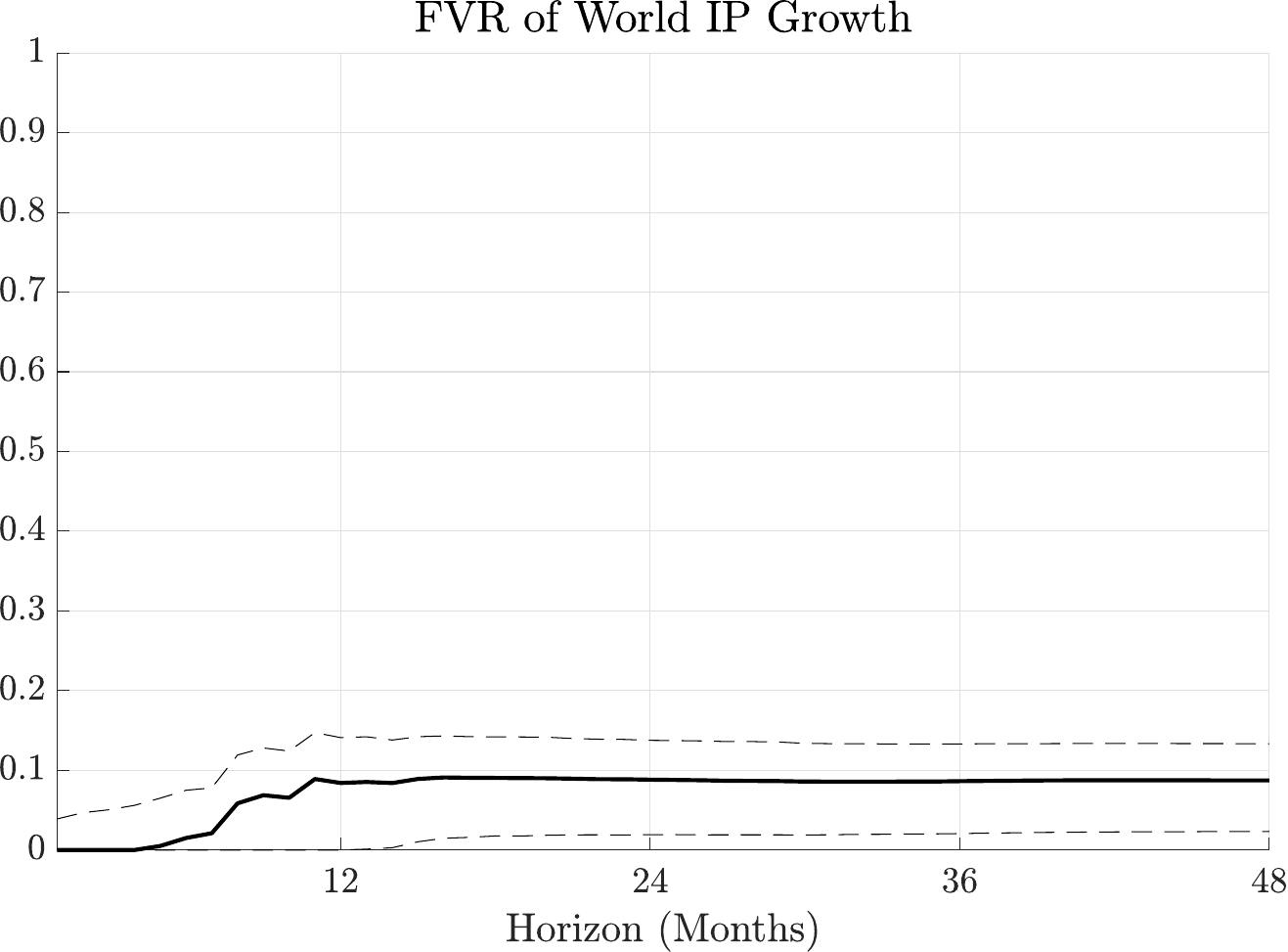} \\[2ex]
\caption{Point estimates and 90\% confidence intervals for oil news shock FVRs, across different variables and forecast horizons, produced using a conventional SVAR-IV approach. For visual clarity, we force bias-corrected estimates/bounds to lie in $[0, 1]$.}
\label{fig:kaenzig_svariv_fvr_1}
\end{figure}

\begin{figure}[p]
\centering
\textsc{Oil news shock: SVAR-IV FVRs, U.S. Variables} \\[2ex]
\includegraphics[width=0.48\linewidth]{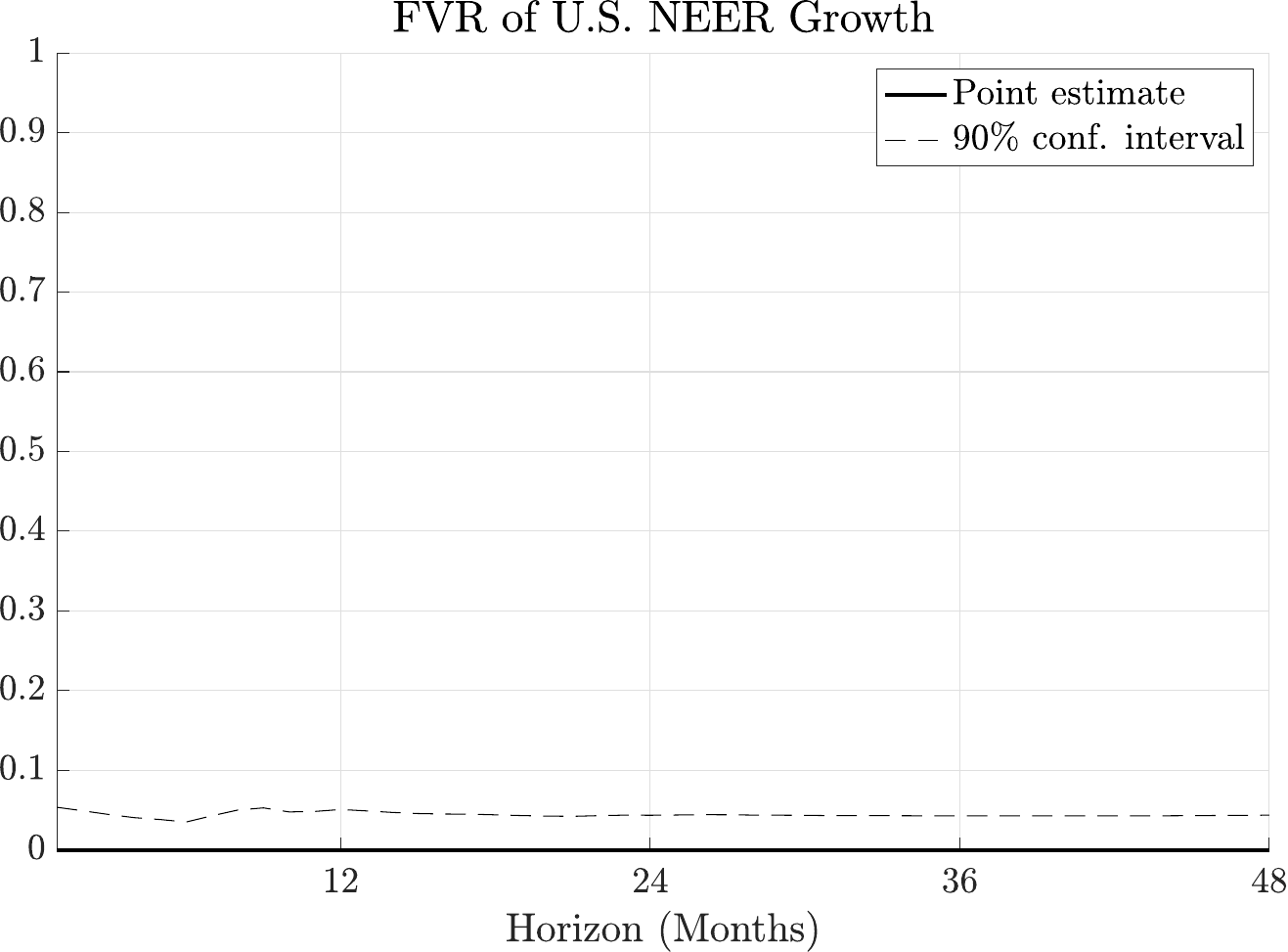} \includegraphics[width=0.48\linewidth]{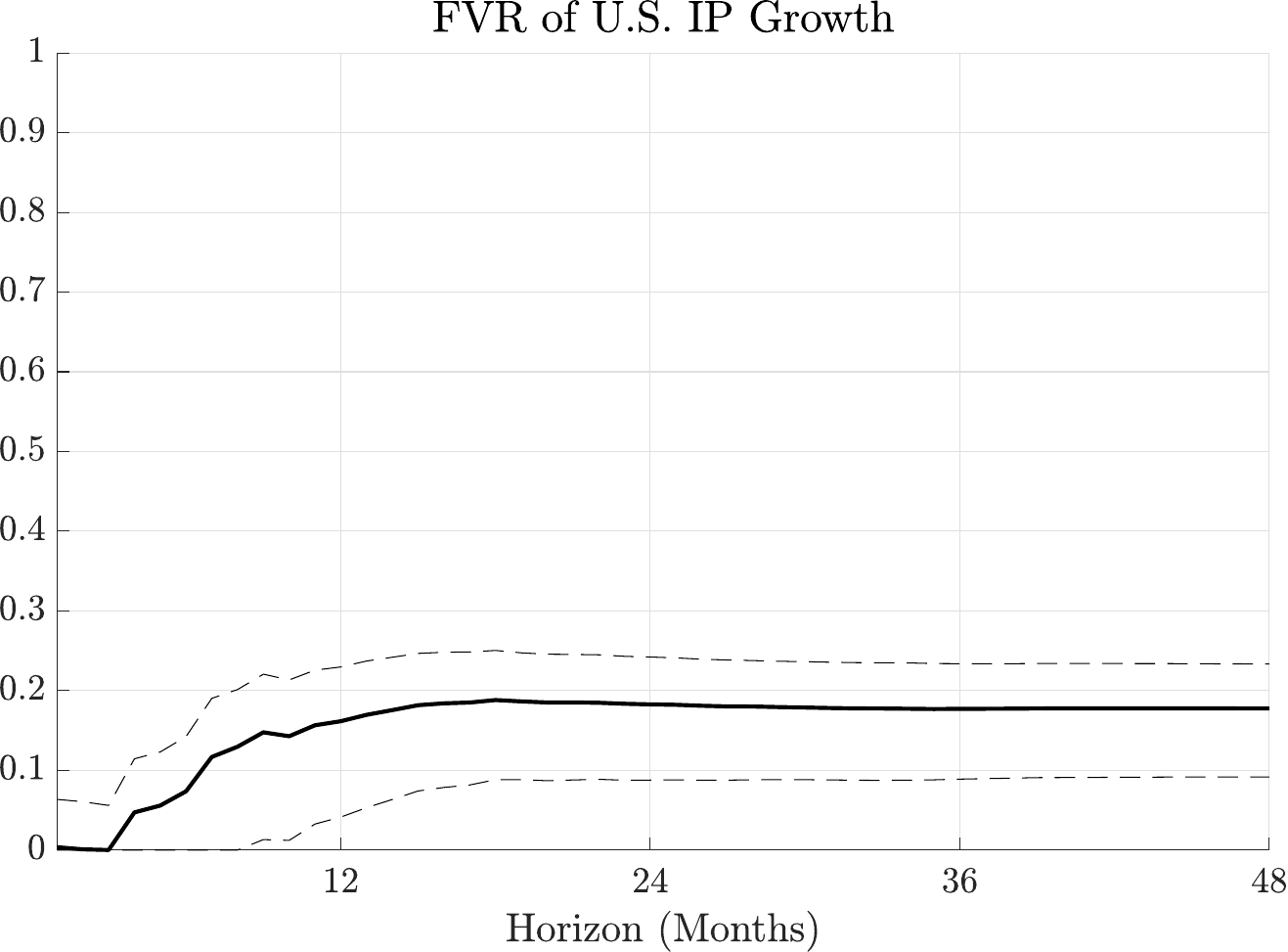} \\[2ex]
\includegraphics[width=0.48\linewidth]{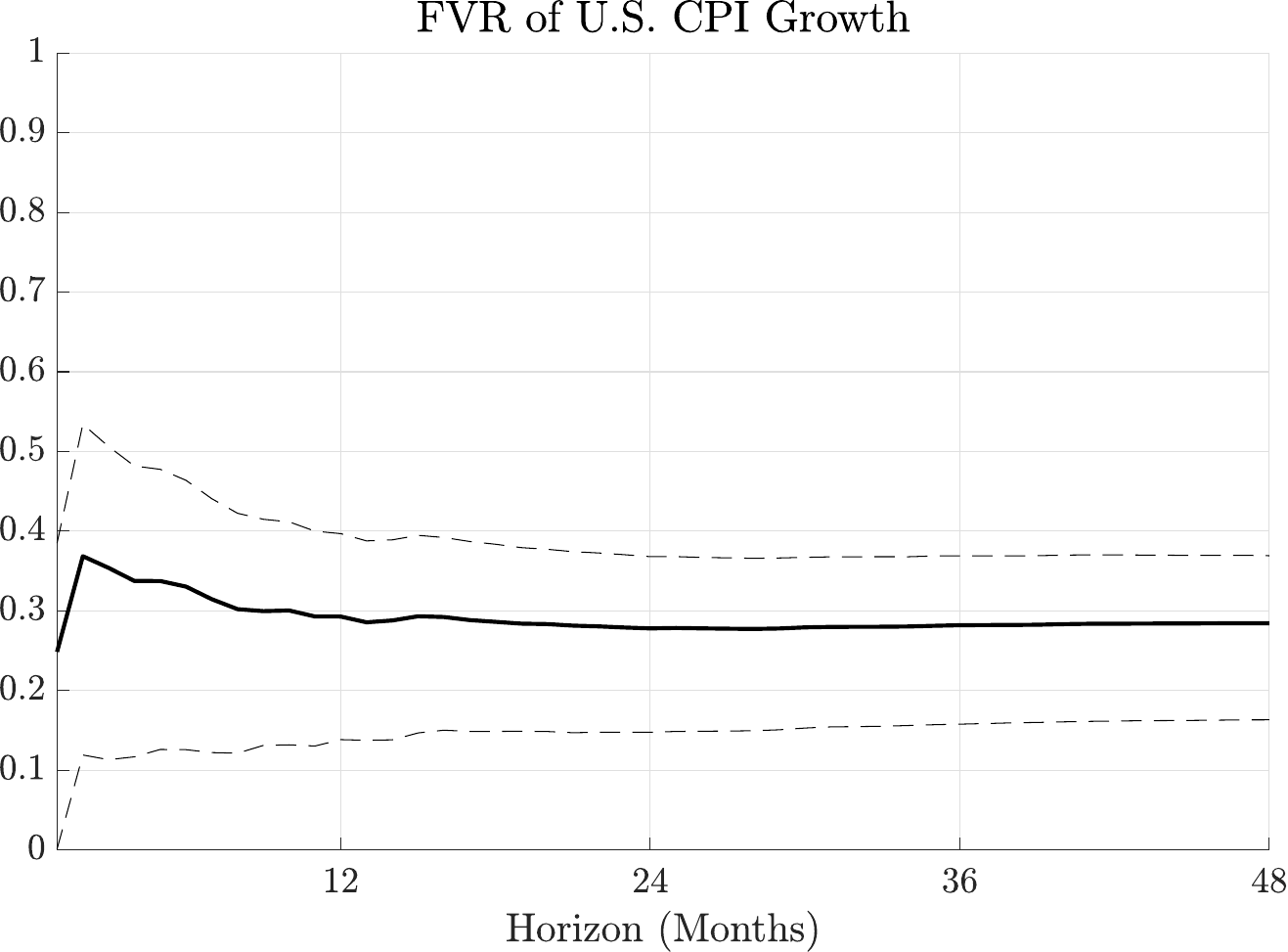} \includegraphics[width=0.48\linewidth]{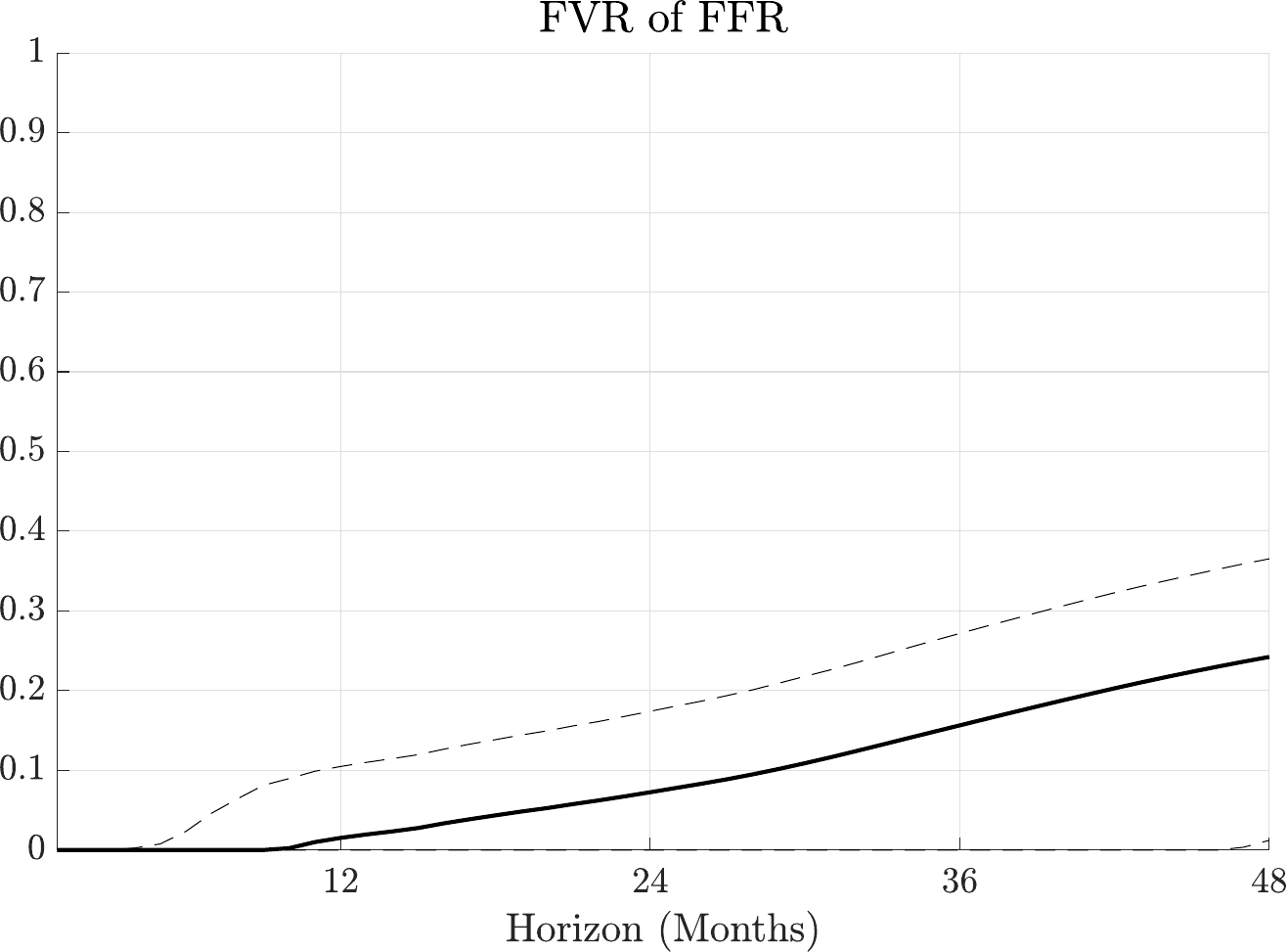} \\[2ex]
\includegraphics[width=0.48\linewidth]{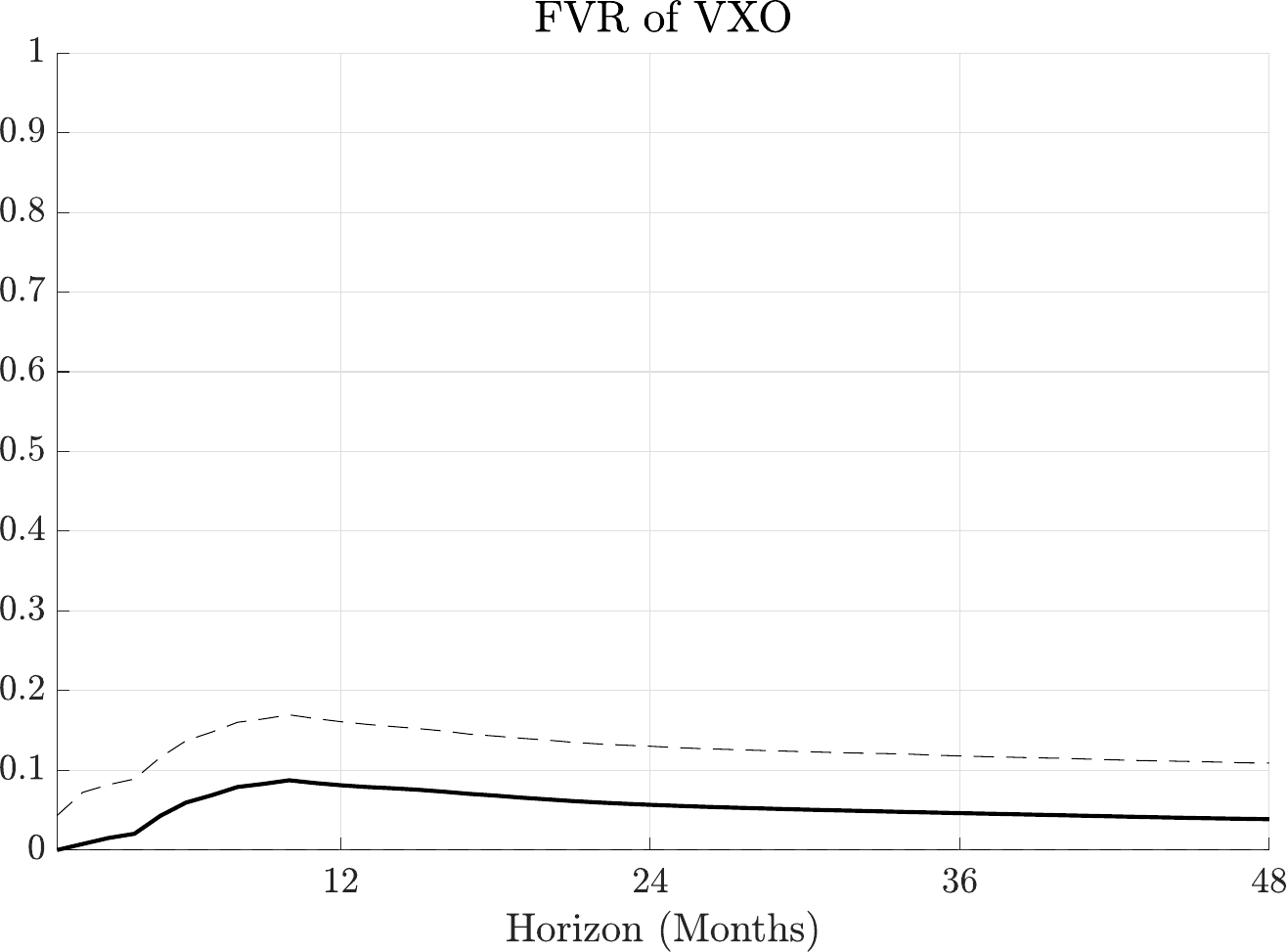} \includegraphics[width=0.48\linewidth]{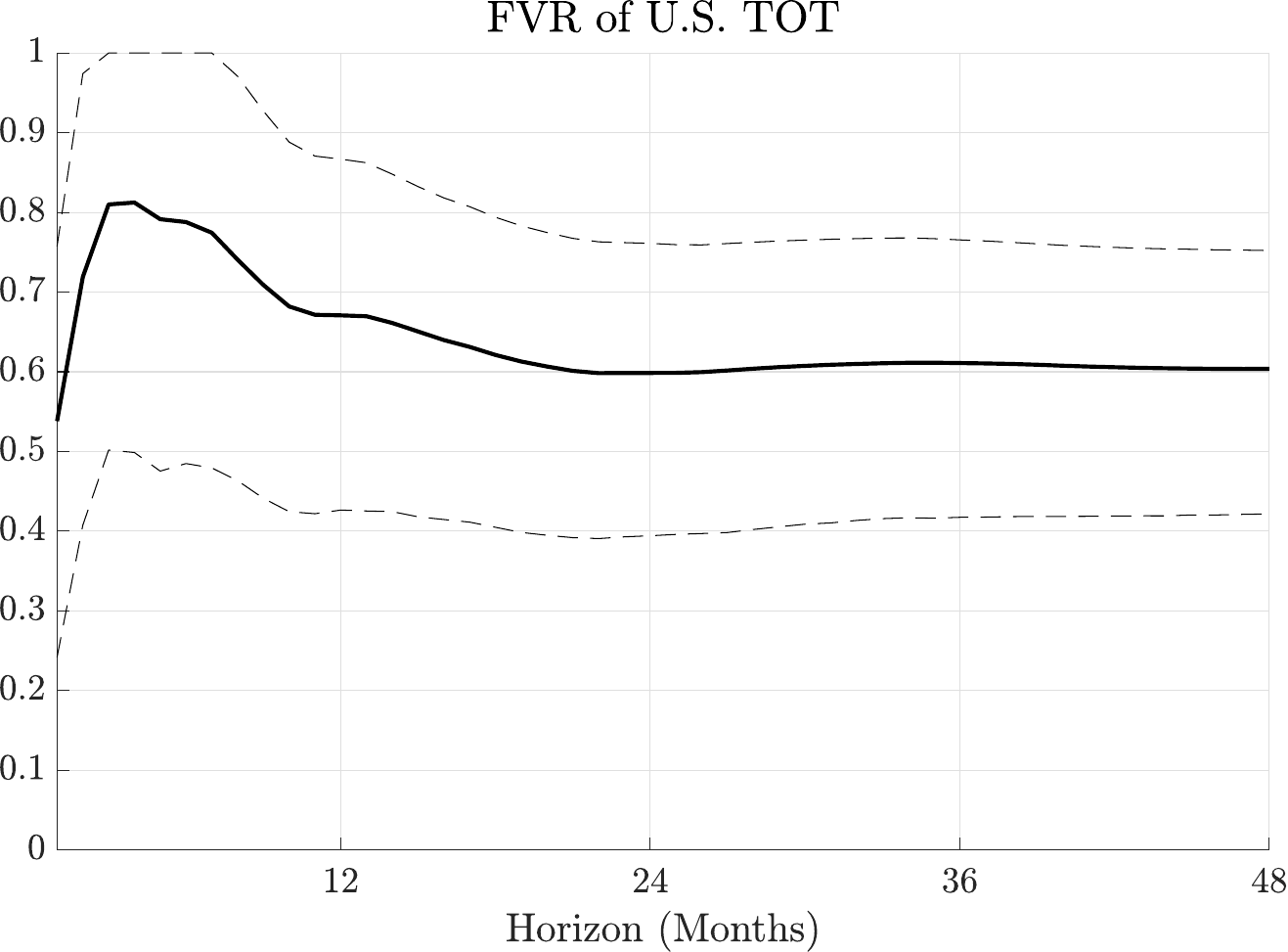} \\[2ex]
\caption{Point estimates and 90\% confidence intervals for oil news shock FVRs, across different variables and forecast horizons, produced using a conventional SVAR-IV approach. For visual clarity, we force bias-corrected estimates/bounds to lie in $[0, 1]$.}
\label{fig:kaenzig_svariv_fvr_2}
\end{figure}

\clearpage

\section{Nonparametric sieve VAR inference}
\label{sec:var_sieve}

In this appendix section we show that the bound estimates proposed in \cref{sec:how_to} are jointly asymptotically normal under nonparametric conditions on the DGP, as long as the VAR lag length is chosen to increase with the sample size at an appropriate rate. The nonparametric viewpoint does not change the practical steps necessary to implement the inference strategy; it only provides regularity conditions under which it is asymptotically innocuous to approximate the true VAR($\infty$) data generating process by a finite-lag VAR. We utilize the classic sieve VAR results of \citet{Lewis1985} (who build on the univariate results of \citealp{Berk1974}) to prove asymptotic normality of those nonlinear functionals of the estimated VAR spectrum that appear in our bounds. Our main result below is similar in spirit to the abstract theorem in \citet[Thm. 2]{Saikkonen2000}, although our regularity conditions are more easily verifiable as they are tailored to our parameters of interest (however, unlike \citeauthor{Saikkonen2000}, we only consider stationary data).

The purpose of this section is merely to demonstrate that existing sieve VAR theory implies that empirical SVMA-IV analysis can be carried out in a nonparametric fashion. We do not claim to provide conceptually new insights into  sieve VAR econometrics. Although here we only prove the validity of the sieve VAR strategy for delta method inference, we expect that similar results could be established for bootstrap sieve VAR inference in the SVMA-IV model, see \citet{Goncalves2007}, \citet{Meyer2015}, and references therein.

\subsection{Assumptions, parameters of interest, and estimator}
We first define the general class of parameters of interest for empirical SVMA-IV analysis, and we place assumptions on the DGP and VAR lag length. Our goal is to stay close to the set-up in \citet{Lewis1985}, so as to demonstrate how existing asymptotic results can be readily adapted to study sieve VAR estimators for SVMA-IV purposes.

We assume that the data are generated by a reduced-form VAR($\infty$) model with i.i.d. innovations. The observations are denoted by $W_t \equiv (y_t',z_t)' \in \mathbb{R}^{n_W}$, $t=1,2,\dots,T$, where $n_W \equiv n_y+1$. In order to make clear the connection with \citet{Lewis1985}, we assume that the data are known to have mean zero. It is straight-forward to extend all results to allow for non-zero means by including an intercept in the estimated VAR. Let $\|B\|\equiv (\tr(B'B))^{1/2}$ denote the Frobenius norm. 
\begin{asn} \label{asn:var_dgp}
The process $\lbrace W_t \rbrace$ is generated by the mean-zero stationary VAR($\infty$) model
\[A(L)W_t = e_t.\]
Here $A(z) \equiv I_{n_W} - \sum_{\ell=1}^\infty A_\ell z^\ell$ for $z \in \mathbb{C}$, and $A_\ell \in \mathbb{R}^{n_W\times n_W}$ for all $\ell$. We impose the following conditions:
\begin{enumerate}[i)]
\item $\det(A(z)) \neq 0$ for all $|z| \leq 1$, and $\sum_{\ell=1}^\infty \|A_\ell\| < \infty$.
\item $\lbrace e_t \rbrace$ is an $n_W$-dimensional i.i.d. process with $E(e_t)=0_{n_W \times 1}$, $\Sigma \equiv \var(e_t)$ is positive definite, and $E\|e_t\|^8<\infty$.
\end{enumerate}
\end{asn}
\noindent These conditions are the same as in \citet{Lewis1985}, except that we here assume that $e_t$ has 8 moments instead of just 4.\footnote{We only use more than four moments in the proofs of \cref{thm:var_sumcov,thm:var_sumnegl,thm:var_asynorm} below, where the extra moments make the arguments more transparent.} \citet{Meyer2015} discuss the generality of assuming a reduced-form VAR($\infty$) with i.i.d. disturbances, see also \citet{Kreiss2011} for more details in the univariate case. Assuming the SVMA-IV model \eqref{eqn:svma}--\eqref{eqn:iv} holds, the i.i.d. assumption on the one-step-ahead reduced-form forecast errors $e_t$ is automatically satisfied, provided that the structural shocks $(\varepsilon_t',v_t)'$ are themselves i.i.d. and either (i) invertible, or (ii) Gaussian (regardless of invertibility). Although we are here deliberately aiming at conceptual clarity rather than full generality, we expect it would be straight-forward to weaken the i.i.d.-ness assumption on $e_t$ by appealing to a suitable multivariate version of the sieve VAR result of \citet{Goncalves2007}, who assume heteroskedastic martingale difference innovations.

Next, we define the class of parameters of interest for empirical SVMA-IV analysis. Define the two matrix-valued functions
\[A_{\cos}(\omega) \equiv \sum_{\ell=1}^\infty A_\ell\cos(\omega\ell),\quad A_{\sin}(\omega) \equiv \sum_{\ell=1}^\infty A_\ell\sin(\omega\ell),\quad \omega \in [0,2\pi].\]
The parameter of interest is of the form
\[\psi \equiv \int_0^{2\pi} h(\omega)'g(A_{\cos}(\omega),A_{\sin}(\omega),\Sigma)\,d\omega,\]
where we define the functions $h \colon [0,2\pi] \to \mathbb{R}^K$ and $g \colon \mathcal{A}_\delta \times \mathbb{S}_{n_W} \to \mathbb{R}^K$, the set $\mathcal{A}_\delta = \lbrace (B_1,B_2) \in \mathbb{R}^{n_W \times n_W} \times \mathbb{R}^{n_W \times n_W} \colon |\det(I_{n_W}-B_1-i B_2)| \geq \delta \rbrace $, and the fixed number $\delta>0$ that is strictly smaller than $\inf_{\omega \in [0,2\pi]} |\det(A(e^{i\omega}))|$. $\mathbb{S}_{n_W}$ denotes the set of $n_W \times n_W$ symmetric positive definite matrices.

For appropriate choices of $h(\cdot)$ and $g(\cdot)$, the above class of parameters includes almost all the parameters/bounds in SVMA-IV analysis.\footnote{The only exception is the parameter $\sup_{\omega \in [0,2\pi]} s_{\tilde{z}^\dagger}(\omega)$, which is discussed in the main text.} For example, the class contains (i) elements $\Sigma_{ij}$ of $\Sigma$, (ii) the degree of recoverability $R_\infty^2 = \int_0^{2\pi} s_{y\tilde{z}}(\omega)^*s_{y}(\omega)^{-1} s_{y\tilde{z}}(\omega) \,d\omega$, and (iii) autocovariances $E(w_{i,t}w_{j,t-\ell}) = \int_0^{2\pi} e^{i\omega \ell} s_{w,ij}(\omega)\,d\omega$. Here $w_t \equiv (y_t',\tilde{z}_t)'$, and for all $\omega \in [0,2\pi]$,
\begin{align*}
s_{w}(\omega) = \left( \begin{array}{cc}
s_y(\omega) & s_{y\tilde{z}}(\omega) \\
s_{\tilde{z}y}(\omega) & s_{\tilde{z}}(\omega)
\end{array} \right) = \frac{1}{2\pi} &\left( \begin{array}{cc}
(I_{n_y},0_{n_y \times 1}) (A_{\cos}(\omega)+iA_{\sin}(\omega))^{-1} & 0_{n_y \times 1} \\
0_{1 \times n_y} & 1
\end{array} \right) \\
& \times \Sigma  \left( \begin{array}{cc}
 (A_{\cos}(\omega)-iA_{\sin}(\omega))^{-1'}(I_{n_y},0_{n_y \times 1})' & 0_{n_y \times 1} \\
0_{1 \times n_y} & 1
\end{array} \right).
\end{align*}
Other SVMA-IV parameters can be constructed as nonlinear transformations of a finite number of autocovariances. By the Cram\'{e}r-Wold device, it is without loss of generality to consider vector-valued (rather than matrix-valued) functions $h(\cdot)$ and $g(\cdot)$. In the following, we further assume $K=1$ so that both $h(\cdot)$ and $g(\cdot)$ are scalar. This eases the notation without sacrificing essential generality, as should be clear from the proofs.

We place certain smoothness conditions on the parameter of interest, thus permitting a delta method argument.

\begin{asn} \label{asn:var_param}
The function $h(\cdot)$ is continuous on $[0,2\pi]$. On any non-empty, compact subset of the domain $\mathcal{A}_\delta \times \mathbb{S}_{n_W}$, the function $g(\cdot,\cdot,\cdot)$ is twice continuously differentiable. Denote the partial derivatives by $g_1(B_1,B_2,S) \equiv \frac{\partial g(B_1,B_2,S)}{\partial \ve(B_1)}$, $g_2(B_1,B_2,S) \equiv \frac{\partial g(B_1,B_2,S)}{\partial \ve(B_2)}$, and $g_3(B_1,B_2,S) \equiv \frac{\partial g(B_1,B_2,S)}{\partial \ve(S)}$. At the true VAR parameters $\lbrace A_\ell \rbrace$ and $\Sigma$, each of the functions $\omega \mapsto g_j(A_{\cos}(\omega),A_{\sin}(\omega),\Sigma)$, $j=1,2,3$, belongs to $L_2(0,2\pi)$ (element-wise).
\end{asn}

\noindent The smoothness conditions in \cref{asn:var_param} are easily verified for all parameters of interest in SVMA-IV analysis, since \cref{asn:var_dgp} ensures that the true VAR spectrum is non-singular.

Finally, we define a sieve VAR estimator as the sample analogue of the population parameter of interest. For any $p \in \mathbb{N}$, define $X_t(p) \equiv (W_{t-1}',\dots,W_{t-p}')' \in \mathbb{R}^{n_Wp}$ and the least-squares VAR estimator
\[\hat{\beta}(p) \equiv \left(\hat{A}_1(p),\dots,\hat{A}_p(p) \right) \equiv \left(\sum_{t=p+1}^T W_t(p)X_t(p)'\right) \left( \sum_{t=p+1}^T X_t(p)X_t(p)' \right)^{-1}.\]
Let $\hat{\Sigma}(p) \equiv (T-p)^{-1}\sum_{t=p+1}^T \hat{e}_t(p)\hat{e}_t(p)'$, where $\hat{e}_t(p) \equiv W_t - \hat{\beta}(p)X_t(p)$. Define also
\[\hat{A}_{\cos}(\omega;p) \equiv \sum_{\ell=1}^{p} \hat{A}_\ell\cos(\omega\ell),\quad \hat{A}_{\sin}(\omega;p) \equiv \sum_{\ell=1}^p \hat{A}_\ell\sin(\omega\ell),\quad \omega \in [0,2\pi].\]
The VAR($p$) estimator of the parameter of interest $\psi$ is then
\[\hat{\psi}(p) \equiv \int_0^{2\pi} h(\omega)'g(\hat{A}_{\cos}(\omega;p),\hat{A}_{\sin}(\omega;p),\hat{\Sigma})\,d\omega.\]
The VAR lag length $p=p_T$ must be chosen to grow with the sample size $T$ at an appropriate rate, unless the true DGP is a finite-order VAR.
\begin{asn} \label{asn:var_lagorder}
$p_T \in \mathbb{N}$ is a deterministic function of the sample size $T$ such that $p_T^3/T \to 0$ and $T^{1/2}\sum_{\ell=p_T+1}^\infty \|A_\ell\| \to 0$ as $T \to \infty$.
\end{asn}
\noindent These conditions are adopted from \citet[Thm. 2]{Lewis1985}, see also \citet{Berk1974}. The last condition in \cref{asn:var_lagorder} amounts to oversmoothing (i.e., choosing the lag length $p$ so large that the variance dominates the mean square error), which ensures that the nonparametric bias does not show up in asymptotic limiting distributions. If the partial autocorrelations of the data decay exponentially fast with the lag length, \cref{asn:var_lagorder} is satisfied by choosing $p_T \propto T^\phi$ for any $\phi \in (0,1/3)$. If the true DGP is a finite-order VAR, we may select $p_T$ to be any constant greater than the true lag length.

\subsection{Main convergence results}
\label{sec:var_mainresults}
We now state our main results on the asymptotic normality of the sieve VAR estimator and the consistency of the asymptotic variance estimator.

In preparation for stating our results, define for all $T$ the vector $\nu_T = (\nu_{1,T}',\dots,\nu_{p_T,T}')' \in \mathbb{R}^{n_W^2p_T}$, where
\[\nu_{\ell,T} \equiv \int_0^{2\pi} h(\omega)\left\lbrace g_1(A_{\cos}(\omega),A_{\sin}(\omega),\Sigma)\cos(\omega \ell) + g_2(A_{\cos}(\omega),A_{\sin}(\omega),\Sigma)\sin(\omega \ell) \right\rbrace \, d\omega \in \mathbb{R}^{n_W^2}\]
for $\ell=1,2,\dots,p_T$. Define also
\[\xi \equiv \int_0^{2\pi} h(\omega)g_3(A_{\cos}(\omega),A_{\sin}(\omega),\Sigma)\, d\omega \in \mathbb{R}^{n_W^2}.\]
We also define the estimators $\hat{\nu}_T$ and $\hat{\xi}(p_T)$ of $\nu_T$ and $\xi$ obtained by substituting $A_{\cos}(\cdot)$ and $A_{\sin}(\cdot)$ with $\hat{A}_{\cos}(\cdot;p_T)$ and $\hat{A}_{\sin}(\cdot;p_T)$ in the above formulas. Finally, we define $\Gamma(p) \equiv E(X_t(p)X_t(p)')$ for all $p \in \mathbb{N}$ and the sample analogue $\hat{\Gamma}(p) \equiv (T-p)^{-1}\sum_{t=p+1}^T X_t(p)X_t(p)'$. In the rest of this section, all convergence statements are understood to be taken as $T \to \infty$.

Our first main proposition states that the sieve VAR estimator of the parameter of interest is asymptotically normal under our nonparametric conditions on the data generating process, the conditions on the estimated VAR lag order, and the regularity conditions on the parameter of interest.
\begin{prop} \label{thm:var_prop_asynorm}
Let \cref{asn:var_dgp,asn:var_param,asn:var_lagorder} hold. Assume $\sigma_{\psi}^2 \equiv \lim_{T\to\infty}  \nu_T'(\Gamma(p_T)^{-1} \otimes \Sigma)\nu_T + \xi'\var(e_t \otimes e_t)\xi$ is strictly positive and that the limit exists. Then
\[(T-p_T)^{1/2}(\hat{\psi}(p_T)-\psi) \stackrel{d}{\to} N(0,\sigma_{\psi}^2).\]
\end{prop}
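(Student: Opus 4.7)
The plan is a functional delta method built around the sieve VAR CLT of \citet[Thm.~2]{Lewis1985}. First, using the Lewis--Reinsel rates together with the lag-order restrictions of \cref{asn:var_lagorder}, I would show that with probability tending to one, $(\hat{A}_{\cos}(\omega;p_T), \hat{A}_{\sin}(\omega;p_T), \hat{\Sigma}(p_T))$ lies in a fixed compact subset of $\mathcal{A}_\delta \times \mathbb{S}_{n_W}$, uniformly in $\omega \in [0,2\pi]$; here I would use that $\sum_\ell \|A_\ell\|<\infty$ from \cref{asn:var_dgp} makes $A_{\cos}(\omega)$ and $A_{\sin}(\omega)$ continuous on the compact frequency domain with $|\det(A(e^{i\omega}))|$ bounded below by a fixed number strictly larger than $\delta$. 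On this event the smoothness assumed in \cref{asn:var_param} permits a second-order Taylor expansion of $g$ at $(A_{\cos}(\omega),A_{\sin}(\omega),\Sigma)$ for each $\omega$; integrating against $h(\omega)$ decomposes $\hat\psi(p_T)-\psi$ into (i) a linear stochastic term, (ii) a deterministic truncation bias, and (iii) a quadratic remainder.

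For (i), interchanging integral and sum and matching the Fourier-coefficient definition of $\nu_{\ell,T}$ yields
\[\sum_{\ell=1}^{p_T} \nu_{\ell,T}'\ve(\hat{A}_\ell(p_T)-A_\ell) + \xi'\ve(\hat{\Sigma}(p_T)-\Sigma) = \nu_T'\ve(\hat\beta(p_T)-\beta(p_T)) + \xi'\ve(\hat\Sigma(p_T)-\Sigma) + b_T,\]
where $\beta(p_T)$ denotes the population best $p_T$-lag linear projection coefficient matrix and $b_T$ absorbs the projection-versus-truncation mismatch. Both $b_T$ and the truncation bias (ii) are bounded by a constant times $\sum_{\ell>p_T}\|A_\ell\|$, using that each $\nu_{\ell,T}$ is a Fourier coefficient of an integrable function and hence uniformly bounded in $\ell$; \cref{asn:var_lagorder} then annihilates them after multiplication by $\sqrt{T}$. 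For the quadratic remainder (iii), the Hessian of $g$ is bounded on the compact set from step one, and Parseval's identity gives
\[\int_0^{2\pi}\|\hat{A}_{\cos}(\omega;p_T)-A_{\cos}(\omega)\|^2\,d\omega \leq \pi\sum_{\ell=1}^{p_T}\|\hat{A}_\ell(p_T)-A_\ell\|^2 + \pi\sum_{\ell>p_T}\|A_\ell\|^2,\]
which is $O_p(p_T/T)+o(T^{-1/2})$ by \citet{Lewis1985} and \cref{asn:var_lagorder}; the integrated quadratic remainder is therefore $o_p(T^{-1/2})$ under $p_T^3/T\to 0$.

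It remains to establish a joint CLT at rate $\sqrt{T-p_T}$ for the pair $(\nu_T'\ve(\hat\beta(p_T)-\beta(p_T)),\; \xi'\ve(\hat\Sigma(p_T)-\Sigma))$ via the Cram\'er--Wold device. The first coordinate, normalized by $\{\nu_T'(\Gamma(p_T)^{-1}\otimes\Sigma)\nu_T\}^{1/2}$, is asymptotically standard normal by \citet[Thm.~2]{Lewis1985} applied to the sequence of unit vectors $\nu_T/\|\nu_T\|$. The second coordinate is handled by approximating $\hat\Sigma(p_T)-\Sigma$ by $(T-p_T)^{-1}\sum_t(e_te_t'-\Sigma)$ up to $o_p(T^{-1/2})$ error---this is where the eight-moment assumption and $p_T^3/T\to 0$ enter---and then invoking the ordinary i.i.d.\ CLT to give the variance $\xi'\var(e_t\otimes e_t)\xi$. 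The cross-covariance between the two scores vanishes in the limit because $e_te_t'-\Sigma$ has mean zero and, by i.i.d.-ness of $\{e_t\}$, is uncorrelated with the lagged regressor vector $X_t(p_T)\otimes e_t$; the limit variance is therefore additive, matching the stated $\sigma_\psi^2$. The main technical obstacle is the uniform-in-$\omega$ compactness argument together with the careful tracking of the cross-score covariance as the regression dimension $n_W^2 p_T$ grows with $T$.
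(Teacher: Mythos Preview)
Your outline matches the paper's architecture closely: the paper also proceeds via (a) a uniform-in-$\omega$ compactness lemma placing $(\hat A_{\cos},\hat A_{\sin},\hat\Sigma)$ in a fixed compact set on which $g$ is $C^2$, (b) a second-order Taylor expansion yielding the linearization $\nu_T'\ve(\hat\beta(p_T)-\beta(p_T))+\xi'\ve(\hat\Sigma(p_T)-\Sigma)$ plus $o_p((T-p_T)^{-1/2})$ truncation-bias and quadratic-remainder terms, and (c) a joint CLT for the two scores. Two places deserve tightening.

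First, the paper's proof of the proposition proper consists almost entirely of showing $\sup_T\|\nu_T\|^2<\infty$, via Bessel's inequality: the $\nu_{\ell,T}$ are Fourier coefficients of the $L_2$ functions $\omega\mapsto h(\omega)g_j(A_{\cos}(\omega),A_{\sin}(\omega),\Sigma)$, so their squared norms are \emph{summable} over $\ell$, not merely bounded. This is what licenses applying the Lewis--Reinsel CLT with weight vector $\nu_T$ itself. Your unit-vector trick avoids this but then requires $\|\nu_T\|$ to converge in order to unwind the normalization; alternatively, observe that under \cref{asn:var_dgp} the eigenvalues of $\Gamma(p_T)^{-1}\otimes\Sigma$ are bounded above and below uniformly in $T$, so boundedness of $\|\nu_T\|^2$ already follows from the assumed existence of $\sigma_\psi^2$.

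Second, and more substantively, marginal asymptotic normality of each score together with vanishing cross-covariance does \emph{not} imply joint asymptotic normality, so your Cram\'er--Wold step is incomplete as written. The paper instead writes an \emph{arbitrary} linear combination of the two scores as a single martingale-difference array $J_{t,T}\propto \nu_T'(\Gamma(p_T)^{-1}X_t(p_T)\otimes e_t)+\lambda'\ve(e_te_t'-\Sigma)$ (after the approximations you allude to, which are the paper's \cref{thm:var_approx,thm:var_Sigmaconv}) and verifies the conditional-variance and negligibility conditions for a martingale CLT directly on $J_{t,T}$; the term-by-term orthogonality you cite is precisely what makes the limiting variance additive in that computation. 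You should make this unified MDS step explicit rather than appealing to two separate CLTs.

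One minor point: in the paper's conventions $\beta(p_T)=(A_1,\dots,A_{p_T})$ is the truncation of the true AR($\infty$) coefficients, not the population best-$p_T$-lag projection, so your ``projection-versus-truncation mismatch'' term $b_T$ does not arise.
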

\noindent Under our regularity conditions on the parameter of interest, the convergence rate of the sieve VAR estimator $\hat{\psi}(p_T)$ is $(T-p_T)^{-1/2} = O(T^{-1/2})$. The condition that $\sigma_{\psi}^2$ exists and is nonzero rules out degenerate parameters that can be estimated super-consistently. This condition could for example be violated if the true parameter of interest is on the boundary of its parameter space (e.g., if the true FVD is 0, or the true degree of invertibility is 1). Such issues are not unique to SVMA-IV and could similarly arise in SVAR inference.

Our second main proposition states that the usual delta method standard errors for a VAR($p_T$) model are valid asymptotically.

\begin{prop} \label{thm:var_prop_consvar}
Let the assumptions of \cref{thm:var_prop_asynorm} hold. Let $\hat{\sigma}_{\psi}^2(p_T) \equiv \hat{\nu}_T'(\hat{\Gamma}(p_T)^{-1} \otimes \hat{\Sigma}(p_T))\hat{\nu}_T + \hat{\xi}(p_T)'\hat{\Xi}(p_T)\hat{\xi}(p_T)$, where $\hat{\chi}_t(p_T) \equiv \ve(\hat{e}_t(p_T)\hat{e}_t(p_T)' - \hat{\Sigma}(p_T))$ and $\hat{\Xi}(p_T) \equiv (T-p_T)^{-1}\sum_{t=p_T+1}^T \hat{\chi}_t(p_T) \hat{\chi}_t(p_T)' $. Then
\[\hat{\sigma}_{\psi}^2(p_T) \stackrel{p}{\to} \sigma_{\psi}^2.\]
\end{prop}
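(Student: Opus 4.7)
The plan is to split $\hat{\sigma}_\psi^2(p_T) - \sigma_\psi^2$ along the two additive pieces dictated by the definition of $\sigma_\psi^2$ and reduce each piece to previously established ingredient convergences plus one new moment LLN. Write
\[\hat{\sigma}_\psi^2(p_T) - \sigma_\psi^2 = \Delta_1 + \Delta_2 + o(1),\]
where $\Delta_1 \equiv \hat{\nu}_T'(\hat{\Gamma}(p_T)^{-1} \otimes \hat{\Sigma}(p_T))\hat{\nu}_T - \nu_T'(\Gamma(p_T)^{-1} \otimes \Sigma)\nu_T$, $\Delta_2 \equiv \hat{\xi}(p_T)'\hat{\Xi}(p_T)\hat{\xi}(p_T) - \xi'\var(e_t \otimes e_t)\xi$, and the $o(1)$ absorbs the deterministic difference between $\nu_T'(\Gamma(p_T)^{-1}\otimes\Sigma)\nu_T + \xi'\var(e_t\otimes e_t)\xi$ and its (assumed) limit. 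It then suffices to show $\Delta_1 = o_p(1)$ and $\Delta_2 = o_p(1)$.

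For $\Delta_1$ I would apply the standard perturbation identity
\[\hat{a}'\hat{B}\hat{a} - a'Ba = (\hat{a}-a)'\hat{B}(\hat{a}-a) + 2a'\hat{B}(\hat{a}-a) + a'(\hat{B}-B)a,\]
with $a=\nu_T$ and $B=\Gamma(p_T)^{-1}\otimes\Sigma$, bounding each term by an operator-norm-times-vector-norm product. The essential ingredients $\|\hat{\nu}_T-\nu_T\|=o_p(1)$, $\|\hat{\Gamma}(p_T)^{-1}-\Gamma(p_T)^{-1}\|_{op}=o_p(1)$, and $\|\hat{\Sigma}(p_T)-\Sigma\|=o_p(1)$ have already been established in the proof of \cref{thm:var_prop_asynorm}, via \citet{Lewis1985} under \cref{asn:var_dgp,asn:var_lagorder}. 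The auxiliary bounds needed are $\|\nu_T\|=O(1)$ -- which follows from Parseval applied to the $L_2$ integrand given in \cref{asn:var_param} whose Fourier coefficients are $\nu_{\ell,T}$ -- and $\|\Gamma(p_T)^{-1}\|_{op}=O(1)$ uniformly in $T$, a standard consequence of \cref{asn:var_dgp}(i) ensuring that the spectral density of $W_t$ is bounded away from zero. Combining these bounds with $\|A\otimes B\|_{op}=\|A\|_{op}\|B\|_{op}$ gives $\Delta_1=o_p(1)$.

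For $\Delta_2$ the same perturbation identity applies with $a=\xi$ and $B=\var(e_t\otimes e_t)$, and $\|\hat{\xi}(p_T)-\xi\|=o_p(1)$ is proved exactly as for $\hat{\nu}_T$. The genuinely new ingredient is $\hat{\Xi}(p_T) - \var(e_t \otimes e_t) = o_p(1)$. To prove this I would write $\hat{\chi}_t(p_T)=\chi_t+r_t$, where $\chi_t\equiv\ve(e_te_t'-\Sigma)$ and
\[r_t \equiv \ve\bigl\{(\hat{e}_t(p_T)-e_t)\hat{e}_t(p_T)'+e_t(\hat{e}_t(p_T)-e_t)'\bigr\}-\ve(\hat{\Sigma}(p_T)-\Sigma).\]
Expanding $\hat{\Xi}(p_T)=(T-p_T)^{-1}\sum_t\hat{\chi}_t\hat{\chi}_t'$ produces a leading sample average $(T-p_T)^{-1}\sum_t\chi_t\chi_t'$ plus cross and quadratic remainders involving $r_t$. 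The leading term converges to $\var(\chi_t)=\var(e_t\otimes e_t)$ by the i.i.d. law of large numbers, which requires precisely that $\chi_t\chi_t'$ be integrable; this is where the strengthened moment condition $E\|e_t\|^8<\infty$ in \cref{asn:var_dgp}(ii) pays off. The remainder terms are controlled via Cauchy--Schwarz using the uniform $L_2$ bound $(T-p_T)^{-1}\sum_t\|\hat{e}_t(p_T)-e_t\|^2=o_p(1)$, which is a by-product of the Lewis proof combining the oversmoothing condition $T^{1/2}\sum_{\ell>p_T}\|A_\ell\|\to 0$ in \cref{asn:var_lagorder} with the coefficient-error bound $\|\hat{\beta}(p_T)-\beta(p_T)\|=O_p((p_T/T)^{1/2})$; the cross products additionally invoke $E\|e_t\|^4<\infty$.

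The main obstacle is not any single step in isolation but rather keeping track of the interplay between the growing dimension $n_W^2p_T$ of the ambient vectors and matrices and the $o_p$ rates from the Lewis sieve theory. In particular, upgrading the uniform-in-$\omega$ convergence $\sup_\omega\|\hat{A}_{\cos}(\omega;p_T)-A_{\cos}(\omega)\|=o_p(1)$ -- which itself follows from $\sum_{\ell\leq p_T}\|\hat{A}_\ell-A_\ell\|\lesssim(p_T^3/T)^{1/2}=o_p(1)$ via Cauchy--Schwarz and Lewis's $\ell_2$ bound on the coefficient errors -- to the \emph{global} statement $\|\hat{\nu}_T-\nu_T\|=o_p(1)$ requires using the twice-continuous differentiability of $g$ from \cref{asn:var_param} together with a dominated-convergence argument on compact subsets of $\mathcal{A}_\delta\times\mathbb{S}_{n_W}$. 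Once this bookkeeping is in hand, the remaining algebra is routine.
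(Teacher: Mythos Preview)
Your proposal is essentially correct and follows the same decomposition and perturbation-identity route as the paper. Two small corrections: (i) the convergence $\|\hat{\nu}_T-\nu_T\|=o_p(1)$ is not a by-product of the proof of \cref{thm:var_prop_asynorm} (that proof only shows $\|\nu_T\|=O(1)$); it must be proved here, exactly as you sketch in your final paragraph, and the paper does so via the Lipschitz bound from twice differentiability together with \cref{thm:var_Acossinconv}, obtaining the crude but sufficient rate $O_p((p_T^3/T)^{1/2})$; (ii) for the quadratic remainder $(T-p_T)^{-1}\sum_t r_tr_t'$ in your expansion of $\hat{\Xi}(p_T)$, the sample-$L_2$ bound $(T-p_T)^{-1}\sum_t\|\hat{e}_t(p_T)-e_t\|^2=o_p(1)$ is not enough, because $\|r_t\|^2$ contains the product $\|\hat{e}_t-e_t\|^2\|e_t\|^2$ and Cauchy--Schwarz on the sample average then produces $(T-p_T)^{-1}\sum_t\|\hat{e}_t-e_t\|^4$. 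The paper establishes this fourth-moment bound directly from $\|\hat{\beta}(p_T)-\beta(p_T)\|=O_p((p_T/T)^{1/2})$, $E\|X_t(p_T)\|^4=O(p_T^2)$, and the oversmoothing part of \cref{asn:var_lagorder}; the same ingredients you already invoke suffice.
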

\noindent Observe that $\hat{\sigma}_{\psi}^2(p_T)$ is precisely the asymptotic variance estimator for $\hat{\psi}(p_T)$ that one would compute from the delta method formula based on a VAR($p_T$) model for the data.

To summarize, \cref{thm:var_prop_asynorm,thm:var_prop_consvar} imply that delta method inference based on the estimated VAR($p_T$) process is valid asymptotically even if the true DGP is a VAR($\infty$). Hence, the partial identification robust confidence intervals proposed in \cref{sec:how_to} are valid under our regularity conditions. This conclusion is consistent with the finite-sample simulation evidence presented in \cref{sec:simulation}.

\clearpage

\section{Additional proofs and auxiliary lemmas}
Here we prove \cref{thm:semidef} and all additional results stated in this appendix. We first prove results related to the SVMA-IV identification analysis. Then we address the sieve VAR convergence results.

\subsection{Proof of \texorpdfstring{\cref{thm:semidef}}{Lemma \ref{thm:semidef}}}
\label{sec:proof_semidef}
We focus on the semidefiniteness statement. Decompose $B = B^{1/2}B^{1/2*}$ and define $\tilde{b} = B^{-1/2}b$. The statement of the lemma is equivalent with the statement that $I_n - x^{-1}\tilde{b}\tilde{b}^*$ is positive semidefinite if and only if $x \geq b^*b$. Let $\nu$ be an arbitrary $n$-dimensional complex vector satisfying $\nu^*\nu=1$. Then
\[\nu^*\left(I_n - x^{-1}\tilde{b}\tilde{b}^*\right)\nu = 1 - \frac{\tilde{b}^*\tilde{b}}{x} \cos^2\left(\theta(\nu,\tilde{b})\right),\]
where $\theta(\nu,\tilde{b})$ is the angle between $\nu$ and $\tilde{b}$. Evidently, $x^{-1} \tilde{b}^*\tilde{b} \leq 1$ is precisely the condition needed to ensure that the above display is nonnegative for every choice of $\nu$. \qed

\subsection{Auxiliary lemma for proof of \texorpdfstring{\cref{thm:identif_fvd}}{Proposition \ref{thm:identif_fvd}}}

\begin{lem} \label{thm:forec_var_incr}
Let $x_t$ and $\tilde{x}_t$ be two stationary $n$-dimensional Gaussian time series whose spectral densities $s_x(\omega)$ and $s_{\tilde{x}}(\omega)$ are such that $s_{\tilde{x}}(\omega)-s_x(\omega)$ is positive semidefinite for all $\omega \in [0,2\pi]$. Then $\var(\mu'x_{t+\ell} \mid \lbrace x_\tau \rbrace_{-\infty<\tau\leq t}) \leq \var(\mu'\tilde{x}_{t+\ell} \mid \lbrace \tilde{x}_\tau \rbrace_{-\infty<\tau\leq t})$ for all $\ell=1,2,\dots$ and all constant vectors $\mu \in \mathbb{R}^n$.
\end{lem}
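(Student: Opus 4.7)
The natural strategy is a coupling argument: build an auxiliary Gaussian process whose increments over $x_t$ bring it up to $\tilde{x}_t$ in distribution, and then use monotonicity of conditional variance in the conditioning $\sigma$-algebra, together with independence.

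Concretely, the first step is to construct, on a possibly enlarged probability space, a stationary mean-zero Gaussian process $d_t$ that is independent of $x_t$ and has spectral density $s_d(\omega) \equiv s_{\tilde{x}}(\omega) - s_x(\omega)$. This is legitimate because by hypothesis $s_d(\omega)$ is Hermitian positive semidefinite for every $\omega$; its Fourier coefficients therefore form a nonnegative definite matrix autocovariance sequence, so the Kolmogorov extension theorem yields such a Gaussian process, which can be chosen independent of $x_t$ on a product space. Then $x_t + d_t$ is mean-zero stationary Gaussian with spectral density $s_x + s_d = s_{\tilde{x}}$, and since the finite-dimensional distributions of a mean-zero Gaussian process are pinned down by the autocovariances (equivalently, by the spectral density), the process $\{x_t + d_t\}$ has the same law as $\{\tilde{x}_t\}$. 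The conditional variance $\var(\mu'\tilde{x}_{t+\ell}\mid \{\tilde{x}_\tau\}_{\tau \leq t})$ depends only on this joint law, so it suffices to prove the inequality with $\tilde{x}_t$ replaced by $x_t + d_t$.

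Next, I compare $\sigma$-algebras. Let $\mathcal{G}_t \equiv \sigma(\{x_\tau\}_{\tau \leq t}) \vee \sigma(\{d_\tau\}_{\tau \leq t})$ and $\mathcal{F}_t \equiv \sigma(\{x_\tau + d_\tau\}_{\tau \leq t})$; clearly $\mathcal{F}_t \subseteq \mathcal{G}_t$. Since the combined process $(x_t,d_t)$ is jointly Gaussian, conditional variances are non-random, and enlarging the conditioning $\sigma$-algebra weakly reduces them, giving
\[
\var\bigl(\mu'(x_{t+\ell}+d_{t+\ell}) \,\big|\, \mathcal{F}_t\bigr) \;\geq\; \var\bigl(\mu'(x_{t+\ell}+d_{t+\ell}) \,\big|\, \mathcal{G}_t\bigr).
\]
By independence of $\{x_t\}$ and $\{d_t\}$, the right-hand side splits as $\var(\mu'x_{t+\ell} \mid \{x_\tau\}_{\tau \leq t}) + \var(\mu'd_{t+\ell} \mid \{d_\tau\}_{\tau \leq t})$, which is at least $\var(\mu'x_{t+\ell} \mid \{x_\tau\}_{\tau \leq t})$ because the extra term is nonnegative. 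Chaining these inequalities and invoking the distributional equivalence from the previous step delivers the claim.

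The one substantive point to verify carefully is the existence of the coupling process $d_t$ with spectral density $s_{\tilde{x}} - s_x$; this reduces to checking that a matrix-valued Hermitian positive semidefinite integrable function on $[0,2\pi]$ is always the spectral density of some stationary (Gaussian) process, which is standard (see, e.g., \citealp{Hannan1970}, Ch.~I). All other steps are elementary properties of jointly Gaussian distributions. No subtlety arises at the boundary $\ell = 1$, and nothing in the argument uses invertibility of $s_x$ or $s_{\tilde{x}}$.
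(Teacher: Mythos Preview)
Your proposal is correct and follows essentially the same approach as the paper: construct an independent Gaussian process with spectral density $s_{\tilde{x}}-s_x$, couple it to $x_t$ so that the sum has the law of $\tilde{x}_t$, and then combine monotonicity of conditional variance in the conditioning $\sigma$-algebra with the independence-based variance decomposition. The paper's proof is identical in structure, differing only in notation (it calls the auxiliary process $\nu_t$) and in being slightly more terse about the existence step.
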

\begin{proof}
We may define an $n$-dimensional stationary Gaussian process $\nu_t$ with spectral density $s_{\nu}(\omega)=s_{\tilde{x}}(\omega)-s_x(\omega)$, $\omega \in [0,2\pi]$, and such that the $\nu_t$ process is independent of the $x_t$ process. Then the process $\check{x}_t = x_t + \nu_t$ has the same distribution as the $\tilde{x}_t$ process. Hence,
\begin{align*}
\var(\mu'\tilde{x}_{t+\ell} \mid \lbrace \tilde{x}_\tau \rbrace_{-\infty<\tau\leq t}) &= \var(\mu'\check{x}_{t+\ell} \mid \lbrace \check{x}_\tau \rbrace_{-\infty<\tau\leq t}) \\
&\geq \var(\mu'\check{x}_{t+\ell} \mid \lbrace x_\tau,\nu_t \rbrace_{-\infty<\tau\leq t}) \\
&= \var(\mu'x_{t+\ell} \mid \lbrace x_\tau,\nu_t \rbrace_{-\infty<\tau\leq t}) + \var(\mu'\nu_{t+\ell} \mid \lbrace x_\tau,\nu_t \rbrace_{-\infty<\tau\leq t}) \\ 
&\geq \var(\mu'x_{t+\ell} \mid \lbrace x_\tau,\nu_t \rbrace_{-\infty<\tau\leq t}) \\
&= \var(\mu'x_{t+\ell} \mid \lbrace x_\tau \rbrace_{-\infty<\tau\leq t}).
\end{align*}
The second equality above uses that the independence of the $x_t$ and $\nu_t$ processes implies that $x_{t+\ell}$ and $\nu_{t+\ell}$ are independent also conditional on $\lbrace x_\tau,\nu_t \rbrace_{-\infty<\tau\leq t}$.
\end{proof}

\subsection{Proof of \texorpdfstring{\cref{thm:identif_fvd}}{Proposition \ref{thm:identif_fvd}}}
\label{sec:proof_fvd}
The proof proceeds in two steps. First, for a given known $\alpha$, we show that $\mathit{FVD}_{i,\ell}$ is sharply bounded above by 1 and below by \eqref{eqn:FVD_lb}. Second, we show that the lower bound is monotonically decreasing in $\alpha$, so that the overall lower bound is attained by $\alpha_{UB}$.

\begin{enumerate}[1.]

\item Given $\alpha \in (\alpha_{LB}, \alpha_{UB}]$, the numerator of $\mathit{FVD}_{i,\ell}$ is point-identified (see below), so we need only concern ourselves with the denominator. We can write the denominator as
\begin{align} 
\var(y_{i,t+\ell} \mid \lbrace \varepsilon_\tau \rbrace_{-\infty<\tau\leq t}) &= \sum_{m=0}^{\ell-1} \Theta_{i,1,m}^2 + \sum_{j=2}^{n_\varepsilon} \sum_{m=0}^{\ell-1} \Theta_{i,j,m}^2 \nonumber \\
&= \frac{1}{\alpha^2}\sum_{m=0}^{\ell-1} \cov(y_{i,t},\tilde{z}_{t-m})^2 + \sum_{j=2}^{n_\varepsilon} \sum_{m=0}^{\ell-1} \Theta_{i,j,m}^2. \label{eqn:fvd_denom}
\end{align}
Given $\alpha$, the first term in \eqref{eqn:fvd_denom} is point-identified (note that it equals the numerator of the FVD), while the second is not. To upper-bound $\mathit{FVD}_{i,\ell}$, we seek to make that second term as small as possible. In fact, we can always set it to $0$. To see this, let $\lbrace \Theta_{\bullet, j,m} \rbrace_{2 \leq j\leq n_\varepsilon,0\leq m <\infty}$ denote some sequence of impulse responses for the structural shocks $j \neq 1$ that is consistent with the second-moment properties of the data. Since $\alpha \in (\alpha_{LB}, \alpha_{UB}]$, such a sequence exists by \cref{thm:identif_alpha}. Now, for a given forecast horizon $\ell$, instead consider the new sequence $\lbrace \breve{\Theta}_{\bullet, j, m}\rbrace_{2 \leq j\leq n_\varepsilon,0\leq m <\infty}$, defined via
\begin{equation*}
\breve{\Theta}_{\bullet, j, m} = \begin{cases} 0_{n_y \times 1} & \text{if $m \leq \ell-1$}, \\ \Theta_{\bullet, j, m-\ell} & \text{if $m > \ell-1$}. \end{cases}
\end{equation*}
Then the stochastic process induced by $\lbrace \breve{\Theta}_{\bullet, j, m}\rbrace_{2 \leq j\leq n_\varepsilon,0\leq m <\infty}$ has the exact same second-moment properties as the (by assumption admissible) stochastic process induced by $\lbrace \Theta_{\bullet, j,m} \rbrace_{2 \leq j\leq n_\varepsilon,0\leq m <\infty}$. However, by construction, we now have $\mathit{FVD}_{i,\ell} = 1$, as claimed.

For the lower bound, we want to make the second term in \eqref{eqn:fvd_denom} as large as possible. Given a known $\alpha \in (\alpha_{LB},\alpha_{UB}]$, define
\[\tilde{y}_t^{(\alpha)} = (\tilde{y}_{1,t}^{(\alpha)},\dots,\tilde{y}_{n_y,t}^{(\alpha)})' \equiv y_t -\frac{1}{\alpha}\sum_{\ell=0}^\infty \cov(y_t,\tilde{z}_{t-\ell})\varepsilon_{1,t-\ell} = \sum_{j=2}^{n_\varepsilon} \sum_{\ell=0}^\infty \Theta_{\bullet,j,\ell}\varepsilon_{j,t-\ell} ,\]
whose spectral density is given by the expression stated in the proposition. We have
\[\var(\tilde{y}_{i,t+\ell}^{(\alpha)} \mid \lbrace \tilde{y}_{\tau}^{(\alpha)} \rbrace_{-\infty<\tau\leq t}) \geq \var(\tilde{y}_{i,t+\ell}^{(\alpha)} \mid \lbrace \varepsilon_{j,\tau} \rbrace_{2\leq j\leq n_\varepsilon,-\infty<\tau\leq t}) = \sum_{j=2}^{n_\varepsilon} \sum_{m=0}^{\ell-1} \Theta_{i,j,m}^2,\]
so the second term in \eqref{eqn:fvd_denom} has an point-identified upper bound. Thus, given $\alpha$, $\mathit{FVD}_{i,\ell}$ is bounded below by the expression \eqref{eqn:FVD_lb}.

We now argue that the lower bound \eqref{eqn:FVD_lb} is attained by an admissible model with the given $\alpha$. To that end, consider the Wold decomposition of $\tilde{y}_t^{(\alpha)}=\sum_{\ell=0}^\infty \tilde{\Theta}_\ell \tilde{\varepsilon}_{t-\ell}$, where the $\tilde{\Theta}_\ell$ matrices are $n_y \times n_y$, and $\tilde{\varepsilon}_t$ is $n_y$-dimensional i.i.d. standard normal and spanned by $\lbrace \tilde{y}_{\tau}^{(\alpha)} \rbrace_{-\infty<\tau \leq t}$.\footnote{Since $\alpha>\alpha_{LB}$, the Wold decomposition has no deterministic term, cf. the proof of \cref{thm:identif_alpha}.} Then $\var(\tilde{y}_{i,t+\ell}^{(\alpha)} \mid \lbrace \tilde{y}_{\tau}^{(\alpha)} \rbrace_{-\infty<\tau\leq t}) = \sum_{j=2}^{n_\varepsilon} \sum_{m=0}^{\ell-1} \tilde{\Theta}_{i,j,m}^2$, so the following model attains the lower bound \eqref{eqn:FVD_lb} and is consistent with the given spectrum $s_w(\cdot)$:
\begin{align}
y_t &= \frac{1}{\alpha}\sum_{\ell=0}^\infty \cov(y_t,\tilde{z}_{t-\ell})\overline{\varepsilon}_{1,t} + \sum_{\ell=0}^\infty \tilde{\Theta}_\ell \tilde{\varepsilon}_{t-\ell}, \nonumber \\
\tilde{z}_t &= \alpha \overline{\varepsilon}_{1,t} + \sqrt{\var(\tilde{z}_t)-\alpha^2} \times \overline{v}_t, \label{eqn:FVD_lb_sharp_repr} \\
(\overline{\varepsilon}_{1,t},\tilde{\varepsilon}_t',\overline{v}_t)' &\stackrel{i.i.d.}{\sim} N(0, I_{n_y+2}). \nonumber
\end{align}

\item \cref{thm:forec_var_incr} implies that $\var(\tilde{y}_{i,t+\ell}^{(\alpha)} \mid \lbrace \tilde{y}_{\tau}^{(\alpha)} \rbrace_{-\infty<\tau\leq t})$ is increasing in $\alpha$. Hence, the expression \eqref{eqn:FVD_lb} is decreasing in $\alpha$, as claimed. At $\alpha=\alpha_{UB}$, the representation \eqref{eqn:FVD_lb_sharp_repr} has $\tilde{z}_t = \alpha_{UB}\overline{\varepsilon}_{1,t}$, so we can represent $\tilde{y}_t^{(\alpha_{UB})} = y_t - E(y_t \mid \lbrace \overline{\varepsilon}_{1,\tau}\rbrace_{-\infty<\tau \leq t}) = y_t - E(y_t \mid \lbrace \tilde{z}_{\tau}\rbrace_{-\infty<\tau \leq t})$. \qed

\end{enumerate}

\subsection{Proof of \texorpdfstring{\cref{thm:iv_multi}}{Proposition \ref{thm:iv_multi}}}
The ``only if'' part was proved already in the text of \cref{sec:iv_multi_details}. For the ``if'' part, assume that the cross-spectrum has the given factor structure. Since $\tilde{z}_t$ is serially uncorrelated, we can write $s_{\tilde{z}}(\cdot) = s_{\tilde{z}}$. Because $s_w(\omega)$ is positive definite, the Schur complement
\[s_{\tilde{z}} - s_{y\tilde{z}}(\omega)^*s_y(\omega)^{-1}s_{y\tilde{z}}(\omega) = s_{\tilde{z}} - \eta \zeta(\omega)^*s_y(\omega)^{-1}\zeta(\omega)\eta'\]
is also positive definite. Pre-multiplying the above expression by $\eta's_{\tilde{z}}^{-1}$, post-multiplying by $s_{\tilde{z}}^{-1}\eta$, and rearranging the positive definiteness condition, we obtain the implication that
\[2\pi\zeta(\omega)^*s_y(\omega)^{-1}\zeta(\omega) < \frac{2\pi}{\eta's_{\tilde{z}}^{-1}\eta},\quad \omega \in [0,2\pi].\]
Now choose any $\overline{\alpha}\geq 0$ such that $\overline{\alpha}^2$ lies strictly between the left- and right-hand sides in the above inequality. The matrix
\[\overline{\Sigma}_v \equiv 2\pi s_{\tilde{z}} - \overline{\alpha}^2 \eta\eta'\]
is then positive definite by \cref{thm:semidef}. Moreover, the same lemma implies that
\[s_y(\omega) - \frac{2\pi}{\overline{\alpha}^2}\zeta(\omega)\zeta(\omega)^*\]
is positive definite for all $\omega \in [0,2\pi]$. If we set $\overline{\Theta}_{\bullet, 1}(L) = (2\pi/\overline{\alpha})\zeta(L)$, the same arguments as in the proof of \cref{thm:identif_alpha} show that there exists an $n_y \times n_y$ matrix polynomial $\tilde{\Theta}(L)$ such that the following model achieves the desired spectrum $s_w(\omega)$:
\begin{align*}
y_t &= \overline{\Theta}_{\bullet, 1}(L) \overline{\varepsilon}_{1,t} + \tilde{\Theta}(L)\tilde{\varepsilon}_t, \\
\tilde{z}_t &= \overline{\alpha} \eta \overline{\varepsilon}_{1,t} + \overline{\Sigma}_v^{1/2}\overline{v}_t, \\
(\overline{\varepsilon}_{1,t},\tilde{\varepsilon}_t',\overline{v}_t')' &\stackrel{i.i.d.}{\sim} N(0, I_{n_y+n_z+1}).
\end{align*}
Note that $\eta$ assumes the role of $\lambda$. \qed

\subsection{Proof of \texorpdfstring{\cref{prop:proxy_SVAR}}{Proposition \ref{prop:proxy_SVAR}}}
According to the model \eqref{eqn:svma}, we can write
\[u_t = \sum_{\ell=0}^\infty M_\ell \varepsilon_{t-\ell},\]
for some $n_y \times n_\varepsilon$ matrices $\lbrace M_\ell \rbrace$. Let $M_{\bullet,j,\ell}$ denote the $j$-th column of $M_\ell$. Then
\[\tilde{\varepsilon}_{1,t} = \gamma'u_t = \sum_{j=1}^{n_\varepsilon}\sum_{\ell=0}^\infty a_{j,\ell}\varepsilon_{j,t-\ell},\]
where $a_{j,\ell} = \gamma'M_{\bullet,j,\ell}$. We have $\var(\tilde{\varepsilon}_{1,t})=1$ by construction of $\gamma$, so $\sum_{j=1}^{n_\varepsilon}\sum_{\ell=0}^\infty a_{j,\ell}^2=1$. The expression for $\tilde{\Theta}_{\bullet,1,\ell}$ in the proposition also immediately follows from the above display and the fact $\cov(y_t,\varepsilon_{j,t-\ell})=\Theta_{\bullet,j,\ell}$. Next, observe that
\begin{align*}
R_0^2 &= \var(E(\varepsilon_{1,t} \mid \lbrace y_\tau \rbrace_{-\infty<\tau \leq t})) \\
&= \var(E(\varepsilon_{1,t} \mid \lbrace u_\tau \rbrace_{-\infty<\tau \leq t})) \\
&= \var(E(\varepsilon_{1,t} \mid u_t)) \\
&= \cov(u_t,\varepsilon_{1,t})'\Sigma_u^{-1}\cov(u_t,\varepsilon_{1,t}) \\
&= M_{\bullet,1,0}'\Sigma_u^{-1}M_{\bullet,1,0}.
\end{align*}
Since $\Sigma_{u\tilde{z}} = \sum_{\ell=0}^\infty M_\ell \cov(\varepsilon_{t-\ell},\tilde{z}_t) = \alpha M_{\bullet,1,0}$, we therefore have
\[\gamma = \frac{1}{\sqrt{\Sigma_{u\tilde{z}}'\Sigma_u^{-1}\Sigma_{u\tilde{z}}}}\Sigma_u^{-1}\Sigma_{u\tilde{z}} = \frac{1}{\sqrt{M_{\bullet,1,0}'\Sigma_u^{-1}M_{\bullet,1,0}}}\Sigma_u^{-1}M_{\bullet,1,0} = \frac{1}{\sqrt{R_0^2}}\Sigma_u^{-1}M_{\bullet,1,0}.\]
This implies
\[a_{1,0} = \gamma'M_{\bullet,1,0} = \sqrt{M_{\bullet,1,0}'\Sigma_u^{-1}M_{\bullet,1,0}} = \sqrt{R_0^2}.\]
Finally,
\[\tilde{\Theta}_{\bullet,1,0} = \cov(y_t, \tilde{\varepsilon}_{1,t}) = \cov(y_t, u_t)\gamma = \cov(u_t,u_t)\gamma = \Sigma_u \gamma = \frac{1}{\sqrt{R_0^2}}M_{\bullet,1,0},\]
and $M_{\bullet,1,0} = \cov(u_t,\varepsilon_{1,t}) = \cov(y_t,\varepsilon_{1,t}) = \Theta_{\bullet,1,0}$. \qed

\subsection{Auxiliary lemmas for sieve VAR results}
Here we define notation and state auxiliary lemmas used to prove the propositions in \cref{sec:var_sieve}. The lemmas are proved below. For any matrix $B$, let $\|B\|_1$ denote the largest singular value of $B$. Recall that $\|B\|_1 \leq \|B\|$ and $\|BC\| \leq \|B\|\|C\|_1$ for conformable matrices $B$ and $C$. Let $e_t(p) \equiv W_t - \beta(p)X_t(p)$ for all $t$ and $p$. Finally, define
\[A_{\cos}(\omega;p) \equiv \sum_{\ell=1}^p A_\ell\cos(\omega\ell),\quad A_{\sin}(\omega;p) \equiv \sum_{\ell=1}^p A_\ell\sin(\omega\ell),\quad \omega \in [0,2\pi],\; p \in \mathbb{N}.\]

\begin{lem}[\citealp{Lewis1985}, p. 397] \label{thm:var_Gammaconv}
Let \cref{asn:var_dgp,asn:var_lagorder} hold. Then $E(\|\hat{\Gamma}(p_T) - \Gamma(p_T)\|^2) =O(p_T^2/T)$.
\end{lem}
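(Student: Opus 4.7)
My plan is to reduce the Frobenius-norm bound to a pointwise variance bound on each entry of $\hat{\Gamma}(p_T)-\Gamma(p_T)$ and then count. This is the standard sieve VAR argument from Lewis (1985, p. 397); since the paper cites that reference directly I would sketch only the key steps.

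First, I would expand the squared Frobenius norm entry-by-entry as
\[\|\hat{\Gamma}(p_T) - \Gamma(p_T)\|^2 = \sum_{k_1,k_2=1}^{p_T} \sum_{i,j=1}^{n_W} \bigl( \hat{\gamma}_{ij}(k_1,k_2) - \gamma_{ij}(k_1,k_2) \bigr)^2,\]
where $\gamma_{ij}(k_1,k_2) \equiv E(W_{i,t-k_1}W_{j,t-k_2})$ and $\hat{\gamma}_{ij}(k_1,k_2) \equiv (T-p_T)^{-1}\sum_{t=p_T+1}^T W_{i,t-k_1}W_{j,t-k_2}$. Taking expectations and using stationarity, the claim reduces to establishing the uniform pointwise bound
\[\max_{1\le i,j\le n_W,\;1\le k_1,k_2\le p_T} \var\bigl(\hat{\gamma}_{ij}(k_1,k_2)\bigr) = O(1/T),\]
after which summation over the $n_W^2 p_T^2 = O(p_T^2)$ entries immediately delivers $E(\|\hat{\Gamma}(p_T)-\Gamma(p_T)\|^2) = O(p_T^2/T)$.

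Second, to get the uniform variance bound, I would use that \cref{asn:var_dgp}(i) (no roots of $\det A(z)$ in the closed unit disc plus absolute summability of the $A_\ell$) implies $\{W_t\}$ has a Wold representation $W_t = \sum_{\ell \geq 0} C_\ell e_{t-\ell}$ with $\sum_\ell \|C_\ell\| < \infty$. Combined with the i.i.d.\ structure and the fourth (in fact eighth) moments of $e_t$ from \cref{asn:var_dgp}(ii), this yields absolutely summable fourth-order cumulants of $\{W_t\}$. The standard calculation for sample autocovariances of such linear processes then gives $\var(\hat{\gamma}_{ij}(k_1,k_2)) \le C/T$ with $C$ depending only on the Wold coefficients and the innovation moments.

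The main technical point — and the one I would be most careful about — is uniformity of the constant $C$ in $(i,j,k_1,k_2)$. Without it one would get at best $O(p_T^3/T)$, which would be too weak for the subsequent least-squares arguments. Fortunately, uniformity is essentially automatic from stationarity: the autocovariance sum that controls $\var(\hat{\gamma}_{ij}(k_1,k_2))$ depends on $(k_1,k_2)$ only through their difference, and the summability of fourth cumulants caps the sum by a finite constant depending only on $(i,j)$, of which there are finitely many. With that uniformity in hand the remainder is bookkeeping, and the stated rate follows.
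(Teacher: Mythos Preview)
Your sketch is correct and is precisely the classical argument: expand the squared Frobenius norm into $O(p_T^2)$ scalar terms and bound each $\var(\hat{\gamma}_{ij}(k_1,k_2))$ by $C/T$ uniformly, using the absolutely summable Wold representation and i.i.d.\ innovations with four moments to control the fourth-order cumulants. The paper does not supply its own proof of this lemma---it simply attributes the result to \citet[p.~397]{Lewis1985}---so there is nothing to compare beyond noting that your outline is exactly the standard Lewis--Reinsel derivation the citation points to.
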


\begin{lem} \label{thm:var_betaconv}
Let \cref{asn:var_dgp,asn:var_lagorder} hold. Then $\|\hat{\beta}(p_T)-\beta(p_T)\| = O_p((p_T/T)^{1/2})$.
\end{lem}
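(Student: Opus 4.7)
The proof follows the sieve VAR convergence argument of \citet{Lewis1985}. The starting point is the least-squares identity
\[\hat{\beta}(p_T) - \beta(p_T) = \left[(T-p_T)^{-1}\sum_{t=p_T+1}^T e_t(p_T) X_t(p_T)'\right] \hat{\Gamma}(p_T)^{-1},\]
obtained by substituting $W_t = \beta(p_T) X_t(p_T) + e_t(p_T)$ into the definition of $\hat{\beta}(p_T)$ and using the orthogonality $E[e_t(p_T) X_t(p_T)'] = 0$. The plan is to bound the Frobenius norm of the score matrix and the spectral norm of $\hat{\Gamma}(p_T)^{-1}$ separately, then combine via $\|BC\| \leq \|B\|\|C\|_1$.

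For the score, write the population projection residual as $e_t(p_T) = e_t + \rho_t(p_T)$, where $\rho_t(p_T) = \sum_{\ell=1}^{p_T}(A_\ell - B_\ell(p_T)) W_{t-\ell} + \sum_{\ell > p_T} A_\ell W_{t-\ell}$ with $\beta(p_T) = (B_1(p_T),\dots,B_{p_T}(p_T))$ denoting the population VAR$(p_T)$ coefficient matrix. A direct second-moment calculation, using $e_t \perp X_t(p_T)$ and the i.i.d.-ness of $\{e_t\}$, gives
\[E\left\|(T-p_T)^{-1}\sum_{t=p_T+1}^T e_t X_t(p_T)'\right\|^2 = (T-p_T)^{-1}\tr(\Sigma)\tr(\Gamma(p_T)) = O(p_T/T),\]
since $\tr(\Gamma(p_T)) = p_T \tr(\var(W_t))$, so Markov's inequality delivers an $O_p((p_T/T)^{1/2})$ bound for the $e_t$-part. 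For the remainder $\rho_t(p_T)$, a Baxter-type inequality (which is available under \cref{asn:var_dgp} because the spectral density of $W_t$ is uniformly bounded away from zero) gives $\sum_{\ell=1}^{p_T}\|A_\ell - B_\ell(p_T)\| = O\bigl(\sum_{\ell > p_T}\|A_\ell\|\bigr)$; the oversmoothing condition $T^{1/2}\sum_{\ell > p_T}\|A_\ell\| \to 0$ in \cref{asn:var_lagorder} then forces the $\rho_t(p_T)$-contribution to the score average to be $o_p(T^{-1/2})$, which is strictly negligible relative to $(p_T/T)^{1/2}$.

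To show $\|\hat{\Gamma}(p_T)^{-1}\|_1 = O_p(1)$, I would use a perturbation argument: \cref{thm:var_Gammaconv} together with $p_T^3/T \to 0$ yields $\|\hat{\Gamma}(p_T) - \Gamma(p_T)\|_1 \leq \|\hat{\Gamma}(p_T) - \Gamma(p_T)\| = O_p(p_T/T^{1/2}) = o_p(1)$, so it suffices to establish a uniform-in-$p_T$ spectral bound on $\Gamma(p_T)^{-1}$. Under \cref{asn:var_dgp}, the spectral density $s_W(\omega) = (2\pi)^{-1}A(e^{-i\omega})^{-1}\Sigma A(e^{-i\omega})^{-1*}$ is uniformly bounded away from zero on $[0,2\pi]$, and a standard block-Toeplitz inequality then gives $\|\Gamma(p_T)^{-1}\|_1 \leq (2\pi)^{-1}/\inf_\omega \lambda_{\min}(s_W(\omega)) = O(1)$ uniformly in $p_T$. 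A Weyl-type perturbation bound delivers $\|\hat{\Gamma}(p_T)^{-1}\|_1 = O_p(1)$.

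The main obstacle is precisely this uniform spectral bound for the growing-dimensional matrix $\Gamma(p_T)^{-1}$: one must rule out that its smallest eigenvalue degenerates as $p_T \to \infty$. The Toeplitz-spectral-density route above is essentially forced and leverages the full strength of \cref{asn:var_dgp}(i) — absolute summability of $\{A_\ell\}$ combined with the stability of $\det(A(z))$ inside the unit disk. Once this is in hand, combining the score bound $O_p((p_T/T)^{1/2})$ with $\|\hat{\Gamma}(p_T)^{-1}\|_1 = O_p(1)$ yields $\|\hat{\beta}(p_T) - \beta(p_T)\| = O_p((p_T/T)^{1/2})$.
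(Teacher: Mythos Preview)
Your proposal is correct and follows essentially the same route as the paper, which simply points to the proof of Theorem~1 in \citet{Lewis1985}: decompose the score as $(T-p_T)^{-1}\sum_t e_t(p_T)X_t(p_T)' = U_{1,T}+U_{2,T}$ with $U_{2,T}$ the $e_t$-part and $U_{1,T}$ the $(e_t(p_T)-e_t)$-part, bound each, and then control $\hat{\Gamma}(p_T)^{-1}$ via the block-Toeplitz/spectral-density lower bound. One small discrepancy worth flagging: in this paper $\beta(p_T)$ is the \emph{truncation} $(A_1,\dots,A_{p_T})$ (see the proof of \cref{thm:var_Sigmaconv}, where $e_t-e_t(p_T)=\sum_{\ell>p_T}A_\ell W_{t-\ell}$), not the population $p_T$-lag projection; under that convention the least-squares identity holds without invoking $E[e_t(p_T)X_t(p_T)']=0$ (which is false for the truncation), your $\rho_t(p_T)$ reduces to $\sum_{\ell>p_T}A_\ell W_{t-\ell}$, and the Baxter step is unnecessary. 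Also, the $U_{1,T}$ bound picks up a $p_T^{1/2}$ factor from $\|X_t(p_T)\|$, so the correct order is $o_p((p_T/T)^{1/2})$ rather than $o_p(T^{-1/2})$; this does not affect the final rate.
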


\begin{lem} \label{thm:var_Sigmaconv}
Let \cref{asn:var_dgp,asn:var_lagorder} hold. Then $\hat{\Sigma}(p_T) - (T-p_T)^{-1}\sum_{t=p_T+1}^Te_te_t' = o_p(T^{-1/2})$.
\end{lem}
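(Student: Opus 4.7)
The plan is to combine the OLS normal equations with a careful decomposition of the population projection residuals into the true innovations plus a small truncation term. The OLS first-order condition $\sum_{t=p_T+1}^T \hat{e}_t(p_T) X_t(p_T)' = 0$ yields the standard algebraic identity
\[(T-p_T)\hat{\Sigma}(p_T) = \sum_{t=p_T+1}^T e_t(p_T) e_t(p_T)' - (\hat{\beta}(p_T)-\beta(p_T))\left[\sum_{t=p_T+1}^T X_t(p_T)X_t(p_T)'\right](\hat{\beta}(p_T)-\beta(p_T))',\]
since the cross-product terms in the expansion of $(e_t(p_T) - (\hat{\beta}(p_T)-\beta(p_T))X_t(p_T))(\cdots)'$ collapse onto each other via the FOC. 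By \cref{thm:var_betaconv}, $\|\hat{\beta}(p_T)-\beta(p_T)\| = O_p((p_T/T)^{1/2})$; meanwhile, \cref{thm:var_Gammaconv} together with the standard spectral bound $\|\Gamma(p)\|_1 \leq 2\pi \sup_\omega \|s_W(\omega)\|_1$ (uniform in $p$ under \cref{asn:var_dgp}) implies $\|\hat{\Gamma}(p_T)\|_1 = O_p(1)$. Hence the quadratic form is $O_p(p_T/T) = o_p(T^{-1/2})$ since $p_T^3/T \to 0$ forces $p_T/T = o(T^{-2/3})$.

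It then remains to compare $(T-p_T)^{-1}\sum_t e_t(p_T)e_t(p_T)'$ with $(T-p_T)^{-1}\sum_t e_t e_t'$. Set $r_t \equiv e_t(p_T) - e_t = \sum_{\ell=1}^\infty A_\ell W_{t-\ell} - \beta(p_T) X_t(p_T)$, which is $\mathcal{F}_{t-1}$-measurable, where $\mathcal{F}_{t-1} \equiv \sigma(\lbrace W_\tau\rbrace_{\tau \leq t-1})$, so $r_t$ is independent of $e_t$. To bound $E\|r_t\|^2$ without invoking a Baxter-type inequality on $\beta(p_T) - (A_1,\dots,A_{p_T})$ (which would introduce a wasteful extra factor of $p_T$), I exploit the $L^2$-optimality of $\beta(p_T)$ against the suboptimal competitor $(A_1,\dots,A_{p_T})$:
\[E\|e_t(p_T)\|^2 \leq E\|e_t + \textstyle\sum_{\ell > p_T} A_\ell W_{t-\ell}\|^2 = E\|e_t\|^2 + E\|\textstyle\sum_{\ell > p_T} A_\ell W_{t-\ell}\|^2,\]
where the equality uses orthogonality of $e_t$ to $\lbrace W_\tau\rbrace_{\tau \leq t-1}$. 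Combining this with $E\|e_t(p_T)\|^2 = E\|e_t\|^2 + E\|r_t\|^2$ (from $e_t \perp r_t$) gives $E\|r_t\|^2 \leq E\|\sum_{\ell > p_T} A_\ell W_{t-\ell}\|^2 = O\!\left((\sum_{\ell > p_T}\|A_\ell\|)^2\right) = o(T^{-1})$ by \cref{asn:var_lagorder}.

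Expanding $(T-p_T)^{-1}\sum(e_t+r_t)(e_t+r_t)' - (T-p_T)^{-1}\sum e_t e_t' = (T-p_T)^{-1}\sum(e_tr_t' + r_te_t' + r_tr_t')$, I bound the three residual pieces. By Markov's inequality, $\|(T-p_T)^{-1}\sum r_tr_t'\|_F \leq (T-p_T)^{-1}\sum \|r_t\|^2 = O_p(E\|r_t\|^2) = o_p(T^{-1})$. For the cross terms, the key observation is that each scalar sequence $\lbrace e_{t,i}r_{t,j}\rbrace_t$ is a martingale-difference sequence, since $r_t \in \mathcal{F}_{t-1}$ and $E(e_t\mid\mathcal{F}_{t-1})=0$; iterated expectations therefore give $E[(\sum_t e_{t,i}r_{t,j})^2] = \sum_t \Sigma_{ii}\,E[r_{t,j}^2] = O(T\cdot E\|r_t\|^2) = o(1)$, so $(T-p_T)^{-1}\sum e_tr_t' = o_p(T^{-1})$. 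Combining the three $o_p$-bounds with the earlier $o_p(T^{-1/2})$-bound on the estimation-error quadratic form yields the claim.

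The main technical nuance is the bound $E\|r_t\|^2 = o(T^{-1})$. A direct approach via Baxter's inequality would only control $\|\beta(p_T) - (A_1,\dots,A_{p_T})\|$ and introduce $\|X_t(p_T)\|^2$ with $E\|X_t(p_T)\|^2 = O(p_T)$, producing a wasteful $p_T$ factor. Comparing instead $\beta(p_T)$ to the suboptimal truncation and invoking the orthogonality of $e_t$ to its past sidesteps this entirely, yielding a bound that is $o(T^{-1})$ rather than only $o(p_T/T)$ and thus comfortably controls the cross terms via the martingale variance calculation.
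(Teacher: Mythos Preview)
Your proof is correct and takes a genuinely different route from the paper. The paper expands $\hat{e}_t-e_t = (\hat{e}_t-e_t(p_T)) + (e_t(p_T)-e_t)$ and bounds the resulting four cross/quadratic terms one by one using crude Cauchy--Schwarz/triangle-inequality estimates (e.g., it controls $(T-p_T)^{-1}\sum (e_t(p_T)-e_t)e_t'$ simply via $E\|(e_t(p_T)-e_t)e_t'\| \leq (\sum_{\ell>p_T}\|A_\ell\|)(E\|W_t\|^2 E\|e_t\|^2)^{1/2}$). You instead (i) invoke the OLS normal equations to collapse all estimation-error terms into the single quadratic $(\hat\beta-\beta)\hat\Gamma(\hat\beta-\beta)'$, and (ii) exploit the martingale-difference structure of $e_t r_t'$ to get the sharper $o_p(T^{-1})$ rate on the cross piece. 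Your argument is tidier and delivers slightly stronger intermediate rates, while the paper's approach is more elementary but involves more bookkeeping.

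One small caveat: the paper (following Lewis--Reinsel) implicitly takes $\beta(p_T)=(A_1,\dots,A_{p_T})$ as the \emph{truncation} of the $\mathrm{VAR}(\infty)$ coefficients, as revealed by its formula $e_t-e_t(p_T)=\sum_{\ell>p_T}A_\ell W_{t-\ell}$. Your $L^2$-optimality step treats $\beta(p_T)$ as the population projection coefficient instead. This is immaterial for the result---under the truncation definition you would have $r_t=-\sum_{\ell>p_T}A_\ell W_{t-\ell}$ directly and $E\|r_t\|^2=o(T^{-1})$ without the optimality detour, while \cref{thm:var_betaconv} still applies---but it is worth flagging that your $L^2$-optimality argument is only needed under your reading of $\beta(p_T)$, not the paper's.
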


\begin{lem}[\citealp{Lewis1985}, Thm. 2] \label{thm:var_approx}
Let \cref{asn:var_dgp,asn:var_lagorder} hold. Let $\tilde{\nu}_T \in \mathbb{R}^{n_W^2 p_T}$ be a deterministic sequence of vectors such that $\|\tilde{\nu}_T\|^2 \leq M < \infty$ for all $T$. Define
\[\zeta_T \equiv (T-p_T)^{-1/2}\sum_{t=p_T+1}^T \tilde{\nu}_T'\left(\Gamma(p_T)^{-1}X_t(p_T) \otimes e_t\right).\]
Then
\[(T-p_T)^{1/2}\tilde{\nu}_T'\ve(\hat{\beta}(p_T)-\beta(p_T)) - \zeta_T \stackrel{p}{\to} 0.\]
\end{lem}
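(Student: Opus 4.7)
The plan is to start from the OLS first-order condition, algebraically rewrite the left-hand side so that it has the same structural form as $\zeta_T$, and then argue term-by-term that the discrepancies are $o_p(1)$ under Assumptions B.1 and B.3. Writing $e_t(p_T) \equiv W_t - \beta(p_T) X_t(p_T)$, the OLS formula and the mixed-product rule for Kronecker products give
\[
(T-p_T)^{1/2}\tilde{\nu}_T'\ve(\hat{\beta}(p_T)-\beta(p_T)) = (T-p_T)^{-1/2}\sum_{t=p_T+1}^T \tilde{\nu}_T'\bigl(\hat{\Gamma}(p_T)^{-1}X_t(p_T) \otimes e_t(p_T)\bigr).
\]
Subtracting $\zeta_T$ decomposes the remainder into two pieces: one arising from replacing $\hat{\Gamma}(p_T)^{-1}$ by $\Gamma(p_T)^{-1}$, and one arising from replacing $e_t(p_T)$ by the true innovations $e_t$.

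For the first piece I would use the identity $\hat{\Gamma}^{-1}-\Gamma^{-1}=\Gamma^{-1}(\Gamma-\hat{\Gamma})\hat{\Gamma}^{-1}$, together with Lemma B.1 (which gives $\|\hat{\Gamma}(p_T)-\Gamma(p_T)\|=O_p(p_T/T^{1/2})$), the fact that the eigenvalues of $\Gamma(p_T)$ are bounded away from $0$ and $\infty$ uniformly in $p_T$ (by Assumption B.1 and the standard VAR spectral factorization argument used in Lewis-Reinsel), and Cauchy-Schwarz in the Frobenius norm. The Kronecker structure lets me bound the resulting quadratic form by $\|\tilde{\nu}_T\| \cdot \|\hat{\Gamma}^{-1}-\Gamma^{-1}\|_1 \cdot (T-p_T)^{-1/2} \bigl\|\sum_t X_t(p_T)\otimes e_t\bigr\|$; the last factor has second moment of order $p_T$, so the whole piece is $O_p(p_T^{3/2}/T^{1/2}) = o_p(1)$ by Assumption B.3.

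For the second piece I would expand
\[
e_t(p_T) - e_t = \sum_{\ell=p_T+1}^\infty A_\ell W_{t-\ell} + \sum_{\ell=1}^{p_T}\bigl(A_\ell - A_\ell(p_T)\bigr) W_{t-\ell},
\]
where $A_\ell(p_T)$ are the population projection coefficients. A Baxter-type inequality (as used by Lewis-Reinsel, 1985) bounds the projection-gap sum $\sum_{\ell\leq p_T}\|A_\ell-A_\ell(p_T)\|$ by a constant times the truncation tail $\sum_{\ell>p_T}\|A_\ell\|$, so Assumption B.3 forces the entire deterministic envelope of $e_t(p_T)-e_t$ to be $o(T^{-1/2})$. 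Applying Cauchy-Schwarz, using $\|\tilde{\nu}_T\|\leq M^{1/2}$, and bounding the second moment of $\|\Gamma(p_T)^{-1}X_t(p_T)\|$ uniformly in $p_T$ again shows this contribution is $o_p(1)$.

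The main obstacle is that both the dimension $n_W^2 p_T$ of $\tilde{\nu}_T$ and the dimension $n_W p_T$ of $X_t(p_T)$ grow with $T$, so naive bounds lose factors of $p_T$. Handling this requires exploiting the Kronecker product structure to separate the $\tilde{\nu}_T$ direction from the score, and combining the bounded-spectrum fact for $\Gamma(p_T)$ with the two sharp rate conditions $p_T^3/T\to 0$ and $T^{1/2}\sum_{\ell>p_T}\|A_\ell\|\to 0$. Once these ingredients are in place, putting them together yields the asserted $o_p(1)$ remainder and thus the claimed asymptotic linearization.
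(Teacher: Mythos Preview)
The paper does not supply its own proof of this lemma: it is stated with attribution to \citet[Thm.~2]{Lewis1985} and is simply invoked as a known result. So there is no in-paper argument to compare against; your sketch is effectively a reconstruction of the Lewis--Reinsel proof, and the decomposition you propose (replace $\hat{\Gamma}(p_T)^{-1}$ by $\Gamma(p_T)^{-1}$, then replace $e_t(p_T)$ by $e_t$) is exactly the standard route. One small mismatch in conventions: in this paper $\beta(p_T)=(A_1,\dots,A_{p_T})$, not the population projection coefficients, so $e_t(p_T)-e_t=-\sum_{\ell>p_T}A_\ell W_{t-\ell}$ exactly and your Baxter-inequality term is identically zero here.

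There is, however, a genuine gap in your second piece. You claim you can finish by ``bounding the second moment of $\|\Gamma(p_T)^{-1}X_t(p_T)\|$ uniformly in $p_T$,'' but that second moment is
\[
E\|\Gamma(p_T)^{-1}X_t(p_T)\|^2=\tr\bigl(\Gamma(p_T)^{-2}\,E[X_t(p_T)X_t(p_T)']\bigr)=\tr\bigl(\Gamma(p_T)^{-1}\bigr)\asymp p_T,
\]
since $\Gamma(p_T)$ has dimension $n_Wp_T$ with eigenvalues uniformly bounded away from $0$ and $\infty$. Taking absolute values term by term and then applying Cauchy--Schwarz therefore leaves you with a bound of order $o_p(p_T^{1/2})$, not $o_p(1)$; this is precisely the ``naive bounds lose factors of $p_T$'' obstacle you flagged but did not avoid. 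The fix is to keep the sum intact, write the piece as $\tilde{\nu}_T'(\Gamma(p_T)^{-1}\otimes I)(T-p_T)^{1/2}\ve(U_{1,T})$ with $U_{1,T}=(T-p_T)^{-1}\sum_t(e_t(p_T)-e_t)X_t(p_T)'$, and then bound $\|U_{1,T}\|$ more sharply than the crude $O_p(p_T^{1/2}\sum_{\ell>p_T}\|A_\ell\|)$ used in the proof of \cref{thm:var_betaconv}: exploiting the absolute summability of the autocovariances $\Gamma_W(\cdot)$, one shows that the blocks $(T-p_T)^{-1}\sum_t W_{t-\ell}X_t(p_T)'$ have Frobenius norm $O_p(1)$ (not $O_p(p_T^{1/2})$), which yields $(T-p_T)^{1/2}\|U_{1,T}\|=o_p(1)$ as required.
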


\begin{lem} \label{thm:var_sumcov}
Let \cref{asn:var_dgp} hold. Then for all $j_1,j_2,j_3,j_4 \in \lbrace 1,2,\dots,n_W\rbrace$, all $p,T \in \mathbb{N}$ such that $p < T$, and all $m_1,m_2,m_3,m_4 \in \mathbb{Z}$ we have
\[\frac{1}{T-p}\sum_{t=p+1}^T \sum_{s=p+1}^T \big|\cov(e_{j_1,t+m_1}e_{j_2,t+m_2}e_{j_3,t}e_{j_4,t},e_{j_1,s+m_3}e_{j_2,s+m_4}e_{j_3,s}e_{j_4,s})\big| \leq 9 E\|e_t\|^8.\]
\end{lem}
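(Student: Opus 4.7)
The plan is to exploit the i.i.d. structure of $\{e_t\}$ guaranteed by \cref{asn:var_dgp}: abbreviating $X_t \equiv e_{j_1,t+m_1}e_{j_2,t+m_2}e_{j_3,t}e_{j_4,t}$ and $X_s \equiv e_{j_1,s+m_3}e_{j_2,s+m_4}e_{j_3,s}e_{j_4,s}$, the covariance $\cov(X_t,X_s)$ vanishes whenever the time supports of the two products are disjoint, so only a small number of $s$ contribute for each fixed $t$, and each surviving covariance can be bounded uniformly by $E\|e_t\|^8$.

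First I would identify the supports $T_t = \{t+m_1,t+m_2,t\}$ and $T_s = \{s+m_3,s+m_4,s\}$ (at most three distinct times each, since $t$ and $s$ appear repeated in the products). If $T_t \cap T_s = \emptyset$, then $X_t$ and $X_s$ are functions of disjoint collections of $e_\tau$'s, hence independent by \cref{asn:var_dgp}(ii), and $\cov(X_t,X_s)=0$.

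Next I would count, for fixed $t$, the indices $s$ with $T_t\cap T_s\neq\emptyset$. Any such $s$ satisfies $t+a=s+b$ for some $a\in\{m_1,m_2,0\}$ and $b\in\{m_3,m_4,0\}$, i.e.\ $s=t+a-b$. There are at most $3\times 3 = 9$ values of $s$ that can arise in this way (fewer in the presence of coincidences), independent of $T$ and $p$. This is precisely where the constant $9$ in the stated bound comes from.

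Finally, I would bound each nonzero covariance. By Cauchy--Schwarz, $|\cov(X_t,X_s)|\leq (\var(X_t)\var(X_s))^{1/2}$, and by Hölder's inequality applied to the product of four squared entries with equal exponents $1/4$,
\[
E(X_t^2) = E\bigl(e_{j_1,t+m_1}^2 e_{j_2,t+m_2}^2 e_{j_3,t}^2 e_{j_4,t}^2\bigr) \leq \prod_{i=1}^4 \bigl(E|e_{j_i,\cdot}|^8\bigr)^{1/4} \leq E\|e_t\|^8,
\]
using $|e_{j,\tau}|\leq \|e_\tau\|$ and stationarity; the same bound holds for $E(X_s^2)$. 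Combining the three ingredients gives at most $9$ nonzero summands per $t$, each of size at most $E\|e_t\|^8$, so the double sum is bounded by $9(T-p)E\|e_t\|^8$, and dividing by $T-p$ yields the claim. There is no real obstacle here: the only nontrivial step is the combinatorial support count, and the moment bound is a routine Hölder application made possible by the 8-moment hypothesis in \cref{asn:var_dgp}.
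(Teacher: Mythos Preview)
Your proof is correct and follows essentially the same approach as the paper: both arguments exploit independence of $\{e_t\}$ to conclude that the covariance vanishes unless the time indices $\{t+m_1,t+m_2,t\}$ and $\{s+m_3,s+m_4,s\}$ intersect, count at most $3\times 3=9$ such pairs, and bound each surviving term by $E\|e_t\|^8$ via Cauchy--Schwarz and a moment inequality. The only cosmetic difference is that the paper first uses stationarity to rewrite the double sum as a single sum over lags $\ell=t-s$ before doing the same $9$-term count, whereas you count directly for each fixed $t$.
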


\begin{lem} \label{thm:var_sumnegl}
Let \cref{asn:var_dgp,asn:var_lagorder} hold.  Then
\begin{align*}
&\left\| \frac{1}{T-p_T} \sum_{t=p_T+1}^T \ve(e_t X_t(p_T)')\ve(e_t X_t(p_T)')' - E\left[\ve(e_t X_t(p_T)')\ve(e_t X_t(p_T)')'\right] \right\|^2 \\
&= O_p(p_T^2/T),
\end{align*}
and
\[\left\| \frac{1}{T-p_T} \sum_{t=p_T+1}^T \ve(e_t X_t(p_T)')\ve(e_te_t'-\Sigma)' \right\|^2 = O_p(p_T/T).\]
\end{lem}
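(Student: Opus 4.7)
The plan is to bound the squared Frobenius norms entry-by-entry and then apply Markov's inequality. Writing $\|M\|^2 = \sum_{i,j}M_{ij}^2$, it suffices to establish an $O(1/(T-p_T))$ bound on the expected square of each scalar entry. The first matrix is $(n_W^2 p_T) \times (n_W^2 p_T)$, giving $O(p_T^2)$ entries; the second is $(n_W^2 p_T) \times n_W^2$, giving $O(p_T)$ entries. Multiplying per-entry bounds by entry counts produces exactly the advertised rates $p_T^2/T$ and $p_T/T$.

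For the second statement, a representative entry is
\[\frac{1}{T-p_T}\sum_{t=p_T+1}^T e_{t,k_1}W_{t-m,l}(e_{t,k_2}e_{t,k_3}-\Sigma_{k_2,k_3}),\]
which has mean zero since $E(W_{t-m,l})=0$ and $e_t$ is independent of $W_{t-m,l}$. Write $e_{t,k_1}(e_{t,k_2}e_{t,k_3}-\Sigma_{k_2,k_3}) = \mu + \eta_t$, with $\mu \equiv E(e_{t,k_1}e_{t,k_2}e_{t,k_3})$ and $\eta_t$ a mean-zero function of $e_t$ alone. Then $\eta_t W_{t-m,l}$ is a martingale difference with respect to the natural filtration of $\lbrace e_s\rbrace$ (since $W_{t-m,l}$ depends only on $\lbrace e_s\rbrace_{s\leq t-1}$), and its partial-sum variance is $O(T-p_T)$ by orthogonality and bounded conditional second moments (using $E\|e_t\|^6<\infty$). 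The remainder $\mu W_{t-m,l}$ is a scaled stationary MA($\infty$) whose coefficients are absolutely summable under $\sum_\ell\|A_\ell\|<\infty$ (Assumption B.3i), yielding the same $O(T-p_T)$ partial-sum variance. Hence each entry has mean-square $O(1/(T-p_T))$; summing over $O(p_T)$ entries gives the claim.

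For the first statement, a representative entry of the centered matrix is
\[\frac{1}{T-p_T}\sum_{t=p_T+1}^T \lbrace e_{t,k_1}e_{t,k_2}W_{t-m_1,l_1}W_{t-m_2,l_2} - \Sigma_{k_1,k_2}E(W_{t-m_1,l_1}W_{t-m_2,l_2})\rbrace.\]
Decompose the summand as
\[(e_{t,k_1}e_{t,k_2}-\Sigma_{k_1,k_2})W_{t-m_1,l_1}W_{t-m_2,l_2} + \Sigma_{k_1,k_2}\lbrace W_{t-m_1,l_1}W_{t-m_2,l_2} - E(\cdot)\rbrace.\]
The first piece is a martingale difference and contributes $O(1/(T-p_T))$ in mean square as above. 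For the second piece, a stationary quadratic functional of $W$, I would use $W_{t-m,l}=\sum_{r\geq 0}\sum_a \Psi_{r,l,a}e_{t-m-r,a}$ with $\sum_r\|\Psi_r\|<\infty$ (from inverting the VAR polynomial under Assumption B.3i) to write its autocovariance at lag $h$ as a quadruple sum of $\Psi$'s multiplying four-$e$ covariances. Absolute summability of the $\Psi$'s combined with the moment-counting bounds that underlie Lemma B.7 then shows that this autocovariance is summable in $h$ uniformly in $(m_1,l_1,m_2,l_2)$, so each entry again has mean-square $O(1/(T-p_T))$. Summing over $O(p_T^2)$ entries gives the claim, and Markov's inequality completes the passage to $O_p$.

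The most delicate step is the uniformity, in $(m_1,l_1,m_2,l_2)$, of the autocovariance bound on $\lbrace W_{t-m_1,l_1}W_{t-m_2,l_2}\rbrace$: without it, summing $O(p_T^2)$ per-entry bounds would not produce the desired $p_T^2/T$ rate. The key ingredients are absolute summability of the MA coefficients $\Psi_r$, which makes the $\Psi$-sums independent of the time shifts $m_i$, and the eighth-moment condition $E\|e_t\|^8<\infty$ in Assumption B.3ii, which ensures finiteness of the relevant fourth-order moments of $W_t$ and underlies Lemma B.7's uniform-in-indices bound on eight-$e$ covariance sums.
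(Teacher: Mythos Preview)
Your proposal is correct but takes a genuinely different route from the paper. The paper does \emph{not} split off a martingale-difference piece: for the first matrix it directly expands \emph{both} $W$-factors (at times $t$ and $s$) via the Wold representation $W_t = B(L)e_t$, obtaining a quadruple sum in $B$-coefficients multiplying an eight-$e$ covariance, and then invokes \cref{thm:var_sumcov} (the ``at most nine nonzero terms, each bounded by $E\|e_t\|^8$'' counting lemma) to get the uniform $O(1/(T-p_T))$ per-entry bound; the second matrix is handled by first splitting $\ve(e_te_t'-\Sigma)' = \ve(e_te_t')' - \ve(\Sigma)'$, using serial uncorrelatedness of $\ve(e_tX_t(p_T)')$ for the $\Sigma$-piece and the same Wold$+$\cref{thm:var_sumcov} argument for the $e_te_t'$-piece. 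Your decomposition instead centers the $e$-product first, so the leading piece is a genuine martingale difference handled by orthogonality and independence, and only the residual (a pure $W$-quadratic, respectively a scaled $W$-linear term) needs a Wold expansion --- but now with only \emph{four} $e$'s in the covariance, so a four-$e$ counting argument suffices. This is cleaner and in principle needs only $E\|e_t\|^4<\infty$ rather than the eighth moment the paper's direct route uses; since \cref{asn:var_dgp} supplies eight moments anyway, the distinction is moot for the lemma as stated. Two minor labeling slips: your references to ``Assumption~B.3i/B.3ii'' should be to \cref{asn:var_dgp}(i)--(ii), and the ``eight-$e$ covariance sum'' lemma you invoke is \cref{thm:var_sumcov}, not the present lemma; also, finiteness of fourth moments of $W_t$ already follows from $E\|e_t\|^4<\infty$ and absolute summability of $B(L)$, so the eighth-moment remark in your last paragraph is stronger than needed for your own argument.
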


\begin{lem} \label{thm:var_asynorm}
Let \cref{asn:var_dgp,asn:var_lagorder} hold. Define a sequence $\tilde{\nu}_T$ as in \cref{thm:var_approx}, and assume $v_\zeta \equiv \lim_{T \to \infty} \tilde{\nu}_T'(\Gamma(p_T)^{-1} \otimes \Sigma)\tilde{\nu}_T$ exists. Then
\[(T-p_T)^{1/2}\tilde{\nu}_T'\ve(\hat{\beta}(p_T)-\beta(p_T)) \stackrel{d}{\to} N(0,v_\zeta),\]
\[(T-p_T)^{1/2}\ve(\hat{\Sigma}(p_T)-\Sigma) \stackrel{d}{\to} N(0,\var(e_t \otimes e_t)),\]
and these two random vectors are asymptotically independent.
\end{lem}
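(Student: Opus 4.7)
The plan is to use a Cramér--Wold device to reduce the joint convergence to a CLT for scalar linear combinations, then invoke the previous auxiliary lemmas to replace the two target expressions by clean martingale-difference sums, and finally apply a martingale CLT for triangular arrays. Asymptotic independence will follow from the vanishing of the cross-covariance.

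Concretely, for arbitrary constants $a \in \mathbb{R}$ and $b \in \mathbb{R}^{n_W^2}$, define
\[\xi_{t,T} \equiv a\,\tilde{\nu}_T'\bigl(\Gamma(p_T)^{-1} X_t(p_T) \otimes e_t\bigr) + b'\ve(e_te_t' - \Sigma),\quad t = p_T+1,\dots,T.\]
Since $X_t(p_T)$ is a measurable function of $\lbrace e_\tau \rbrace_{\tau < t}$ and $e_t$ is independent of that history with $E(e_t)=0$, $\lbrace \xi_{t,T} \rbrace$ is a martingale difference array with respect to the natural filtration. Using \cref{thm:var_approx} to write $(T-p_T)^{1/2}\tilde{\nu}_T'\ve(\hat{\beta}(p_T)-\beta(p_T)) = \zeta_T + o_p(1)$, and using \cref{thm:var_Sigmaconv} to write $(T-p_T)^{1/2}\ve(\hat{\Sigma}(p_T)-\Sigma) = (T-p_T)^{-1/2}\sum_{t=p_T+1}^T \ve(e_te_t'-\Sigma) + o_p(1)$, the target linear combination of the two statistics equals $(T-p_T)^{-1/2}\sum_{t=p_T+1}^T \xi_{t,T} + o_p(1)$.

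The next step is to verify the conditions for a martingale CLT for the array $\xi_{t,T}/(T-p_T)^{1/2}$. Because $X_t(p_T)$ is independent of $e_t$ with $E(X_t(p_T)) = 0$, the cross-covariance $\cov\bigl(\tilde{\nu}_T'(\Gamma(p_T)^{-1}X_t(p_T) \otimes e_t),\,\ve(e_te_t'-\Sigma)\bigr)$ vanishes, so by row-wise stationarity the per-term variance equals
\[a^2 \tilde{\nu}_T'(\Gamma(p_T)^{-1} \otimes \Sigma)\tilde{\nu}_T + b'\var(e_t \otimes e_t) b \;\longrightarrow\; a^2 v_\zeta + b'\var(e_t \otimes e_t) b\]
by the hypothesis on $v_\zeta$. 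The convergence of the sum of conditional variances to this limit is then established by showing that the sample average of $E(\xi_{t,T}^2 \mid \mathcal{F}_{t-1})$ concentrates around its mean; this concentration uses the variance bounds supplied by \cref{thm:var_sumcov,thm:var_sumnegl} together with the rate $p_T^2/T \to 0$ from \cref{asn:var_lagorder}. Finally, a Lyapunov condition at order $2+\delta$ is checked using the 8-moment assumption on $e_t$: a direct fourth-moment expansion of $\xi_{t,T}$, together with $\|\tilde{\nu}_T\|^2 \leq M$ and the operator-norm bound on $\Gamma(p_T)^{-1}$ inherited from \cref{asn:var_dgp}, yields $\sup_T E\xi_{t,T}^4 < \infty$, whence $(T-p_T)^{-(1+\delta/2)} \sum_{t} E|\xi_{t,T}|^{2+\delta} \to 0$ for $\delta = 2$.

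The hardest step is likely to be the verification of the conditional-variance convergence, because $\xi_{t,T}$ depends on the growing-dimension vector $X_t(p_T)$ through the quadratic form $\tilde{\nu}_T'(\Gamma(p_T)^{-1} X_t(p_T))(X_t(p_T)'\Gamma(p_T)^{-1})\tilde{\nu}_T \|e_t\|^2$; one must control fluctuations of this quadratic form around its expectation, and this is exactly where the $p_T^2/T$ rate and the fourth-moment bounds in \cref{thm:var_sumnegl} play a crucial role. Once the Cramér--Wold-reduced CLT is established, the first two statements of the lemma follow by specialising $b=0$ and $a=0$ respectively, and the asymptotic independence follows because the limiting variance of the joint vector is block-diagonal (no cross-term appears in the limit), which by the form of a bivariate normal distribution is equivalent to independence. \qed
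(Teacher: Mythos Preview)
Your approach is essentially the paper's: Cram\'{e}r--Wold reduction, replacement of the two target statistics by clean martingale-difference sums via \cref{thm:var_approx,thm:var_Sigmaconv}, and a martingale CLT for the resulting triangular array. The paper invokes a slightly different variant of the CLT (Davidson's Thm.~24.3, requiring $\sum_t J_{t,T}^2 \stackrel{p}{\to} 1$ and $\max_t |J_{t,T}| \stackrel{p}{\to} 0$ rather than conditional variances plus Lyapunov), but the substance is the same.

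One step in your Lyapunov verification is not correct as stated. You claim $\sup_T E\xi_{t,T}^4 < \infty$, but the first summand in $\xi_{t,T}$ satisfies only
\[E\bigl[\bigl(\tilde{\nu}_T'(\Gamma(p_T)^{-1}X_t(p_T)\otimes e_t)\bigr)^4\bigr] \leq \|\tilde{\nu}_T\|^4\,\|\Gamma(p_T)^{-1}\|_1^4\,E\|X_t(p_T)\|^4\,E\|e_t\|^4,\]
and $E\|X_t(p_T)\|^4 = E\bigl(\sum_{\ell=1}^{p_T}\|W_{t-\ell}\|^2\bigr)^2 = O(p_T^2)$, so $E\xi_{t,T}^4 = O(p_T^2)$, not $O(1)$. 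Fortunately this is exactly what is needed: the Lyapunov sum becomes $(T-p_T)^{-2}\sum_t E\xi_{t,T}^4 = O(p_T^2/T) \to 0$ by \cref{asn:var_lagorder}. The paper's max-negligibility check relies on the identical $p_T^2/T$ bound. Also note that if you pursue the conditional-variance route, the relevant concentration for the quadratic-form piece is delivered directly by \cref{thm:var_Gammaconv} (control of $\hat{\Gamma}(p_T)-\Gamma(p_T)$), not by \cref{thm:var_sumnegl}, which instead governs the paper's unconditional sum-of-squares decomposition; and you will need a separate (easy) argument for the conditional cross term, which is linear in $X_t(p_T)$ and has mean zero.
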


\begin{lem} \label{thm:var_Acossinconv}
Let \cref{asn:var_dgp,asn:var_lagorder} hold. Then
\[\sup_{\omega \in [0,2\pi]} \left(\|\hat{A}_{\cos}(\omega;p_T)-A_{\cos}(\omega;p_T)\|^2 + \|\hat{A}_{\sin}(\omega;p_T)-A_{\sin}(\omega;p_T)\|^2 \right) = O_p(p_T/T).\]
\end{lem}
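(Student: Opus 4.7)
The plan is to rewrite the quantity of interest as the squared Frobenius norm of a single complex-matrix trigonometric polynomial, and then reduce the uniform-in-$\omega$ bound to control of $\|\hat\beta(p_T)-\beta(p_T)\|$, which is handled by \cref{thm:var_betaconv}.

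Set $D_\ell \equiv \hat{A}_\ell(p_T) - A_\ell$ and define the complex $n_W\times n_W$ trigonometric polynomial $C(\omega) \equiv \sum_{\ell=1}^{p_T} D_\ell e^{i\omega\ell}$, where the Frobenius norm is extended to complex matrices via $\|X\|^2 = \tr(X^*X)$. Using $\cos^2(\omega\ell)+\sin^2(\omega\ell)=1$ after expanding the squared Frobenius norms, I obtain
\[
\|\hat{A}_{\cos}(\omega;p_T) - A_{\cos}(\omega;p_T)\|^2 + \|\hat{A}_{\sin}(\omega;p_T) - A_{\sin}(\omega;p_T)\|^2 = \|C(\omega)\|^2.
\]
Writing $\hat\beta(p_T)-\beta(p_T)=[D_1,\ldots,D_{p_T}]$ in block form and $\phi(\omega) \equiv (e^{i\omega},\ldots,e^{ip_T\omega})'\otimes I_{n_W}$, one has $C(\omega) = (\hat\beta(p_T)-\beta(p_T))\phi(\omega)$, which is the convenient factorization.

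To obtain the sharp rate $O_p(p_T/T)$ stated in the lemma, I would exploit the Kronecker structure of the limiting covariance of $\ve(\hat\beta(p_T)-\beta(p_T))$. Specifically, the first-order expansion $\hat\beta(p_T)-\beta(p_T)\approx (T-p_T)^{-1}(\sum_t e_t X_t(p_T)')\Gamma(p_T)^{-1}$ implicit in the proof of \cref{thm:var_betaconv} yields the pointwise expectation identity
\[
E\|C(\omega)\|^2 = \frac{\tr(\Sigma)}{T-p_T}\tr\!\big(\phi(\omega)^* \Gamma(p_T)^{-1}\phi(\omega)\big) + o(p_T/T),
\]
which is $O(p_T/T)$ uniformly in $\omega$ because $\|\Gamma(p_T)^{-1}\|_1$ is bounded uniformly in $p_T$ (by positivity of the spectral density under \cref{asn:var_dgp}) while $\|\phi(\omega)\|^2 = n_W p_T$. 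To upgrade this pointwise $L^2$ bound to a uniform-in-$\omega$ probability bound, I would discretize $[0,2\pi]$ on a grid of $\asymp p_T$ equally spaced nodes, apply Markov with a union bound at those nodes, and control the oscillation between nodes using the Bernstein-type inequality $\sup_\omega\|C'(\omega)\|\le p_T\sup_\omega\|C(\omega)\|$ for the degree-$p_T$ matrix trigonometric polynomial $C$.

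The hard part is precisely this last stage. The straightforward triangle-plus-Cauchy-Schwarz bound $\|C(\omega)\|^2 \leq (\sum_\ell\|D_\ell\|)^2 \leq p_T\|\hat\beta(p_T)-\beta(p_T)\|^2$, combined with \cref{thm:var_betaconv}, delivers only the crude rate $O_p(p_T^2/T)$; extracting the missing factor of $p_T$ requires the Kronecker-structured covariance computation above — which captures the phase cancellation across lags at each fixed frequency — together with careful chaining over the compact interval $[0,2\pi]$.
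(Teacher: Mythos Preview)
The paper's proof is far simpler than yours: it asserts the direct bound $\|\hat{A}_{\cos}(\omega;p_T)-A_{\cos}(\omega;p_T)\|^2 = \sum_{\ell=1}^{p_T}\|\hat A_\ell-A_\ell\|^2\cos^2(\omega\ell) \le \|\hat\beta(p_T)-\beta(p_T)\|^2$ uniformly in $\omega$, and then invokes \cref{thm:var_betaconv}. You are right to be uneasy about this route---the first equality is not correct in general, since the cross terms $\cos(\omega\ell)\cos(\omega m)\tr(D_\ell'D_m)$ for $\ell\neq m$ do not vanish---so the paper's argument as written does not establish the stated rate, and the honest deterministic bound is precisely the $O_p(p_T^2/T)$ you obtain via the triangle inequality and Cauchy--Schwarz.

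However, your proposed repair has its own gap. The pointwise bound $E\|C(\omega)\|^2=O(p_T/T)$ is fine, and the Bernstein inequality does reduce the supremum over $[0,2\pi]$ to a maximum over a grid of order $p_T$ points (up to a constant factor). But ``Markov with a union bound'' at those $p_T$ nodes costs a full factor of $p_T$: Markov gives $P(\|C(\omega_j)\|^2 > K p_T/T)\le c/K$ at each node, so the union bound yields $P(\max_j\|C(\omega_j)\|^2 > K p_T/T)\le c p_T/K$, which is small only when $K$ grows like $p_T$---landing you back at $O_p(p_T^2/T)$. To actually obtain the uniform rate $O_p(p_T/T)$ you would need either higher-moment control of $\|C(\omega)\|^2$ (so that the union bound over $p_T$ points costs only a logarithmic factor) or a genuine chaining argument with sub-Gaussian or sub-exponential increment bounds, neither of which you have supplied.
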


\begin{lem} \label{thm:var_compact}
Let \cref{asn:var_dgp,asn:var_lagorder} hold. For $M>0$, define $\mathcal{A}_M^0 \equiv \lbrace (B_1,B_2) \in \mathcal{A}_\delta \times \mathbb{R}^{n_W \times n_W} \colon \|B_j - \sum_{\ell=1}^\infty A_\ell\| \leq M,\; j=1,2\rbrace$ and $\mathcal{S}_M^0 = \lbrace \tilde{\Sigma} \in \mathbb{S}_{n_W} \colon \|\tilde{\Sigma}-\Sigma\| \leq M \rbrace$. Then there exists an $M < \infty$ such that
\[P\left((\hat{A}_{\cos}(\omega;p_T),\hat{A}_{\sin}(\omega;p_T)) \in \mathcal{A}_M^0 \text{ for all } \omega \in [0,2\pi],\; \hat{\Sigma}(p_T) \in \mathcal{S}_M^0 \right) \to 1.\]
\end{lem}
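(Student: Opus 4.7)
The plan is to establish three things with probability tending to one: (i) uniform (over $\omega \in [0,2\pi]$) Frobenius-norm boundedness of $\hat{A}_{\cos}(\omega;p_T)$ and $\hat{A}_{\sin}(\omega;p_T)$ relative to the fixed matrix $\sum_{\ell=1}^\infty A_\ell$; (ii) the determinant condition $|\det(I_{n_W} - \hat{A}_{\cos}(\omega;p_T) - i \hat{A}_{\sin}(\omega;p_T))| \geq \delta$ uniformly in $\omega$; and (iii) boundedness of $\hat{\Sigma}(p_T)$ relative to $\Sigma$. All three will be consequences of the earlier lemmas together with \cref{asn:var_dgp,asn:var_lagorder}.

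The key intermediate step is a uniform-in-$\omega$ convergence statement for the sieve trigonometric polynomials. Using the triangle inequality,
\[
\sup_{\omega \in [0,2\pi]} \|\hat{A}_{\cos}(\omega;p_T) - A_{\cos}(\omega)\|
\;\leq\; \sup_{\omega} \|\hat{A}_{\cos}(\omega;p_T) - A_{\cos}(\omega;p_T)\|
\;+\; \sup_{\omega} \|A_{\cos}(\omega;p_T) - A_{\cos}(\omega)\|,
\]
where the first summand is $O_p((p_T/T)^{1/2}) = o_p(1)$ by \cref{thm:var_Acossinconv} and the second is bounded by the deterministic tail $\sum_{\ell=p_T+1}^\infty \|A_\ell\|$, which vanishes by \cref{asn:var_lagorder}. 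The same estimate holds for $\hat{A}_{\sin}$. Since $\sup_\omega \|A_{\cos}(\omega)\| \leq \sum_{\ell=1}^\infty \|A_\ell\| < \infty$ (and similarly for $A_{\sin}$) by \cref{asn:var_dgp}(i), we can pick $M$ larger than $2\sum_{\ell=1}^\infty \|A_\ell\|+1$, which immediately yields part (i) with probability tending to one.

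For part (ii), write $I - A_{\cos}(\omega) - i A_{\sin}(\omega) = A(e^{i\omega})$ and recall that $\delta$ was chosen strictly below $\inf_{\omega \in [0,2\pi]} |\det A(e^{i\omega})|$. The previous step shows that $\hat{A}_{\cos}(\omega;p_T) + i \hat{A}_{\sin}(\omega;p_T)$ converges uniformly in $\omega$, in probability, to $A_{\cos}(\omega) + i A_{\sin}(\omega)$, and on the event appearing in part (i) all these matrices are confined to a fixed bounded subset of $\mathbb{C}^{n_W \times n_W}$. The determinant is uniformly continuous on this bounded set, so the uniform convergence transfers to $|\det(I - \hat{A}_{\cos}(\omega;p_T) - i \hat{A}_{\sin}(\omega;p_T))| \to |\det A(e^{i\omega})|$ uniformly in $\omega$, in probability. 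Combined with $\inf_\omega |\det A(e^{i\omega})| > \delta$, this yields $\inf_\omega |\det(I - \hat{A}_{\cos}(\omega;p_T) - i \hat{A}_{\sin}(\omega;p_T))| \geq \delta$ with probability tending to one.

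Finally, for part (iii), \cref{thm:var_Sigmaconv} gives $\hat{\Sigma}(p_T) = (T-p_T)^{-1}\sum_{t=p_T+1}^T e_t e_t' + o_p(T^{-1/2})$, and the weak law of large numbers applied to the i.i.d. sequence $\{e_t e_t'\}$ — which has finite variance by the eighth-moment condition in \cref{asn:var_dgp}(ii) — yields $(T-p_T)^{-1}\sum_{t=p_T+1}^T e_t e_t' \stackrel{p}{\to} \Sigma$. Hence $\hat{\Sigma}(p_T) - \Sigma = o_p(1)$, so $\|\hat{\Sigma}(p_T)-\Sigma\| \leq M$ eventually with probability one after possibly enlarging $M$. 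No single step is genuinely hard; the only point requiring care is the transfer of uniform-in-$\omega$ convergence to uniform control of the determinant in step (ii), which is handled by noting that the relevant matrices lie in a bounded, hence precompact, set where $|\det(\cdot)|$ is uniformly continuous. \qed
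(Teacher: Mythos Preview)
Your proof is correct and follows essentially the same route as the paper: uniform-in-$\omega$ convergence of $\hat{A}_{\cos}(\cdot;p_T),\hat{A}_{\sin}(\cdot;p_T)$ via \cref{thm:var_Acossinconv} plus the tail $\sum_{\ell>p_T}\|A_\ell\|$, then transfer to the determinant, and finally \cref{thm:var_Sigmaconv} together with the i.i.d.\ law of large numbers for $\hat{\Sigma}(p_T)$. The only cosmetic difference is that the paper makes the determinant step quantitative via the perturbation inequality $|\det B-\det C|\leq n\|B-C\|\max\{\|B\|,\|C\|\}^{n-1}$ \citep[Problem I.6.11]{Bhatia1997}, whereas you invoke uniform continuity of $\det$ on bounded sets; these are equivalent in substance, and your choice of $M>2\sum_{\ell\geq 1}\|A_\ell\|+1$ matches the paper's.
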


\begin{lem} \label{thm:var_deltamethod}
Let \cref{asn:var_dgp,asn:var_param,asn:var_lagorder} hold. Define $\nu_T$ and $\xi$ as in \cref{sec:var_mainresults}.
Then
\[(T-p_T)^{1/2}\left\lbrace (\hat{\psi}(p_T)-\psi) - \nu_T'\ve(\hat{\beta}(p_T)-\beta(p_T)) - \xi'\ve(\hat{\Sigma}-\Sigma) \right\rbrace \stackrel{p}{\to} 0.\]
\end{lem}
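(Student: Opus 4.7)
The plan is to Taylor-expand the nonlinear functional $g$ around the true parameters, integrate against $h$, and show that all non-linear error terms are $o_p(T^{-1/2})$.

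First, I would invoke \cref{thm:var_compact} to obtain an $M<\infty$ such that, with probability approaching one, the estimates $(\hat{A}_{\cos}(\omega;p_T), \hat{A}_{\sin}(\omega;p_T), \hat{\Sigma}(p_T))$ lie in $\mathcal{A}_M^0 \times \mathcal{S}_M^0$ uniformly in $\omega \in [0,2\pi]$; the true curve $(A_{\cos}(\omega), A_{\sin}(\omega), \Sigma)$ also lies uniformly in this same compact subset of $\mathcal{A}_\delta \times \mathbb{S}_{n_W}$, by summability of $\{A_\ell\}$ and the choice of $\delta$. On this event, \cref{asn:var_param} makes $g$ twice continuously differentiable with bounded derivatives, so a pointwise first-order Taylor expansion yields
\begin{align*}
&g(\hat{A}_{\cos}(\omega;p_T), \hat{A}_{\sin}(\omega;p_T), \hat{\Sigma}(p_T)) - g(A_{\cos}(\omega), A_{\sin}(\omega), \Sigma) \\
&\quad = g_1(\omega)' \ve(\hat{A}_{\cos}(\omega;p_T) - A_{\cos}(\omega)) + g_2(\omega)' \ve(\hat{A}_{\sin}(\omega;p_T) - A_{\sin}(\omega)) \\
&\qquad {} + g_3(\omega)' \ve(\hat{\Sigma}(p_T) - \Sigma) + R(\omega),
\end{align*}
with $g_j(\omega)$ the derivatives at the truth and $|R(\omega)| \leq C(\|\hat{A}_{\cos}(\omega;p_T) - A_{\cos}(\omega)\|^2 + \|\hat{A}_{\sin}(\omega;p_T) - A_{\sin}(\omega)\|^2 + \|\hat{\Sigma}(p_T) - \Sigma\|^2)$ for a common constant $C$.

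Second, I would integrate each linear term against $h$. The $\Sigma$-contribution is exactly $\xi'\ve(\hat{\Sigma}(p_T) - \Sigma)$ by definition of $\xi$. For the cosine and sine pieces I would split
\[\hat{A}_{\cos}(\omega;p_T) - A_{\cos}(\omega) = \bigl[\hat{A}_{\cos}(\omega;p_T) - A_{\cos}(\omega;p_T)\bigr] + \bigl[A_{\cos}(\omega;p_T) - A_{\cos}(\omega)\bigr],\]
and analogously for sine. Interchanging the finite $\ell$-sum with the $\omega$-integral, the sampling pieces assemble -- via the definition of $\nu_{\ell,T}$ -- into $\nu_T'\ve(\hat{\beta}(p_T) - \beta(p_T))$. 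The truncation pieces are $o(T^{-1/2})$: Cauchy--Schwarz gives
\[\left|\int_0^{2\pi} h(\omega) g_1(\omega)' \ve(A_{\cos}(\omega;p_T) - A_{\cos}(\omega))\,d\omega\right| \leq \left(\sum_{\ell>p_T}\|A_\ell\|\right) \int_0^{2\pi}|h(\omega)|\,\|g_1(\omega)\|\,d\omega,\]
where the first factor is $o(T^{-1/2})$ by \cref{asn:var_lagorder} and the second is finite since $h$ is continuous on $[0,2\pi]$ and $g_1 \in L_2(0,2\pi)$ by \cref{asn:var_param}.

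Third, I would control the quadratic remainder $\int_0^{2\pi} h(\omega) R(\omega)\,d\omega$. \cref{thm:var_Acossinconv}, combined with the uniform truncation bound above, gives $\sup_\omega \|\hat{A}_{\cos}(\omega;p_T) - A_{\cos}(\omega)\|^2 + \sup_\omega \|\hat{A}_{\sin}(\omega;p_T) - A_{\sin}(\omega)\|^2 = O_p(p_T/T)$, while \cref{thm:var_Sigmaconv} together with a standard second-moment CLT (leveraging the 8-moment hypothesis in \cref{asn:var_dgp}) gives $\|\hat{\Sigma}(p_T) - \Sigma\|^2 = O_p(T^{-1})$. Hence $\sup_\omega |R(\omega)| = O_p(p_T/T) = o_p(T^{-1/2})$, since $p_T^3/T \to 0$ implies $p_T = o(T^{1/2})$; boundedness of $h$ then delivers the conclusion. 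The main obstacle is the simultaneous uniform-in-$\omega$ bookkeeping of three qualitatively different error sources -- sieve sampling error, VAR truncation bias, and Taylor remainder -- and it is no coincidence that the twin rate conditions in \cref{asn:var_lagorder} are calibrated so that each source vanishes faster than $T^{-1/2}$.
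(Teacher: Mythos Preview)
Your proposal is correct and matches the paper's own proof essentially step for step: invoke \cref{thm:var_compact} to justify a uniform second-order Taylor expansion of $g$, split the linear $A_{\cos}/A_{\sin}$ terms into a sampling piece (which assembles into $\nu_T'\ve(\hat{\beta}-\beta)$) and a truncation-bias piece (which is $o(T^{-1/2})$ by \cref{asn:var_lagorder} and the $L_2$ integrability of $g_1,g_2$), and bound the quadratic remainder uniformly via \cref{thm:var_Acossinconv,thm:var_Sigmaconv} to get $O_p(p_T/T)=o_p(T^{-1/2})$. The only cosmetic difference is that you phrase the truncation-bias bound via Cauchy--Schwarz on the integrand whereas the paper pulls out $\sup_\omega\|A_{\cos}(\omega;p_T)-A_{\cos}(\omega)\|$ directly, but the resulting inequality is identical.
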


\subsection{Proof of \texorpdfstring{\cref{thm:var_betaconv}}{Lemma \ref{thm:var_betaconv}}}
The result follows almost directly from the proof of Thm. 1 in \citet{Lewis1985}. As in that proof, define
\[U_{1,T} \equiv \frac{1}{T-p_T}\sum_{t=p_T+1}^T (e_t(p_T)-e_t)X_t(p_T)',\quad U_{2,T} \equiv \frac{1}{T-p_T}\sum_{t=p_T+1}^T e_t X_t(p_T)'.\]
\citeauthor{Lewis1985}'s arguments show that $\|U_{1,T}\| = O_p(p_T^{1/2}\sum_{\ell=p_T+1}^\infty \|A_\ell\|) = o_p((p_T/T)^{1/2})$ and $\|U_{2,T}\| = O_p((p_T/T)^{1/2})$ under \cref{asn:var_dgp,asn:var_lagorder}. The rest of the arguments in \citeauthor{Lewis1985}'s proof now yields the desired convergence rate of $\hat{\beta}(p_T)$. \qed

\subsection{Proof of \texorpdfstring{\cref{thm:var_Sigmaconv}}{Lemma \ref{thm:var_Sigmaconv}}}
Recall the notation $U_{1,T}$ and $U_{2,T}$ in the proof of \cref{thm:var_betaconv}. Since
\begin{align*}
\hat{\Sigma} &= \frac{1}{T-p_T}\sum_{t=p_T+1}^T e_te_t'  + \frac{1}{T-p_T}\sum_{t=p_T+1}^T (\hat{e}_t-e_t)e_t' \\
&\quad +\frac{1}{T-p_T}\sum_{t=p_T+1}^T e_t(\hat{e}_t-e_t)' + \frac{1}{T-p_T}\sum_{t=p_T+1}^T (\hat{e}_t-e_t)(\hat{e}_t-e_t)' \\
&\equiv \frac{1}{T-p_T}\sum_{t=p_T+1}^T e_te_t' + R_{1,T} + R_{1,T}' + R_{2,T},
\end{align*}
we need to show $R_{1,T} = o_p(T^{-1/2})$ and $R_{2,T} = o_p(T^{-1/2})$.

Decompose $R_{1,T}$ as
\[R_{1,T} = \frac{1}{T-p_T}\sum_{t=p_T+1}^T (\hat{e}_t-e_t(p_T))e_t' + \frac{1}{T-p_T}\sum_{t=p_T+1}^T (e_t(p_T)-e_t)e_t' \equiv \tilde{R}_{1,T} + \tilde{R}_{2,T}.\]
Since $\hat{e}_t(p_T)-e_t(p_T) = (\beta(p_T)-\hat{\beta}(p_T))X_t(p_T)$, we have
\[\|\tilde{R}_{1,T}\| \leq \|\hat{\beta}(p_T)-\beta(p_T)\|\, \|U_{2,T}\| = O_p((p_T/T)^{1/2})O_p((p_T/T)^{1/2}) = o(T^{-1/2}).\]
Moreover, since $e_t-e_t(p_T) = \sum_{\ell=p_T+1}^\infty A_\ell W_{t-\ell}$,
\begin{align*}
E\|\tilde{R}_{2,T}\| &\leq \frac{1}{T-p_T}\sum_{t=p_T+1}^T \sum_{\ell=p_T+1}^\infty \|A_\ell\| E(\|W_{t-\ell} e_t'\|)  \\
&\leq \sum_{\ell=p_T+1}^\infty \|A_\ell\| (E\|W_t\|^2 \, E\|e_t\|^2)^{1/2} \\
&= \text{constant} \times \sum_{\ell=p_T+1}^\infty \|A_\ell\| \\
&= o_p(T^{-1/2}).
\end{align*}
Now decompose $R_{2,T}$ as
\begin{align*}
\frac{1}{T-p_T}\sum_{t=p_T+1}^T (\hat{e}_t-e_t)(\hat{e}_t-e_t)' &= \frac{1}{T-p_T}\sum_{t=p_T+1}^T (\hat{e}_t-e_t(p_T))(\hat{e}_t-e_t(p_T))' \\
&\qquad + \frac{1}{T-p_T}\sum_{t=p_T+1}^T (\hat{e}_t-e_t(p_T))(e_t(p)-e_t)' \\
&\qquad + \frac{1}{T-p_T}\sum_{t=p_T+1}^T (e_t(p)-e_t)(\hat{e}_t-e_t(p_T))' \\
&\qquad + \frac{1}{T-p_T}\sum_{t=p_T+1}^T (e_t(p_T)-e_t)(e_t(p_T)-e_t)' \\
&\equiv \hat{R}_{1,T} + \hat{R}_{2,T} + \hat{R}_{2,T}' + \hat{R}_{3,T}.
\end{align*}
We have
\begin{align*}
\|\hat{R}_{1,T}\| &\leq \|\hat{\beta}(p_T)-\beta(p_T)\|^2 \|\hat{\Gamma}(p_T)\|_1 \\
& \leq \|\hat{\beta}(p_T)-\beta(p_T)\|^2 (\|\hat{\Gamma}(p_T)-\Gamma(p_T)\|_1 + \|\Gamma(p_T)\|_1) \\
&= O_p(p_T/T),
\end{align*}
using \cref{thm:var_Gammaconv} and \cref{thm:var_betaconv}. Further,
\[\|\hat{R}_{2,T}\| \leq \|\hat{\beta}(p_T)-\beta(p_T)\|\,\|U_{1,T}\| = O_p((p_T/T)^{1/2})o_p((p/T)^{1/2}) = o_p(T^{-1/2}).\]
Finally,
\begin{align*}
E\|\hat{R}_{3,T}\| &\leq E\|e_t(p_T)-e_t\|^2 \\
&\leq \textstyle \sum_{\ell=p_T+1}^\infty \sum_{m=p_T+1}^\infty \|A_\ell\|\,\|A_m\|\, E(\|W_{t-\ell}\|\,\|W_{t-m}\|) \\
&\leq \text{constant} \times \left({\textstyle  \sum_{\ell=p_T+1}^\infty   }\|A_\ell\| \right)^2 \\
&= o(T^{-1}). \qed
\end{align*}

\subsection{Proof of \texorpdfstring{\cref{thm:var_sumcov}}{Lemma \ref{thm:var_sumcov}}}
By stationarity,
\begin{align}
& \frac{1}{T-p}\sum_{t=p+1}^T \sum_{s=p+1}^T \big|\cov(e_{j_1,t+m_1}e_{j_2,t+m_2}e_{j_3,t}e_{j_4,t},e_{j_1,s+m_3}e_{j_2,s+m_4}e_{j_3,s}e_{j_4,s})\big| \nonumber \\
&= \sum_{\ell=-(T-p-1)}^{T-p-1} \left(1 - \frac{|\ell|}{T-p} \right) \big|\cov(e_{j_1,\ell+m_1}e_{j_2,\ell+m_2}e_{j_3,\ell}e_{j_4,\ell},e_{j_1,m_3}e_{j_2,m_4}e_{j_3,0}e_{j_4,0})\big|. \label{eqn:sumcov}
\end{align}
We first argue that each term in the sum \eqref{eqn:sumcov} is bounded. This follows from Cauchy-Schwarz:
\begin{align*}
&\big|\cov(e_{j_1,\ell+m_1}e_{j_2,\ell+m_2}e_{j_3,\ell}e_{j_4,\ell},e_{j_1,m_3}e_{j_2,m_4}e_{j_3,0}e_{j_4,0})\big| \\
&\leq \left(\var(e_{j_1,\ell+m_1}e_{j_2,\ell+m_2}e_{j_3,\ell}e_{j_4,\ell})\var(e_{j_1,m_3}e_{j_2,m_4}e_{j_3,0}e_{j_4,0}) \right)^{1/2} \\
&\leq \max_{1 \leq j \leq n_W} E(e_{j,t}^8) \\
&\leq E\|e_t\|^8.
\end{align*}
Next, we show that at most 9 of the terms in the sum \eqref{eqn:sumcov} are nonzero. Consider the term corresponding to a given index $\ell$ in the sum. For the covariance in the term to be nonzero, it must be the case that $\lbrace \ell+m_1,\ell+m_2,\ell \rbrace \cap \lbrace m_3,m_4,0 \rbrace \neq \emptyset$ (otherwise the two variables in the covariance would be independent). At most 9 values of $\ell$ have this property.

Putting the preceding two results together, we obtain the statement of the lemma. \qed

\subsection{Proof of \texorpdfstring{\cref{thm:var_sumnegl}}{Lemma \ref{thm:var_sumnegl}}}
We first remark that \cref{asn:var_dgp} implies $\lbrace W_t \rbrace$ is a strictly non-deterministic time series with Wold innovation $e_t$. Thus, the Wold representation $W_t = B(L)e_t$ has $B(L) = \sum_{\ell=0}^\infty B_\ell L^\ell = A(L)^{-1}$, and so for fixed $i,j$,  the elements $B_{i,j,\ell}$ of $B_\ell$ are absolutely summable across $\ell$ \citep[p. 418]{Brockwell1991}.

Define the $n_W^2p_T \times n_W^2 p_T$ matrix
\[R_{1,T} \equiv \frac{1}{T-p_T} \sum_{t=p_T+1}^T \ve(e_t X_t(p_T)')\ve(e_t X_t(p_T)')' - E\left[\ve(e_t X_t(p_T)')\ve(e_t X_t(p_T)')'\right]\]
with elements $R_{1,T,i,j}$. Then $\|R_{1,T}\|^2 = \sum_{i,j=1}^{n_W^2p_T} R_{1,T,i,j}^2$, and the first statement of the lemma follows if we can show that $E(R_{1,T,i,j}^2) = O(T^{-1})$ uniformly in $i,j$. Since $E(R_{1,T,i,j})=0$ for all $i,j$, we need to show that $\var(R_{1,T,i,j})=O(T^{-1})$ uniformly in $i,j$. The typical element $R_{1,T,i,j}$ has the form
\[\frac{1}{T-p_T} \sum_{t=p_T+1}^T e_{j_1,t} W_{j_2,t-m_1} e_{j_3,t} W_{j_4,t-m_2} - E\left[e_{j_1,t} W_{j_2,t-m_1} e_{j_3,t} W_{j_4,t-m_2}\right]\]
for appropriate $j_1,j_2,j_3,j_4,m_1,m_2 \in \mathbb{N}$. Here $W_{j,t}$ is the $j$-th element of $W_t$, and similarly for $e_t$. The variance of the above expression is given by
\begin{equation} \label{eqn:var_sumcov}
\frac{1}{(T-p_T)^2}\sum_{t=p_T+1}^T \sum_{s=p_T+1}^T \cov(e_{j_1,t} W_{j_2,t-m_1} e_{j_3,t} W_{j_4,t-m_2},e_{j_1,s} W_{j_2,s-m_1} e_{j_3,s} W_{j_4,s-m_2}).
\end{equation}
Using the above-mentioned Wold decomposition of $\lbrace W_t \rbrace$, we can write
\[ \textstyle W_{j_2,t-m_1} = \sum_{b_1=1}^{n_W} \sum_{\ell_1=0}^\infty  B_{j_2,b_1,\ell_1} e_{b_1,t-m_1-\ell_1},\]
say. Hence, the expression \eqref{eqn:var_sumcov} equals
\begin{align*}
&\frac{1}{T-p_T}\sum_{\ell_1,\ell_2,\ell_3,\ell_4=0}^\infty \sum_{b_1,b_2,b_3,b_4=1}^{n_W}  B_{j_2,b_1,\ell_1}B_{j_4,b_2,\ell_2}B_{j_2,b_3,\ell_3}B_{j_4,b_3,\ell_4} \\
&\quad \times \frac{1}{T-p_T} \sum_{s,t=p_T+1}^T \cov\left( e_{j_1,t} e_{b_1,t-m_1-\ell_1} e_{j_3,t} e_{b_2,t-m_2-\ell_2},e_{j_1,s} e_{b_3,s-m_1-\ell_3} e_{j_3,s} e_{b_4,s-m_2-\ell_4}\right).
\end{align*}
According to \cref{thm:var_sumcov}, the above display is bounded by
\begin{equation} \label{eqn:var_sumcovbound}
\frac{1}{T-p_T}\sum_{\ell_1,\ell_2,\ell_3,\ell_4=0}^\infty \sum_{b_1,b_2,b_3,b_4=1}^{n_W}  |B_{j_2,b_1,\ell_1}B_{j_4,b_2,\ell_2}B_{j_2,b_3,\ell_3}B_{j_4,b_4,\ell_4}| \times 9 E\|e_t\|^8 = O(T^{-1}),
\end{equation}
where the equality uses the previously-mentioned absolute summability of $\lbrace B_\ell \rbrace$. This concludes the proof of the first statement of the lemma.

We prove the second statement of the lemma in a similar fashion. Define the $n_W^2 p_T \times n_W^2$ matrix
\[R_{2,T} \equiv \frac{1}{T-p_T} \sum_{t=p_T+1}^T \ve(e_t X_t(p_T)')\ve(e_te_t'-\Sigma)'.\]
Decompose it as
\begin{align*}
R_{2,T} &= \frac{1}{T-p_T} \sum_{t=p_T+1}^T \ve(e_t X_t(p_T)')\ve(e_te_t')' - \frac{1}{T-p_T} \sum_{t=p_T+1}^T \ve(e_t X_t(p_T)')\ve(\Sigma)' \\
&\equiv \tilde{R}_{1,T} - \tilde{R}_{2,T}.
\end{align*}
Since $\lbrace \ve(e_t X_t(p_T)') \rbrace$ is a serially uncorrelated $(n_Wp_T)$-dimensional sequence, it is easy to show that $E\|\tilde{R}_{2,T}\|^2 = O_p(p_T/T)$. Consider now the matrix $\tilde{R}_{1,T}$. Its typical element
\[(T-p_T)^{-1} \textstyle \sum_{t=p_T+1}^T e_{j_1,t} W_{j_2,t-m}e_{j_3,t}e_{j_4,t}\]
has mean zero due to the independence of $e_t$ and $W_{t-m}$ for $m\geq 1$. We need to show that it has variance of order $O(T^{-1})$. Said variance equals
\begin{align*}
& \frac{1}{(T-p_T)^2}\sum_{s,t=p_T+1}^T \cov\left(e_{j_1,t} W_{j_2,t-m}e_{j_3,t}e_{j_4,t}, e_{j_1,s} W_{j_2,s-m}e_{j_3,s}e_{j_4,s} \right) \\
&=\frac{1}{T-p_T}\sum_{\ell_1,\ell_2=0}^\infty \sum_{b_1,b_2=1}^{n_W}  B_{j_2,b_1,\ell_1}B_{j_2,b_2,\ell_2} \\
&\qquad \times \frac{1}{T-p_T} \sum_{s,t=p_T+1}^T \cov\left( e_{j_1,t} e_{b_1,t-m-\ell_1} e_{j_3,t} e_{j_4,t},e_{j_1,s} e_{b_2,s-m-\ell_2} e_{j_3,s} e_{j_4,s}\right).
\end{align*}
This expression is of order $O(T^{-1})$, for the same reason as \eqref{eqn:var_sumcovbound} above. \qed

\subsection{Proof of \texorpdfstring{\cref{thm:var_asynorm}}{Lemma \ref{thm:var_asynorm}}}
This result is very similar to Thm. 2 in \citet{Lewis1985}, with the twist that we here deal also with the convergence of $\hat{\Sigma}$. Define $v_{\zeta,T} \equiv \tilde{\nu}_T'(\Gamma(p_T)^{-1} \otimes \Sigma)\tilde{\nu}_T$ for all $T$. If $v_\zeta \equiv \lim_{T\to\infty} v_{\zeta,T}=0$, it is easy to show that $(T-p_T)^{1/2}\tilde{\nu}_T'\ve(\hat{\beta}(p_T)-\beta(p_T)) = o_p(1)$ using \cref{thm:var_approx} and an mean-square bound, so in the following we assume $v_\zeta > 0$. By \cref{thm:var_approx} and the Cram\'{e}r-Wold device, we need to show that, for any $\lambda \in \mathbb{R}^{n_W^2}$,
\[\sum_{t=p_T+1}^{T} J_{t,T} \stackrel{d}{\to} N(0, 1) ,\]
where we define the triangular array
\[J_{t,T} \equiv \frac{\tilde{\nu}_T'\left(\Gamma(p_T)^{-1}X_t(p_T) \otimes e_t\right) + \lambda'\ve(e_t e_t'-\Sigma)}{(T-p_T)^{1/2}\big(v_{\zeta,T} + \lambda' \var(e_t \otimes e_t)\lambda\big)^{1/2}},\quad t=p_T+1,\dots,T,\; T \in \mathbb{N}.\]
Since $\lbrace e_t \rbrace$ is i.i.d., $e_t$ is independent of $X_t(p_T)$, so $\lbrace J_{t,T} \rbrace_{p_T+1 \leq t \leq T}$ is a martingale difference sequence with respect to the filtration generated by $\lbrace e_t \rbrace$. Also, since $E[X_t(p_T)]=0$, we have $E(J_{t,T}^2) = (T-p_T)^{-1}$. The statement of the lemma then follows from \citet[Thm. 24.3]{Davidson1994} if we can show
\begin{equation} \label{eqn:var_mdsvar}
\textstyle \sum_{t=p_T+1}^T J_{t,T}^2 \stackrel{p}{\to} 1
\end{equation}
and
\begin{equation} \label{eqn:var_negl}
\max_{p_T+1 \leq t \leq T}|J_{t,T}| \stackrel{p}{\to} 0.
\end{equation}

We first prove \eqref{eqn:var_mdsvar}, following the univariate argument in \citet[pp. 633--636]{Goncalves2007}. Decompose
\begin{align*}
\sum_{t=p_T+1}^T J_{t,T}^2 - 1 &= \lbrace v_{\zeta,T} + \lambda' \var(e_t \otimes e_t)\lambda \rbrace^{-1} \\
&\qquad \times \bigg\lbrace \frac{1}{T-p_T}\sum_{t=p_T+1}^T \left[\left(\tilde{\nu}_T'(\Gamma(p_T)^{-1}X_t(p_T) \otimes e_t)\right)^2 - v_{\zeta,T}\right] \\
&\qquad\qquad + \frac{2}{T-p_T}\sum_{t=p_T+1}^T\tilde{\nu}_T'(\Gamma(p_T)^{-1}X_t(p_T) \otimes e_t)\ve(e_t e_t'-\Sigma)'\lambda  \\
&\qquad\qquad+ \frac{1}{T-p_T}\sum_{t=p_T+1}^T \left[ \left(\lambda'\ve(e_t e_t'-\Sigma)\right)^2 - \lambda' \var(e_t \otimes e_t)\lambda\right] \bigg\rbrace \\
&\equiv \lbrace v_{\zeta,T} + \lambda' \var(e_t \otimes e_t)\lambda \rbrace^{-1} \big\lbrace R_{1,T} + 2 R_{2,T} + R_{3,T} \big\rbrace.
\end{align*}
The i.i.d. law of large numbers implies that $R_{3,T}=o_p(1)$. We now show that also $R_{1,T}=o_p(1)$ and $R_{2,T}=o_p(1)$. First,
\begin{align*}
&|R_{1,T}| \\
&= \bigg| \tilde{\nu}_T'(\Gamma(p_T)^{-1} \otimes I_{n_W}) \bigg\lbrace \frac{1}{T-p_T}\sum_{t=p_T+1}^T \ve(e_t X_t(p_T)')\ve(e_t X_t(p_T)')' \\
&\qquad\qquad\qquad\qquad\qquad\qquad - E\left[ \ve(e_t X_t(p_T)')\ve(e_t X_t(p_T)')' \right] \bigg\rbrace (\Gamma(p_T)^{-1} \otimes I_{n_W}) \tilde{\nu}_T \bigg| \\
&\leq \|\tilde{\nu}_T\|^2\, \|\Gamma(p_T)^{-1}\|_1^2 \\
&\qquad \times \bigg\| \frac{1}{T-p_T}\sum_{t=p_T+1}^T \ve(e_t X_t(p_T)')\ve(e_t X_t(p_T)')' - E\left[ \ve(e_t X_t(p_T)')\ve(e_t X_t(p_T)')' \right] \bigg\| \\
&= o_p(1),
\end{align*}
where the last line follows from \cref{thm:var_sumnegl} and \cref{asn:var_dgp,asn:var_lagorder}. Second, we analogously have
\begin{align*}
|R_{2,T}| &\leq \|\tilde{\nu}_T\|\,\|\lambda\|\, \|\Gamma(p_T)^{-1}\|_1 \, \bigg\|\frac{1}{T-p_T}\sum_{t=p_T+1}^T \ve(e_t X_t(p_T)')\ve(e_te_t-\Sigma)' \bigg\| \\
&= o_p(1),
\end{align*}
again using \cref{thm:var_sumnegl} and \cref{asn:var_dgp,asn:var_lagorder}. This concludes the proof of \eqref{eqn:var_mdsvar}.

To prove \eqref{eqn:var_negl}, first note that since $E\|e_t\|^{4+\epsilon}<\infty$ for some $\epsilon>0$, a standard argument for i.i.d. variables gives that $(T-p_T)^{-1/2}\max_{p_T+1 \leq t \leq T} |\lambda'\ve(e_te_t'-\Sigma)| = o_p(1)$. Next, the same calculations as in equation (2.12) in \citet[p. 401]{Lewis1985} yield
\begin{align*}
& P\left(\max_{p_T+1\leq t \leq T} \frac{(\tilde{\nu}_T'\left(\Gamma(p_T)^{-1}X_t(p_T) \otimes e_t\right))^2}{T-p_T} \geq \tilde{\epsilon} \right) \\
&\leq \frac{1}{\tilde{\epsilon}^2}\frac{p_T^2}{(T-p_T)}\|\tilde{\nu}_T\|^4 \|\Gamma(p_T)^{-1}\|_1^4 E\|e_t\|^4 E\|W_t\|^4 \\
&\to 0
\end{align*}
for any $\tilde{\epsilon}>0$. Putting the previous two facts together, we obtain \eqref{eqn:var_negl}. \qed

\subsection{Proof of \texorpdfstring{\cref{thm:var_Acossinconv}}{Lemma \ref{thm:var_Acossinconv}}}
For any $\omega \in [0,2\pi]$,
\begin{align*}
\|\hat{A}_{\cos}(\omega;p_T)-A_{\cos}(\omega;p_T)\|^2 &= \textstyle \sum_{\ell=1}^{p_T} \|\hat{A}_\ell-A_\ell\|^2 \cos^2(\omega\ell) \\
&\leq \textstyle \sum_{\ell=1}^{p_T} \|\hat{A}_\ell-A_\ell\|^2 \\
&= \|\hat{\beta}(p_T)-\beta(p_T)\|^2 \\
&= O(p_T/T),
\end{align*}
using \cref{thm:var_betaconv}. The argument for $A_{\sin}$ is identical.  \qed

\subsection{Proof of \texorpdfstring{\cref{thm:var_compact}}{Lemma \ref{thm:var_compact}}}
We start off by showing that the estimated VAR spectrum is nonsingular, asymptotically. Extend the definition of the Frobenius norm to complex matrices, so $\|B\|^2 \equiv \tr(B^*B)$. The matrix perturbation bound $|\det(B)-\det(C)| \leq n \|C-B\|\max\lbrace \|B\|,\|C\|\rbrace^{n-1}$ for $n \times n$ complex matrices $B$ and $C$ \citep[Problem I.6.11, p. 22]{Bhatia1997} implies
\begin{align}
&\left|\det(A(e^{i\omega})) - \det(I_{n_W} - \hat{A}_{\cos}(\omega;p_T)- i \hat{A}_{\sin}(\omega;p_T))\right| \nonumber \\
&\leq n_W \left\|\sum_{\ell=p_T+1}^\infty A_\ell e^{i\omega} - \sum_{\ell=1}^{p_T} (\hat{A}_\ell-A_\ell) e^{i\omega}  \right\| \max\left\lbrace \left\|\sum_{\ell=1}^\infty A_\ell e^{i\omega} \right\|,\left\|\sum_{\ell=1}^{p_T} \hat{A}_\ell e^{i\omega} \right\| \right\rbrace^{n_W-1}. \label{eqn:var_detbound}
\end{align}
\cref{thm:var_Acossinconv} implies
\[\sup_{\omega \in [0,2\pi]} \left\|\sum_{\ell=1}^{p_T} (\hat{A}_\ell-A_\ell) e^{i\omega}\right\| = o_p(1).\]
By \cref{asn:var_dgp,asn:var_lagorder}, the right-hand side of \eqref{eqn:var_detbound} therefore tends to 0 in probability uniformly in $\omega$, implying
\[\inf_{\omega \in [0,2\pi]} \left|\det(I_{n_W} - \hat{A}_{\cos}(\omega;p_T)- i \hat{A}_{\sin}(\omega;p_T))\right| = \inf_{\omega \in [0,2\pi]} |\det(A(e^{i\omega}))| + o_p(1) > \delta + o_p(1).\]
Thus, with probability approaching 1,
\[(\hat{A}_{\cos}(\omega;p_T),\hat{A}_{\sin}(\omega;p_T)) \in \mathcal{A}_\delta\quad \text{for all } \omega \in [0,2\pi].\]
We now show that, asymptotically, the estimated VAR spectrum lies in a region where $g(\cdot)$ is smooth. Let $M \equiv \max\lbrace 2\sum_{\ell=1}^\infty \|A_\ell\|,\|\Sigma\|\rbrace + 1$. By \cref{asn:var_param}, $g(\cdot,\cdot,\cdot)$ is continuously differentiable on $\mathcal{A}_M^0 \times \mathcal{S}_M^0$. Since
\[\left\|\hat{A}_{\cos}(\omega;p_T)-\sum_{\ell=1}^\infty A_\ell\right\| \leq \left\|\hat{A}_{\cos}(\omega;p_T)-A_{\cos}(\omega;p_T)\right\| + 2\sum_{\ell=1}^{\infty}\|A_\ell\| = 2\sum_{\ell=1}^{\infty}\|A_\ell\| + o_p(1)\]
uniformly in $\omega$ by \cref{thm:var_Acossinconv} and \cref{asn:var_lagorder} (and similarly for sin instead of cos), it follows that, with probability approaching 1,
\[(\hat{A}_{\cos}(\omega;p_T),\hat{A}_{\sin}(\omega;p_T)) \in \mathcal{A}_M^0\quad \text{for all } \omega \in [0,2\pi]. \]
Moreover, by the law of large numbers for i.i.d. variables and \cref{thm:var_Sigmaconv}, we also have $\hat{\Sigma}(p_T) \in \mathcal{S}_M^0$ with probability approaching 1. \qed

\subsection{Proof of \texorpdfstring{\cref{thm:var_deltamethod}}{Lemma \ref{thm:var_deltamethod}}}
We start out by applying a first-order Taylor expansion to the parameter of interest $\psi$. By \cref{thm:var_compact} and \cref{asn:var_param}, we can write
\begin{align*}
 & g(\hat{A}_{\cos}(\omega;p_T),\hat{A}_{\sin}(\omega;p_T),\hat{\Sigma}) - g(A_{\cos}(\omega;p_T),A_{\sin}(\omega;p_T),\Sigma) \\
&= g_1(A_{\cos}(\omega),A_{\sin}(\omega),\Sigma)'\ve(\hat{A}_{\cos}(\omega;p_T)-A_{\cos}(\omega)) \\
&\quad + g_2(A_{\cos}(\omega),A_{\sin}(\omega),\Sigma)'\ve(\hat{A}_{\sin}(\omega;p_T)-A_{\sin}(\omega))  \\
&\quad + g_3(A_{\cos}(\omega),A_{\sin}(\omega),\Sigma)'\ve(\hat{\Sigma}-\Sigma) \\
&\quad + \hat{R}_T(\omega),
\end{align*}
where the fact that $g(\cdot,\cdot,\cdot)$ is twice continuously differentiable implies that there exists a $C>0$ such that the remainder satisfies
\[|\hat{R}_T(\omega)| \leq C\left(\|\hat{A}_{\cos}(\omega;p_T)-A_{\cos}(\omega)\|^2 + \|\hat{A}_{\sin}(\omega;p_T)-A_{\sin}(\omega)\|^2 + \|\hat{\Sigma}-\Sigma\|^2\right)\]
for all $\omega$, with probability approaching 1. Since
\[\|\hat{A}_{\cos}(\omega;p_T)-A_{\cos}(\omega)\| \leq \textstyle \sum_{\ell=p_T+1}^\infty\|A_\ell\| + \|\hat{A}_{\cos}(\omega;p_T)-A_{\cos}(\omega;p_T)\| = O_p((p_T/T)^{1/2})\]
by \cref{thm:var_Acossinconv} and \cref{asn:var_lagorder} (and similarly with sin instead of cos), and since $\|\hat{\Sigma}-\Sigma\| = O_p(T^{-1/2})$ by \cref{thm:var_Sigmaconv}, we obtain
\[\int_0^{2\pi}|\hat{R}_T(\omega)|\, d\omega = O_p(p_T/T).\]
Using the continuity and thus boundedness of $h(\cdot)$, we therefore get
\begin{align}
\hat{\psi}(p_T)-\psi &= \int_0^{2\pi} h(\omega) g_1(A_{\cos}(\omega),A_{\sin}(\omega),\Sigma)'\ve(\hat{A}_{\cos}(\omega;p_T)-A_{\cos}(\omega))\, d\omega \nonumber \\ 
&\quad + \int_0^{2\pi} h(\omega) g_2(A_{\cos}(\omega),A_{\sin}(\omega),\Sigma)'\ve(\hat{A}_{\sin}(\omega;p_T)-A_{\sin}(\omega)) \, d\omega \nonumber \\
&\quad + \int_0^{2\pi}h(\omega)  g_3(A_{\cos}(\omega),A_{\sin}(\omega),\Sigma)'\ve(\hat{\Sigma}-\Sigma)  \, d\omega \nonumber \\
&\quad + O_p(p_T/T) \nonumber \\
&= \int_0^{2\pi}h(\omega) g_1(A_{\cos}(\omega),A_{\sin}(\omega),\Sigma)'\ve(\hat{A}_{\cos}(\omega;p_T)-A_{\cos}(\omega;p_T))\, d\omega \nonumber \\
&\quad + \int_0^{2\pi} h(\omega) g_2(A_{\cos}(\omega),A_{\sin}(\omega),\Sigma)'\ve(\hat{A}_{\sin}(\omega;p_T)-A_{\sin}(\omega;p_T)) \, d\omega \nonumber \\
&\quad + \int_0^{2\pi}h(\omega) g_1(A_{\cos}(\omega),A_{\sin}(\omega),\Sigma)'\ve(A_{\cos}(\omega;p_T)-A_{\cos}(\omega))\, d\omega \label{eqn:var_biascos} \\
&\quad + \int_0^{2\pi} h(\omega) g_2(A_{\cos}(\omega),A_{\sin}(\omega),\Sigma)'\ve(A_{\sin}(\omega;p_T)-A_{\sin}(\omega)) \, d\omega \label{eqn:var_biassin} \\
&\quad + \xi'\ve(\hat{\Sigma}-\Sigma) \nonumber \\
&\quad + O_p(p_T/T). \nonumber
\end{align}
We now bound the nonparametric bias term \eqref{eqn:var_biascos}; the argument for \eqref{eqn:var_biassin} is similar. 
Note that $h(\cdot)$ is bounded, and
\begin{align*}
&\int_0^{2\pi} \|g_1(A_{\cos}(\omega),A_{\sin}(\omega),\Sigma)'\ve(A_{\cos}(\omega;p_T)-A_{\cos}(\omega))\|\,d\omega \\
&\leq \int_0^{2\pi} \|g_1(A_{\cos}(\omega),A_{\sin}(\omega),\Sigma)\|\,d\omega \times \sup_{\omega \in [0,2\pi]} \|A_{\cos}(\omega;p_T)-A_{\cos}(\omega)\| \\
&\leq \int_0^{2\pi} \|g_1(A_{\cos}(\omega),A_{\sin}(\omega),\Sigma)\|\,d\omega \times \sum_{\ell=p_T+1}^\infty \|A_\ell\| \\
&= o(T^{-1/2}),
\end{align*}
by \cref{asn:var_lagorder}. We also used that \cref{asn:var_param} implies $\omega \mapsto \|g_1(A_{\cos}(\omega),A_{\sin}(\omega),\Sigma)\|$ is in $L_2(0,2\pi)$, implying that this function is integrable. Thus, the terms \eqref{eqn:var_biascos}--\eqref{eqn:var_biassin} are each $o(T^{-1/2})$.

To complete the proof, observe that
\begin{align*}
&\int_0^{2\pi}h(\omega) g_1(A_{\cos}(\omega),A_{\sin}(\omega),\Sigma)'\ve(\hat{A}_{\cos}(\omega;p_T)-A_{\cos}(\omega;p_T))\, d\omega \\
&\quad + \int_0^{2\pi} h(\omega) g_2(A_{\cos}(\omega),A_{\sin}(\omega),\Sigma)'\ve(\hat{A}_{\sin}(\omega;p_T)-A_{\sin}(\omega;p_T)) \, d\omega \\
&= \int_0^{2\pi}h(\omega) g_1(A_{\cos}(\omega),A_{\sin}(\omega),\Sigma)'\sum_{\ell=1}^{p_T} \ve(\hat{A}_\ell-A_\ell)\cos(\omega \ell)\, d\omega \\
&\quad + \int_0^{2\pi} h(\omega) g_2(A_{\cos}(\omega),A_{\sin}(\omega),\Sigma)'\sum_{\ell=1}^{p_T} \ve(\hat{A}_\ell-A_\ell)\sin(\omega \ell) \, d\omega \\
&= \sum_{\ell=1}^{p_T} \nu_{\ell,T}'\ve(\hat{A}_\ell-A_\ell).
\end{align*}
In conclusion,
\begin{align*}
\hat{\psi}(p_T)-\psi &= \nu_T'\ve(\hat{\beta}(p_T)-\beta(p_T)) + \xi'\ve(\hat{\Sigma}-\Sigma) + o(T^{-1/2}) + O_p(p_T/T).
\end{align*}
The above remainder terms are both $o_p((T-p_T)^{-1/2})$ by \cref{asn:var_lagorder}. \qed

\subsection{Proof of \texorpdfstring{\cref{thm:var_prop_asynorm}}{Proposition \ref{thm:var_prop_asynorm}}}
The proposition follows immediately from \cref{thm:var_asynorm,thm:var_deltamethod} if we can show that $\|\nu_T\|^2$ is bounded asymptotically. Let $g_{j,i}(\cdot,\cdot,\cdot)$ denote the $i$-th element of $g_j(\cdot,\cdot,\cdot)$, $j=1,2$, $i=1,2,\dots,n_W^2$. Let $M \equiv \sup_{\omega \in [0,2\pi]} |h(\omega)|<\infty$. Then
\begin{align*}
\|\nu_T\|^2 &= \sum_{i=1}^{n_W^2} \sum_{\ell=1}^{p_T} \Big(\int_0^{2\pi} h(\omega)\big\lbrace g_{1,i}(A_{\cos}(\omega),A_{\sin}(\omega),\Sigma)\cos(\omega \ell) \\
&\qquad\qquad\qquad\qquad\qquad + g_{2,i}(A_{\cos}(\omega),A_{\sin}(\omega),\Sigma)\sin(\omega \ell)\big\rbrace\,d\omega \Big)^2 \\
&\leq 2 M^2 \sum_{i=1}^{n_W^2} \sum_{\ell=1}^{p_T} \bigg\lbrace \left(\int_0^{2\pi} g_{1,i}(A_{\cos}(\omega),A_{\sin}(\omega),\Sigma)\cos(\omega \ell)\,d\omega\right)^2 \\
&\qquad\qquad\qquad\quad + \left(\int_0^{2\pi}  g_{2,i}(A_{\cos}(\omega),A_{\sin}(\omega),\Sigma)\sin(\omega \ell)\,d\omega\right)^2\bigg\rbrace.
\end{align*}
The sum
\begin{equation} \label{eqn:var_L2normproj}
\sum_{\ell=1}^{p_T} \left(\frac{1}{2\pi}\int_0^{2\pi} g_{1,i}(A_{\cos}(\omega),A_{\sin}(\omega),\Sigma)\cos(\omega \ell)\,d\omega\right)^2
\end{equation}
equals the $L_2(0,2\pi)$ norm of the projection of the function $\omega \mapsto g_{1,i}(A_{\cos}(\omega),A_{\sin}(\omega),\Sigma)$ onto the space of orthonormal functions $\lbrace \omega \mapsto \cos(\omega \ell)\rbrace_{1\leq \ell \leq p_T}$. Bessel's inequality therefore states that \eqref{eqn:var_L2normproj} is bounded above by the squared $L_2(0,2\pi)$ norm of the function $\omega \mapsto g_{1,i}(A_{\cos}(\omega),A_{\sin}(\omega),\Sigma)$. We can similarly bound the expression \eqref{eqn:var_L2normproj} with $g_{2,i}(\cdot,\cdot,\cdot)$ in place of $g_{1,i}(\cdot,\cdot,\cdot)$ and with $\sin(\omega\ell)$ in place of $\cos(\omega\ell)$. Hence,
\[\|\nu_T\|^2 \leq 8\pi^2 M^2 \textstyle \sum_{i=1}^{n_W^2} \left(\|g_{1,i}(A_{\cos}(\cdot),A_{\sin}(\cdot),\Sigma)\|_{L_2(0,2\pi)}^2 + \|g_{2,i}(A_{\cos}(\cdot),A_{\sin}(\cdot),\Sigma)\|_{L_2(0,2\pi)}^2 \right),\]
using obvious notation for the $L_2$ norms. These norms are finite by \cref{asn:var_param}. \qed

\subsection{Proof of \texorpdfstring{\cref{thm:var_prop_consvar}}{Proposition \ref{thm:var_prop_consvar}}}
We start by showing that $\|\hat{\nu}_T-\nu_T\| =o_p(1)$ and $\|\hat{\xi}(p_T)-\xi\| = o_p(1)$. By \cref{thm:var_compact}, and the twice continuous differentiability assumed in \cref{asn:var_param}, there exists a constant $C<\infty$ such that, with probability approaching one,
\begin{align*}
&\|g_j(\hat{A}_{\cos}(\omega;p_T),\hat{A}_{\sin}(\omega;p_T),\hat{\Sigma}) - g_j(A_{\cos}(\omega),A_{\sin}(\omega),\Sigma)\| \\
&\leq C\left( \|\hat{A}_{\cos}(\omega;p_T)-A_{\cos}(\omega)\| + \|\hat{A}_{\sin}(\omega;p_T)-A_{\sin}(\omega)\| + \|\hat{\Sigma}(p_T)-\Sigma\|\right) 
\end{align*}
for $j=1,2,3$. By \cref{thm:var_Sigmaconv} and the i.i.d. central limit theorem, we have $\|\hat{\Sigma}(p_T)-\Sigma\|=O_p(T^{-1/2})$. Using additionally \cref{thm:var_Acossinconv}, we then have, for example, that
\begin{align*}
&\sum_{\ell=1}^{p_T}\left\|\int_0^{2\pi} h(\omega) \left[g_1(\hat{A}_{\cos}(\omega),\hat{A}_{\sin}(\omega),\hat{\Sigma})- g_1(A_{\cos}(\omega),A_{\sin}(\omega),\Sigma)\right]\cos(\omega \ell)  \, d\omega\right\| \\
&\leq \tilde{C} p_T \sup_{\omega \in [0,2\pi]}\left(\|\hat{A}_{\cos}(\omega;p_T)-A_{\cos}(\omega)\| + \|\hat{A}_{\sin}(\omega;p_T)-A_{\sin}(\omega)\| + \|\hat{\Sigma}-\Sigma(p_T)\|\right)  \\
&= O_p((p_T^3/T)^{1/2}) \\
&= o_p(1),
\end{align*}
where $\tilde{C}$ is some constant. This type of calculation implies $\|\hat{\nu}_T-\nu_T\| =o_p(1)$ and $\|\hat{\xi}(p_T)-\xi\| = o_p(1)$.

We now deal with the consistency of the two terms in $\hat{\sigma}_\psi^2(p_T)$ one at a time. First, decompose
\begin{align*}
\hat{\nu}_T'(\hat{\Gamma}(p_T)^{-1} \otimes \hat{\Sigma}(p_T))\hat{\nu}_T &= \nu_T'\left((\hat{\Gamma}(p_T)^{-1} \otimes \hat{\Sigma}(p_T)) - (\Gamma(p_T)^{-1} \otimes \Sigma)\right)\nu_T \\
&\quad +  (\hat{\nu}_T-\nu_T)'(\hat{\Gamma}(p_T)^{-1} \otimes \hat{\Sigma}(p_T))(\hat{\nu}_T-\nu_T) \\
&\quad + 2(\hat{\nu}_T-\nu_T)'(\hat{\Gamma}(p_T)^{-1} \otimes \hat{\Sigma}(p_T))\nu_T \\
&\equiv R_{1,T} + R_{2,T} + 2R_{3,T}.
\end{align*}
Using \cref{thm:var_Gammaconv}, we find
\begin{align*}
|R_{1,T}| &\leq \|\nu_T\|^2 \left\|(\hat{\Gamma}(p_T)^{-1} \otimes \hat{\Sigma}(p_T)) - (\Gamma(p_T)^{-1} \otimes \Sigma) \right\|_1 \\
&\leq M \left(\|\hat{\Gamma}(p_T)^{-1}-\Gamma(p_T)^{-1}\|_1\|\, \hat{\Sigma}(p_T)\|_1 +\|\Gamma(p_T)^{-1}\|_1\, \|\hat{\Sigma}(p_T)-\Sigma\|_1 \right) \\
&\leq M \left(\|\hat{\Gamma}(p_T)-\Gamma(p_T)\|\, \|\Gamma(p_T)^{-1}\|_1\, \|\hat{\Gamma}(p_T)^{-1}\|_1\, \| \hat{\Sigma}(p_T)\| +\|\Gamma(p_T)^{-1}\|_1\, \|\hat{\Sigma}(p_T)-\Sigma\| \right) \\
&= o_p(1).
\end{align*}
Similar calculations, along with the fact $\|\hat{\nu}_T-\nu_T\|=o_p(1)$, can be used to show that $R_{2,T}=o_p(1)$ and $R_{3,T}=o_p(1)$.

Second, define $\Xi \equiv \var(e_t \otimes e_t)$ and decompose
\begin{align*}
\hat{\xi}(p_T)'\hat{\Xi}(p_T)\hat{\xi}(p_T) - \xi'\Xi \xi &= \xi'(\hat{\Xi}(p_T)-\Xi)\xi \\
&\quad + (\hat{\xi}(p_T)-\xi)'\hat{\Xi}(p_T)(\hat{\xi}(p_T)-\xi) \\
& \quad + 2(\hat{\xi}(p_T)-\xi)'\hat{\Xi}(p_T)\xi \\
&\equiv \tilde{R}_{1,T} + \tilde{R}_{2,T} + 2\tilde{R}_{3,T}.
\end{align*}
Since $\|\hat{\xi}(p_T)-\xi\|=o_p(1)$, the statement of the proposition follows if we can show $\|\hat{\Xi}(p_T)-\Xi\|=o_p(1)$. Define $\chi_t \equiv \ve(e_te_t' - \Sigma)$, and note that $(T-p_T)^{-1}\sum_{t=p_T+1}^T \chi_t\chi_t' \stackrel{p}{\to} \Xi$ by the usual law of large numbers for i.i.d. variables. Because
\begin{align*}
\|\hat{\Xi}(p_T)-\Xi\| &\leq \frac{1}{T-p_T} \sum_{t=p_T+1}^T\|\hat{\chi}_t\hat{\chi}_t'-\chi_t\chi_t'\| \\
&\leq \frac{1}{T-p_T} \sum_{t=p_T+1}^T\|\hat{\chi}_t-\chi_t\|^2 + \frac{2}{T-p_T} \sum_{t=p_T+1}^T\|\hat{\chi}_t-\chi_t\|\,\|\chi_t\| \\
&\leq \frac{1}{T-p_T} \sum_{t=p_T+1}^T\|\hat{\chi}_t-\chi_t\|^2 + 2\bigg(\frac{1}{T-p_T} \sum_{t=p_T+1}^T\|\hat{\chi}_t-\chi_t\|^2\\
&\qquad\qquad\qquad\qquad\qquad\qquad\qquad\quad \times \frac{1}{T-p_T} \sum_{t=p_T+1}^T\|\chi_t\|^2\bigg)^{1/2}
\end{align*}
by Cauchy-Schwarz, we just need to show that
\[(T-p_T)^{-1} \textstyle \sum_{t=p_T+1}^T\|\hat{\chi}_t-\chi_t\|^2 = o_p(1).\]
Since
\begin{align*}
\|\hat{\chi}_t-\chi_t\| &= \|\hat{e}_t(p_T)\hat{e}_t(p_T)' - e_te_t'\| \\
&\leq \|\hat{e}_t(p_T)-e_t\|^2 + 2\|\hat{e}_t(p_T)-e_t\|\,\|e_t\|,
\end{align*}
we have
\begin{align*}
\frac{1}{T-p_T}\sum_{t=p_T+1}^T\|\hat{\chi}_t-\chi_t\|^2 &\leq \frac{2}{T-p_T}\sum_{t=p_T+1}^T\|\hat{e}_t-e_t\|^4  \\
&\quad + 4\left(\frac{1}{T-p_T}\sum_{t=p_T+1}^T\|\hat{e}_t-e_t\|^4 \frac{1}{T-p_T}\sum_{t=p_T+1}^T\|e_t\|^4\right)^{1/2}.
\end{align*}
The i.i.d. law of large numbers gives $(T-p_T)^{-1}\sum_{t=p_T+1}^T\|e_t\|^4=O_p(1)$. To complete the proof, we bound
\begin{align*}
\frac{1}{T-p_T}\sum_{t=p_T+1}^T\|\hat{e}_t-e_t\|^4 &\leq \frac{8}{T-p_T}\sum_{t=p_T+1}^T\|\hat{e}_t-e_t(p_T)\|^4 + \frac{8}{T-p_T}\sum_{t=p_T+1}^T\|e_t-e_t(p_T)\|^4 \\
&\equiv 8(\hat{R}_{1,T} + \hat{R}_{2,T})
\end{align*}
and show that the two terms on the right-hand side tend to zero, using similar arguments as in the proof of \cref{thm:var_Sigmaconv}. First,
\[\hat{R}_{1,T} \leq \|\hat{\beta}(p_T)-\beta(p_T)\|^4 \frac{1}{T-p_T}\sum_{t=p_T+1}^T\|X_t(p_T)\|^4 = O_p((p_T/T)^2) O_p(p_T^2) = o_p(1),\]
since
\[\textstyle E\|X_t(p_T)\|^4 = E\left(\sum_{\ell=1}^{p_T} \|W_{t-\ell}\|^2 \right)^2 = \sum_{\ell=1}^{p_T}\sum_{m=1}^{p_T}E\left(\|W_{t-\ell}\|^2\|W_{t-m}\|^2 \right) = O(p_T^2).\]
Second,
\begin{align*}
E(\hat{R}_{2,T}) &= E\|e_t-e_t(p_T)\|^4 \\
&\leq \textstyle E\left(\sum_{\ell=p_T+1}^\infty \|A_\ell\|\|W_{t-\ell}\| \right)^4 \\
&= \textstyle \sum_{\ell_1,\ell_2,\ell_3,\ell_4=p_T+1}^\infty  \|A_{\ell_1}\|\, \|A_{\ell_2}\|\, \|A_{\ell_3}\|\, \|A_{\ell_4}\| \, E(\|W_{t-\ell_1}\|\, \|W_{t-\ell_2}\|\, \|W_{t-\ell_3}\|\, \|W_{t-\ell_4}\|) \\
&\leq \text{constant} \times \left( \textstyle \sum_{\ell=p_T+1}^\infty \|A_\ell\| \right)^4 \\
&= o(1). \qed
\end{align*}

\clearpage

\phantomsection
\addcontentsline{toc}{section}{References}
\bibliography{decomp_iv_ref}

\end{document}